\title{$t \geq 1$}
\newcommand{\vc}[1]{{\color{black}{#1}}}
\newcommand{\remove}[1]{}
\newtheorem{theorem}{Theorem}[section]
\newtheorem{corollary}{Corollary}[theorem]
\newtheorem{lemma}[theorem]{Lemma}
\theoremstyle{remark}
\newtheorem{remark}{Remark}
\theoremstyle{definition}
\newtheorem{definition}{Definition}[section]
\newtheorem{assumption}{Assumption}[section]
\newcommand{\LMSE}{\texttt{LMSE}}
\let\svthefootnote\thefootnote
\newcommand\blankfootnote[1]{%
  \let\thefootnote\relax\footnotetext{#1}%
  \let\thefootnote\svthefootnote%
}
\begin{document}

\title{Differentially Private Secure Multiplication: Hiding Information in the Rubble of Noise} 


\author{Viveck R. Cadambe${^\triangle}$\footnote{$^\triangle$Georgia Institute of Technology, $^*$Pennsylvania State University, $^\Box$University of California, Santa Barbara, $^\diamondsuit$Harvard University}, Ateet Devulapalli$^{*}$, Haewon Jeong$^\Box$, Flavio P. Calmon$^\diamondsuit$}

\date{}
\maketitle

\begin{abstract}
We consider the problem of private distributed multi-party multiplication. It is well-established that Shamir secret-sharing coding strategies can enable perfect information-theoretic privacy in distributed computation via the celebrated algorithm of Ben Or, Goldwasser and Wigderson (the ``BGW algorithm''). However, perfect privacy and accuracy require an honest majority,  that is, $N \geq 2t+1$ compute nodes are required to ensure privacy against any $t$ colluding adversarial nodes. We develop a novel coding formulation for secure multiplication  that allows for (i) computation over the reals, rather than over a finite field, (ii) for a controlled information leakage measured in terms of \emph{differential privacy} parameters, and (iii) approximate multiplication instead of exact multiplication with accuracy measured in terms of the mean squared error metric. Our formulation allows the development of coding schemes for the setting where the number of honest nodes can be a minority, that is $N< 2t+1.$  Our main result develops a tight characterization of the privacy-accuracy trade-off via new coding schemes and a converse argument. A particularly novel technical aspect is an intricately layered noise distribution that merges ideas from differential privacy and Shamir secret-sharing at different layers.
\blankfootnote{This paper is an extended version of our papers published in the 2022 and 2023 Proceedings of International Symposium on Information Theory.}
\end{abstract}

\section{Introduction}

Ensuring privacy in distributed data processing is a central engineering challenge in modern machine learning. Two common privacy definitions in  data processing are information-theoretic (perfect) privacy  and differential privacy \cite{dwork2006calibrating,CynthiaDworkBook2014}. Perfect information-theoretic privacy is the most stringent definition, requiring that no private information is revealed to colluding adversary nodes regardless of their computational resources.  Differential privacy, in turn, allows a tunable level of privacy and ensures that an adversary cannot distinguish inputs that differ by a small perturbation (i.e., ``neighboring'' inputs).

Coding strategies have a decades-long history of enabling perfect information-theoretic privacy in distributed computing. The most celebrated is the BGW algorithm \cite{BGW1988,evans2017pragmatic}, which ensures information-theoretically private distributed computations for a wide class of functions. The BGW algorithm adapts Shamir secret-sharing \cite{shamir1979share} --- a technique that uses Reed-Solomon codes for distributed data storage with privacy constraints --- to multiparty function computation. \vc{ Consider the task of privately computing the product of two random variables ${A},{B} \in \mathbb{F},$ where $\mathbb{F}$ is a field, using  $N$ computation nodes.} Let ${R}_{i},{S}_{i} \in \mathbb{F},i=1,2,\ldots,t$ be statistically independent random variables. In Shamir's secret sharing, node $i$ receives inputs $\tilde{{A}}_{i}= p_{1}(x_i), \tilde{{B}}_{i}=p_2(x_i),$ where, $x_1,x_2,\ldots,x_N \in \mathbb{F}$ are distinct non-zero scalars and $p_1(x),p_2(x)$ are polynomials:
\begin{equation} p_1(x) = {A} + \sum_{j=1}^{t} {R}_{j} x^{j},
p_2(x) = {B} + \sum_{i=1}^{t} {S}_{j}x^{j}.\label{eq:ss} \end{equation}

If the field $\mathbb{F}$ is finite and ${R}_{i},{S}_{i},i=1,2,\ldots,t$ are uniformly distributed over the field elements, then the input to any subset $\mathcal{S}$ of $t$ nodes is independent of the data $({A},{B}).$ The Shamir secret-sharing coding scheme allows the BGW algorithm to recover any linear combination of the inputs from any subset of $t+1$ nodes. To obtain $\alpha A+ \beta B$ for fixed constants $\alpha,\beta \in \mathbb{F},$ node $i$, \vc{ on receiving $\tilde{A}_{i},\tilde{B}_i$}, computes $\alpha \tilde{{A}}_{i}+\beta \tilde{{B}}_{i}$ for $\alpha,\beta \in \mathbb{F};$ \vc{notably, the output of the computation of node $i$ is an evaluation of the polynomial $p_{\textrm{sum}}(x) \stackrel{\Delta}{=}\alpha p_1(x)+\beta p_2(x)$ at $x=x_i$.}  The sum $\alpha {A}+\beta {B}$ --- which is the constant in the \vc{ degree $t$} polynomial $p_{\textrm{sum}}(x)$ --- can be recovered from the computation output of any $t+1$ of the $N$ nodes by interpolation of the \vc{polynomial $p_{\textrm{sum}}(x)$}. Observe, similarly, that $\tilde{{A}}_{i}\tilde{{B}}_{i}$ can be interpreted as an evaluation at $x=x_i$ of the degree $2t$ polynomial \vc{ $p_{\textrm{product}}(x) = p_{1}(x)p_{2}(x),$} whose constant term is ${A}{B}$. Thus, the product ${A}{B}$ can be recovered from any $2t+1$ nodes by interpolating the polynomial $p_{\textrm{product}}(x)$ (see Fig. \ref{fig:ourproblem} (a)). 

The BGW algorithm uses Shamir secret-sharing to perform secure MPC for the universal class of computations that can be expressed as  sums and products while maintaining (perfect) data privacy. However, notice that perfect privacy comes at an infrastructural overhead for non-linear computations. When computing the product, the BGW algorithm requires an ``honest majority'', that is, it requires $N \geq 2t+1$ computing nodes to ensure privacy against any $t$ colluding nodes. In contrast, only $t+1$ nodes are required for linear computations. This overhead becomes prohibitive for more complex functions, leading to multiple communication rounds or additional redundant computing nodes (see, for example, \cite{BGW1988, yu2019lagrange}).

\begin{figure}[!tb]
\begin{subfigure}{.48\textwidth}
  \centering
    \centering
     \includegraphics[width=2.4in]{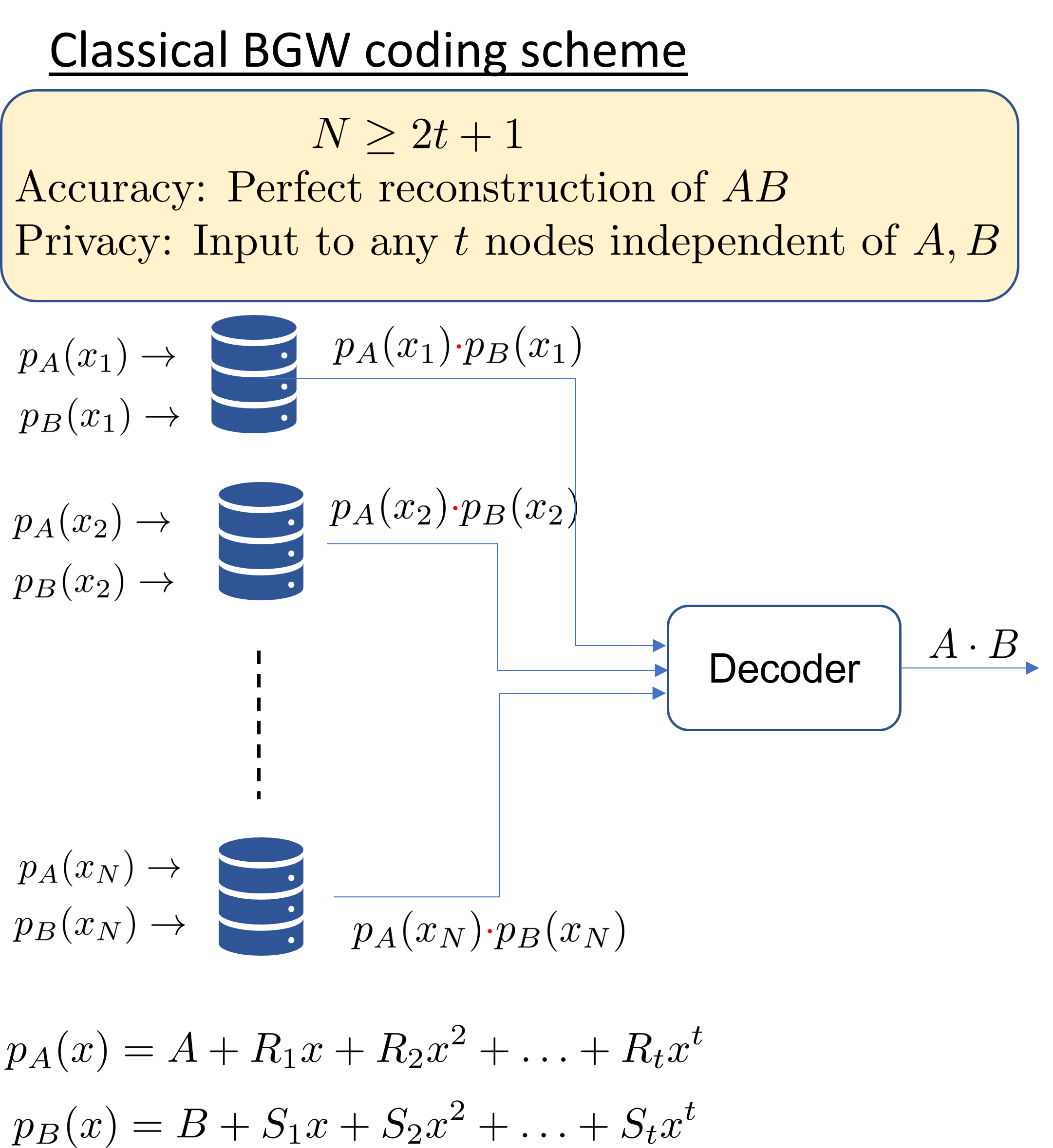}\hspace{10pt}\vline
     \vspace{25pt}
    \caption{\small{The Shamir secret-sharing coding scheme used in the BGW algorithm.}}
    \label{fig:ourproblem1}
    \end{subfigure}\hspace{5pt}
\begin{subfigure}{.48\textwidth}
  \centering
\hspace{10pt}     \includegraphics[width=2.4in]{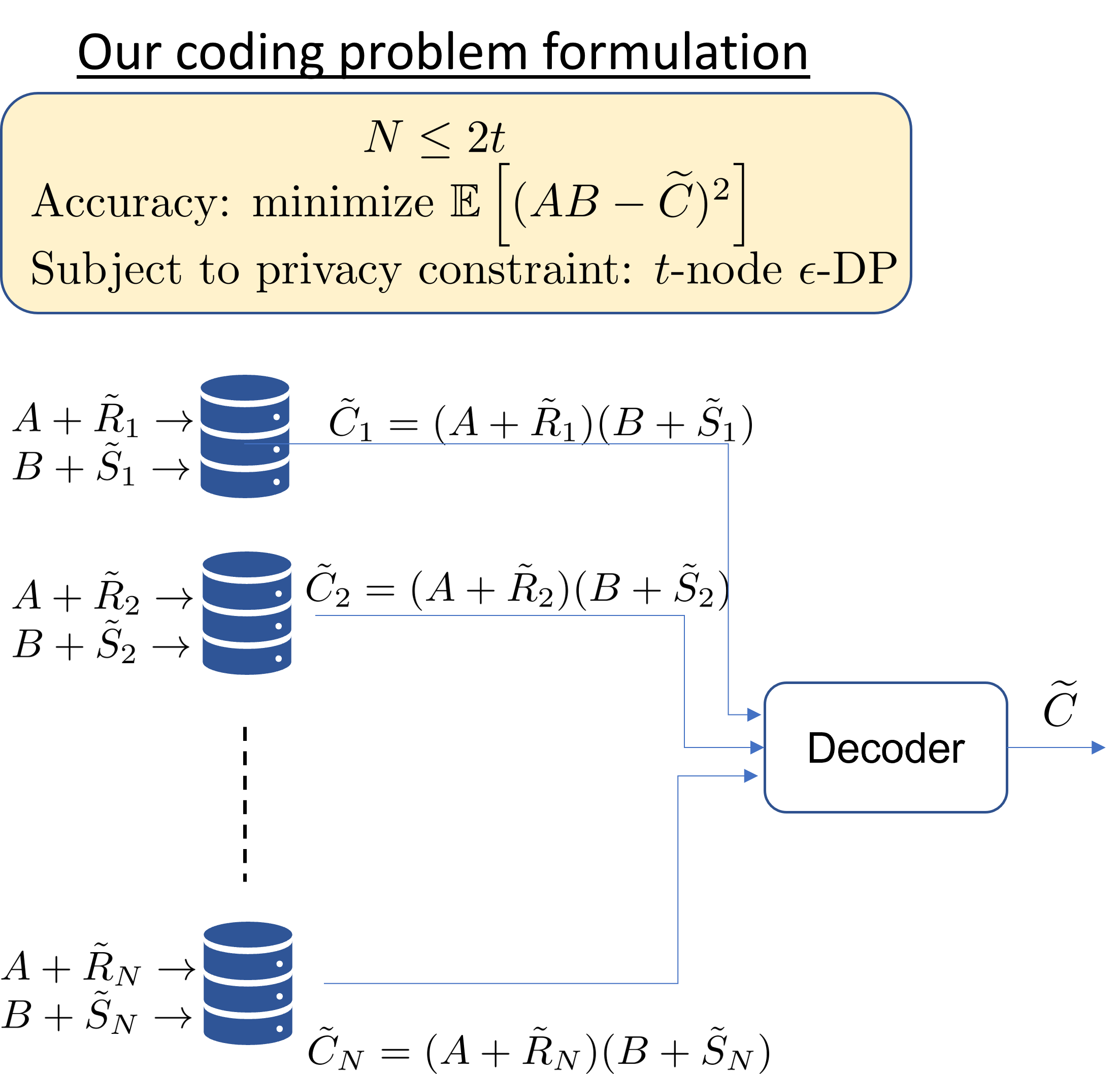}
     \vspace{35pt}
    \caption{\small{Privacy-Accuracy Trade-off for $N<2t+1$}. The privacy requirement is that the input to every subset of $t$ nodes must satisfy $\epsilon$ differential privacy.}
    \label{fig:ourproblem2}
    \end{subfigure}
   
   \caption{Pictorial depiction of our problem formulation and comparison with the coding scheme used in the BGW algorithm.}
   \label{fig:ourproblem}
\end{figure}

In this paper, we study the problem of secure multiplication for real-valued data and explore coding schemes that enable a set of fewer than $2t+1$ nodes to compute the product while keeping the data private from any $t$ nodes. Although exact recovery of the product and perfect privacy are simultaneously impossible, we propose a novel coding formulation that allows for approximations on both fronts, enabling an accuracy-privacy trade-off analysis. Our formulation utilizes differential privacy (DP) --- the standard privacy metric to quantify information leakage when perfect privacy cannot be guaranteed\cite{dwork2019differential}. It is worth noting that approximate computation suffices for several applications, particularly in machine learning,  where both training algorithms and inference outcomes are often stochastic. Also, notably, DP is a prevalent 
paradigm for data privacy in machine learning applications in practice (e.g.,  \cite{mcmahan2018general,TensorflowPrivacy}). 

For \emph{single-user computation}, where a user queries a database in order to compute a desired function over sensitive data, differential privacy can be ensured by adding noise to the computation output \cite{dwork2006calibrating}. The study of optimal noise distributions for privacy-utility trade-offs in the release of databases for computation of specific classes of functions is an active area of research in differential privacy literature \cite{dwork2019differential,agarwal2017price,duchi2013local,asoodeh2021local}. Our contribution is the discovery of optimal noise structures for multiplication in the \emph{multi-user setting}, where the differential privacy constraints are on a set of $t$ colluding adversaries.

\section{Summary of Main Results and Key Ideas}
\label{sec:summary}
We consider a computation system with $N$ nodes where each node receives noisy versions of inputs $A,B$ and computes their product (See Fig. \ref{fig:ourproblem2}). Specifically, for $i = 1,2,\ldots, N$, node $i$ receives $$ \tilde{A}_{i} = A+ \tilde{R}_{i}$$ 
$$ \tilde{B}_{i} = B+ \tilde{S}_{i}$$ and computes $\widetilde{C}_i = \tilde{A}_i \tilde{B}_{i}$. The goal of the decoder is to recover an estimate ${\widetilde{C}}$ of the product ${A}{B}$ from $N$ computation outputs at a certain accuracy level, measured in terms of the mean squared error $\mathbb{E}[(\widetilde{C}-AB)^2]$. The joint distributions of the noise variables $(\tilde{R}_1, \ldots, \tilde{R}_N), (\tilde{S}_1, \ldots, \tilde{S}_N)$ and should ensure that the input to any subset of $t$ nodes in the system satisfies $\epsilon$-differential privacy (abbreviated henceforth as $t$-node $\epsilon$-DP). Given $N,t$ and the DP parameter $\epsilon,$ our main result provides a tight expression for the minimum possible mean squared error at the decoder for any $N \geq t$. Of particular importance is our result for the regime $t < N < 2t+1.$

While our results provide a characterization of the accuracy in terms of differential privacy, the trade-off lends itself to an elegant, compact description when presented in terms of a \emph{signal-to-noise ratio (SNR)} metric 
for both privacy and accuracy. Privacy SNR ($\texttt{SNR}_p$) describes how well $t$ colluding nodes can extract the private inputs $A$, $B$, i.e., higher privacy SNR means poor privacy (See Sec. \ref{sec:SNR} for a precise definition). \vc{Privacy SNR guarantees can be mapped to $\epsilon$-differential privacy guarantees --- see \eqref{eq:optimalsigma} and the surrounding discussion. } Through a non-trivial converse argument, we show that for any $N < 2t+1$, the mean squared error satisfies:
\begin{equation}\label{eq:converse1}
    \mathbb{E}[(AB-\tilde{C})^2] \geq \frac{1}{(1+\texttt{SNR}_p)^2}.
\end{equation}
We provide an achievable scheme that meets the converse bound arbitrarily closely for $N \geq t+1$. \vc{Specifically, for fixed $N,t, N \geq t+1,$ and for a fixed privacy SNR $\texttt{SNR}_{p},$ we provide a sequence of achievable coding schemes that satisfy $\mathbb{E}\left[(AB-\tilde{C})^2\right]\rightarrow \frac{1}{(1+\text{SNR}_p)^2}$.} Surprisingly, (\ref{eq:converse1}) does not depend on $t,N$ --- the trade-off remains\footnote{It is instructive to note that a coding scheme that achieves a particular privacy-accuracy trade-off for $t$ colluding adversaries over $N$ nodes, can also be used to achieve the same trade-off for a system with $N' > N$. To see this, simply use the coding scheme for the first $N$ nodes and ignore the output of the remaining $N'-N$ nodes.} the same for $N \in \{t+1,t+2,\ldots 2t\}.$  Thus, our main result implies that for the regime of $t < N < 2t+1$, remarkably, having more computation nodes does not lead to increased accuracy. 

\vc{
The main technical contribution of our paper is the development of an intricate noise distribution that achieves the optimal trade-off. To get a high level understanding of the achievable scheme, consider a system with $N=3$ computation nodes where any $t=2$ nodes can collude. Assume, for expository purposes, that $A,B$ are unit variance random variables. We compare our main result with two baseline schemes respectively based on ideas of secret sharing and differential privacy literature. We describe the baselines at a high level below, with Appendix \ref{app:SS_staircase_lower}  supplying missing details.

\subsubsection*{Baseline 1: Complex-valued Shamir secret Sharing \cite{soleymani2021analog,liu2023analog}} 
As a first baseline scheme, consider Shamir secret sharing where node $i$ receives:
\begin{eqnarray}\tilde{A}_i &=& A + x_iR_1 + x_i^2 R_2 \label{eq:ss1} \\
\tilde{B}_i &=& B + x_iS_1 + x_i^2 S_2, \label{eq:ss2}\end{eqnarray}
that is $\tilde{R}_i = x_iR_1 + x_i^2 R_2, \tilde{S}_i = x_iS_1 + x_i^2 S_2,$ where $R_1, R_2,S_1,S_2$ are independent random variables. In standard Shamir secret sharing, $R_1, R_2, S_1,S_2$ are each chosen uniformly over a finite field. However, we are interested in real-valued realizations of $A,B$, and the mean squared error metric for reconstruction.
References \cite{soleymani2021analog, liu2023differentially, liu2023analog} propose treating \eqref{eq:ss1},\eqref{eq:ss2} as equations over real/complex field with $R_1,R_2,S_1,S_2$ chosen in an i.i.d. manner from some real or complex valued distributions. These references set $x_i = e^{-j (i-1)\pi/3}$ where $j=\sqrt{-1},$ and perform a privacy-accuracy trade-off analysis\footnote{References \cite{liu2023differentially, liu2023analog} perform a truncation step on the added noise in \eqref{eq:ss1},\eqref{eq:ss2} that we are ignoring here in our discussion and baseline.}. Our first baseline is inspired by these references.  Similar to them, we set $x_i = e^{-j (i-1)\pi/3}$, and choose $R_1,R_2,S_1,S_2$ independently from the same complex-valued distribution $\mathbb{P}_{R_1}.$ In Appendix \ref{app:SS_staircase_lower}, we develop a lower bound on the $2$-node DP parameter $\epsilon$ over all possible choices of $\mathbb{P}_{R_1}$. This lower bound, which is an optimistic estimate of the privacy of the schemes of our baseline, is used in the plot of Fig. \ref{fig:dpcomp}. We provide more details in comparison to the schemes of \cite{liu2023differentially,liu2023analog} in Sec. \ref{sec:related}

\subsubsection*{Baseline 2: Independent Additive Noise} 
As a second baseline scheme, assume that $\tilde{R}_i,\tilde{S}_i,i=1,2,3$ are each independent identically random variables satisfying $\tilde{\epsilon}$-DP. Because the privacy mechanisms to any any pair of nodes $(\tilde{A}_i,\tilde{A}_j)$ are independent, the input to every pair of nodes satisfies $\epsilon$-DP where $\epsilon=2\tilde{\epsilon}$. We show in Appendix \ref{app:SS_staircase_lower} that, for a fixed mean squared error, the privacy is maximized (that is $\epsilon$ is minimized) for this baseline if $\tilde{R}_i,\tilde{S}_i,i=1,2,3$ have the staircase distribution of \cite{Pramod2015}. Therefore, the privacy-accuracy plot for independent mechanisms in Fig. \ref{fig:dpcomp} is optimal for the class of mechanisms that select  $\tilde{R}_i,\tilde{S}_i,i=1,2,3$ in an i.i.d. manner.

\begin{figure}  \centering
\includegraphics[width=0.5 \textwidth]{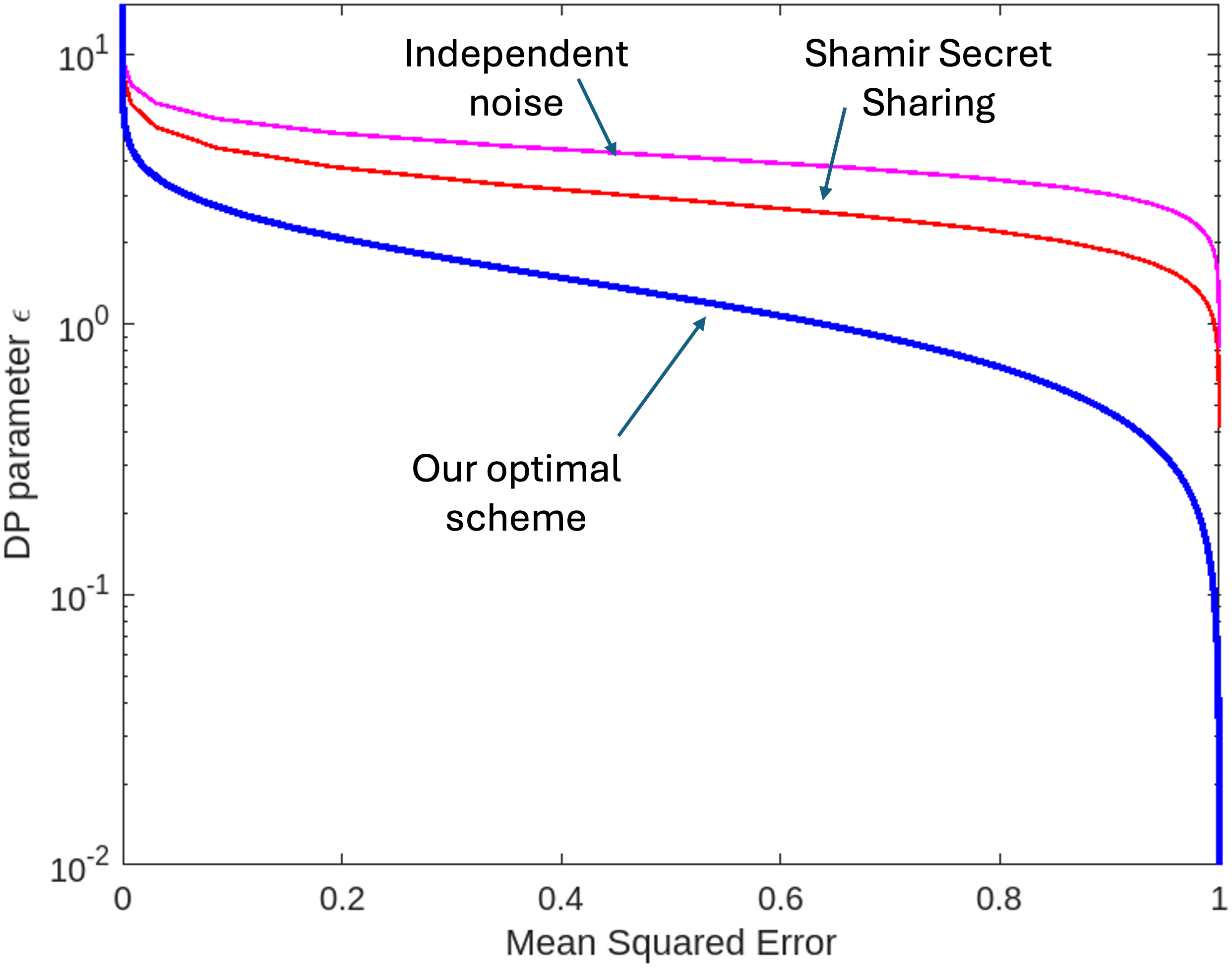}
\caption{\footnotesize{Performance of our optimal scheme of for $N=3,t=2$ in comparison to baselines (i) complex-valued Shamir Secret Sharing and (ii) independent noise across nodes.}}
\label{fig:dpcomp}
\end{figure}

\subsubsection*{High-level description of our optimal scheme}
In Fig. \ref{fig:dpcomp}, we have plotted a comparison of the performance of an (asymptotically) optimal coding scheme in comparison to the two baseline schemes. The optimal coding scheme --- the main technical contribution of our paper --- combines secret sharing ideas and noise mechanisms from DP literature through a carefully designed layered structure. In more detail, node $i$ receives $\tilde{A}_i,\tilde{B}_i,i=1,2,3$, where:

\begin{eqnarray}\tilde{A}_1 &=& A + \underbrace{R_1}_{\textrm{First Layer}} + \underbrace{\delta R_2}_{\textrm{Second Layer}} + \underbrace{\delta^{1.5}R_1}_{{\textrm{Third Layer}}} \label{eq:coding1}\\
\tilde{B}_1 &=& B + S_1 + \delta S_2 + \delta^{1.5}S_1 \\
\tilde{A}_2 &=& A + R_1 - \delta R_2 + \delta^{1.5}R_1 \\
\tilde{B}_2 &=& B + S_1 - \delta S_2 + \delta^{1.5}S_1 \\
\tilde{A}_3 &=& A + R_1 \\
\tilde{B}_3 &=& B + S_1, \label{eq:coding6}
    \end{eqnarray}
where for a fixed target DP parameter $\epsilon$, and $R_1, S_1, R_2, S_2$ are independent random variables whose distribution is described next. Random variable $R_1$ has the distribution that satisfies:
\begin{enumerate}[(i)]
\item The privacy mechanism $A \rightarrow A+R_1$ satisfies $\epsilon$-DP
\item Among all random variables that satisfy (i), $R_1$ has the smallest possible variance.
\end{enumerate}
\begin{figure*}[!t]
    \centering
   \begin{subfigure}{0.9\textwidth}  \centering\includegraphics[width=0.75\textwidth]{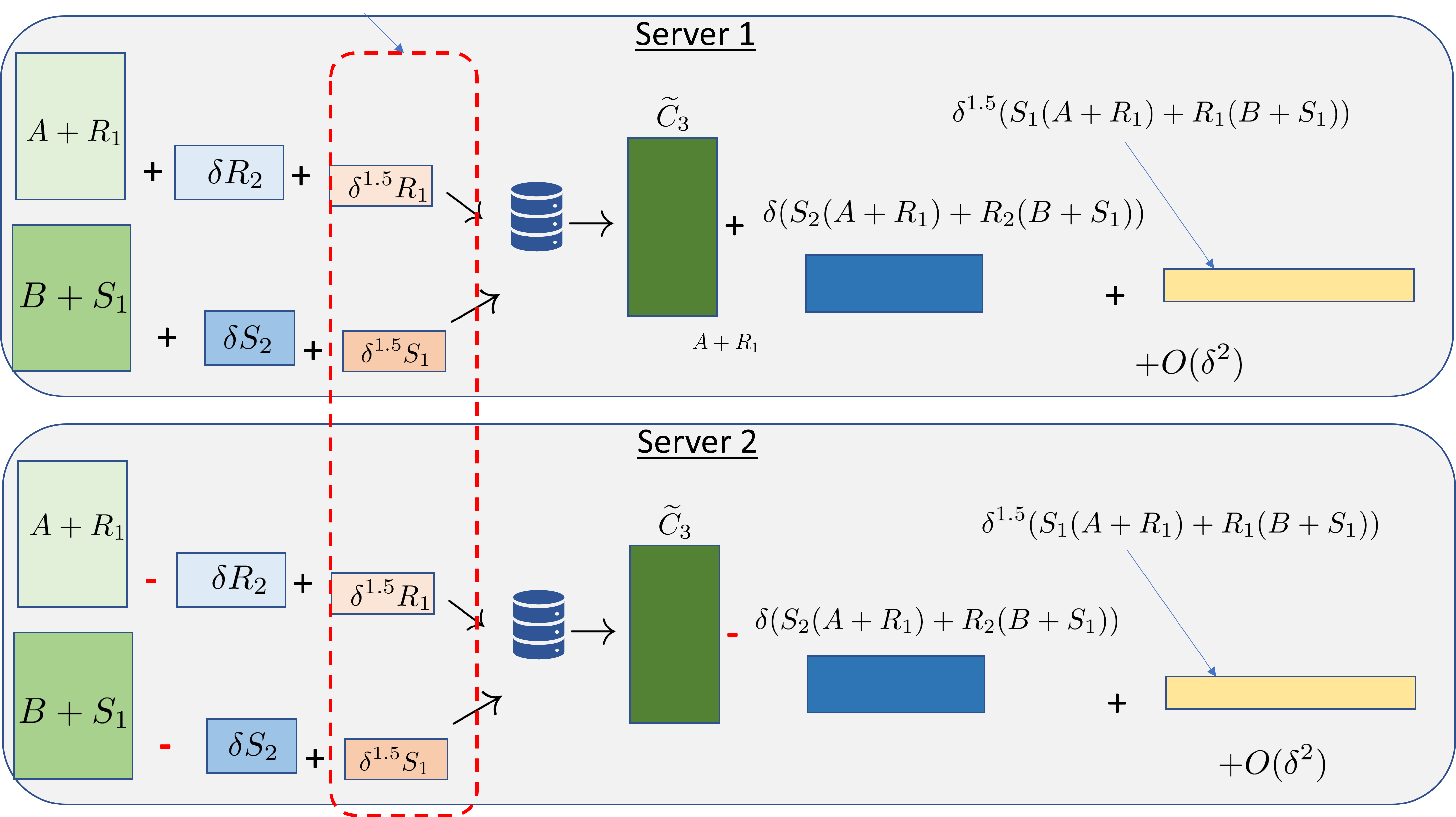} 
       \end{subfigure}
\begin{subfigure}{0.9\textwidth} \centering
    \includegraphics[width=0.75\textwidth]{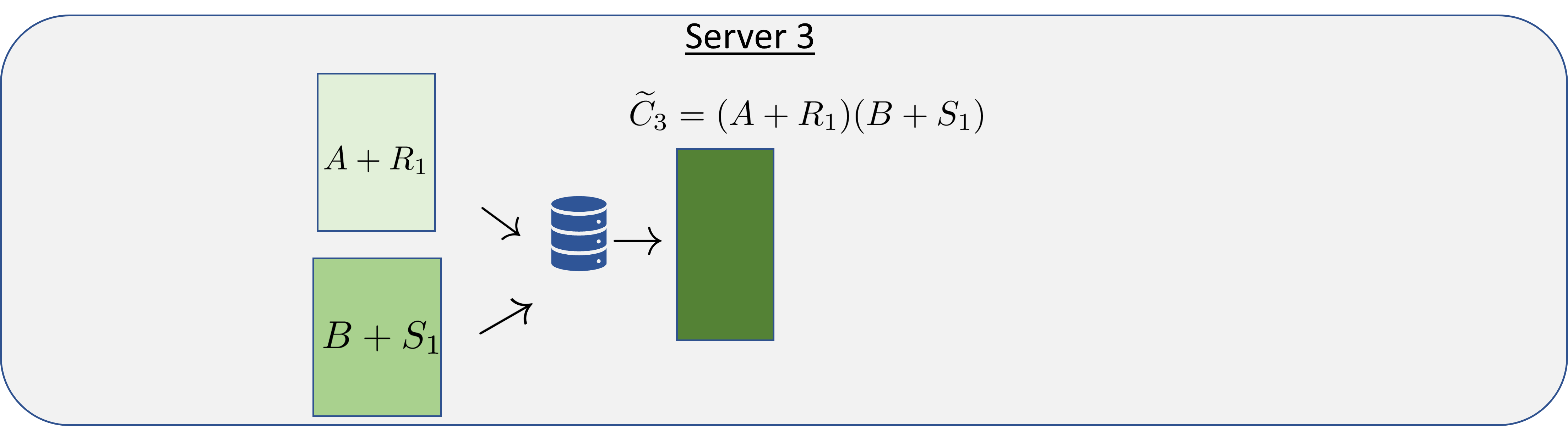}
    
\end{subfigure}
   \caption{\small{Pictorial depiction of our layered coding scheme for $t=2, N=3$ with matrix, see \eqref{eq:coding1}-\eqref{eq:coding6}}}
    \label{fig:code_construction}
\end{figure*}

In fact, the distribution of $R_1$ has been explicitly characterized and termed the \emph{staircase mechanism} in \cite{Pramod2015}. The distribution of the independent random variable $S_1$ is chosen identically to $R_1.$ The random variables $R_2,S_2$ are chosen to be Laplacian random variables with unit variance. Notice that the additive noise at the first two nodes can be interpreted as a superposition of three layers. Based on the distributions above, the first and third layers use an ptimal DP noise mechanism. The second layer is similar to secret sharing. In particular, ignoring the effect of the first and third layers, note that the linear combinations at the three nodes $A+\delta R_2, A - \delta R_2, A$ can be interpreted as pairwise linearly independent linear combinations of $(A,R_2)$ similar to Shamir Secret Sharing. In fact, this property translates to the general scheme presented in Sec. \ref{sec:achievability} as well.  

Our proof shows that as $\delta \to 0,$ the above scheme asymptotically achieves the optimal privacy-accuracy trade-off dictated by \eqref{eq:converse1}. Consider an adversary that has access to any two nodes; as an example, suppose that the adversary has access to $\tilde{A}_1,\tilde{A}_2$ and we aim to characterize the privacy loss. As $\delta \to 0,$ the magnitude of the third layer diminishes with respect to the first two layers, and consequently, $\tilde{A}_1,\tilde{A}_2$ respectiveky become statistically very close to $A+R_1 + \delta R_2, A+R_1 - \delta R_2.$ That is, $\tilde{A}_1,\tilde{A}_2$ become statistically close to degraded versions of $A+R_1.$ Because $R_1$ is designed to ensure $\epsilon$-DP, the scheme (nearly) achieves $2$-node  $\epsilon$-DP. Measured in terms of the privacy SNR metric, our proof shows that this scheme has $\texttt{SNR}_{p} \approx \frac{1}{\mathbb{E}[R_1^2]}.$

The legitimate decoder that accesses all three nodes obtains: 
\begin{eqnarray*}
\tilde{C}_1 &=& \tilde{A}_{1}\tilde{B}_1 = \tilde{C}_3 - \delta Y + \delta^{1.5} Z + O(\delta)^{2} \\
\tilde{C}_2 &=& \tilde{A}_{2}\tilde{B}_2 = \tilde{C}_3 + \delta Y + \delta^{1.5} Z + O(\delta)^{2} \\
\tilde{C}_3 &=& (A+R_1)(B+S_1)
\end{eqnarray*}
where $Y=S_2(A+R_1)+R_2(B+S_1)$ and $Z=S_1(A+R_1)+R_1(B+S_1)$. In effect, because the legitimate decoder has access to all three inputs $\tilde{C}_1,\tilde{C}_2, \tilde{C_3},$ three layers of the output are effectively available for the decoder (unlike the adversary, where the third layer is effectively hidden). Specifically, by computing $\frac{\widetilde{C}_1-\widetilde{C}_2}{2}$ and $\frac{\widetilde{C}_1+\widetilde{C}_2}{2} - \tilde{C}_3,$ the decoder can access $\tilde{C}_3,Y,Z.$ An analysis of the mean squared error of a linear MMSE estimator of $AB$ from $(\tilde{C}_3,Y,Z.)$ results in the following privacy-accuracy trade-off: 
$$ \mathbb{E}\left[\left(\widetilde{C} - AB\right)^2\right] = \frac{1}{\left(1+\frac{1}{E[R_1^2]}\right)^2} + O(\delta) \approx \frac{1}{(1+\texttt{SNR}_p)^2}$$ 

A converse argument shows that the privacy-accuracy trade-off cannot be improved even if $N=4,$ so long as $t=2.$

\begin{remark}
It is important to note that our scheme has a fundamentally different structure compared to Shamir Secret Sharing \eqref{eq:ss1},\eqref{eq:ss2}. Although our comparison applies a specific choice of evaluation points $x_1,x_2,x_3$ as prescribed by \cite{soleymani2021analog,liu2023analog}, it is an open question whether a different choice of $x_1,x_2,x_3$ along with carefully tailored distributions for $R_i,S_i|_{i=1}^{2}$ can achieve the trade-off of \eqref{eq:converse1}.  For the case of $N=2,t=2$, a positive answer to this question has been provided in \cite{devulapalli2022differentially}, however, the question is open for $N > 2.$. 
\end{remark}

\begin{remark}
Because $\delta \to 0$, in a potential practical application of our coding scheme, we require precise computation to reap the benefits. Fig. \ref{fig:dpcomp} thus simply shows the limiting performance of our scheme as $\delta \to 0.$ In settings with finite precision, we can expect some degradation. Sec. \ref{sec:precision}  discusses this aspect quantitatively. 
\end{remark}

}

\subsection{Related Work}
\label{sec:related}
\textbf{Differential Privacy and Secure MPC:}
Several prior works are motivated like us to reduce computation and communication overheads of secure MPC by connecting it with the less stringent privacy guarantee offered by DP. 
References \cite{stevens2022efficient, Yuan2021,Pettai2015,HeXi2017,JonasBohler2020,Chowdhury_crypte,talwar2022differential,PeterKairouz2015} provide methods to reduce communication overheads and improve robustness while guaranteeing differential privacy for sample aggregation algorithms, label private training, record linkage, distributed median computation \vc{and general boolean function computation}.   In comparison, we aim to develop and study coding schemes with optimal privacy-accuracy trade-offs for differentially private distributed real-valued multiplication that reduce the overhead of $t$ redundant nodes. 

\noindent \textbf{Coded Computing:} The emerging area of coded computing enables the study of codes for secure computing that enable data privacy.
Our framework resonates with the coded computing approach, as we abstract the algorithmic/protocol related aspects into a master node, and highlight the role of the error correcting code in our model. Coded computing has been applied to study code design for secure multiparty computing in \cite{pmlr-v89-yu19b,MahdiSoleymani2021,Akbari-Nodehi2021,ChangWei-Ting2018,Rouayheb_GASP,jia2021capacity,chen2021gcsa}. These references effectively extend the standard BGW setup by imposing memory constraints on the nodes \vc{(e.g., Lagrange coded computing~\cite{yu2018coding})}, or other constraints, that effectively disable each node storing information equivalent to the entire data sets. Under the imposed constraints, these references develop novel codes for exact computation and perfect privacy.  In particular, codes for secure MPC over real-valued fields have been studied in \cite{fahim2019numerically, MahdiSoleymani2021} extending the ideas of \cite{pmlr-v89-yu19b} to understand the loss of accuracy due to finite precision. In particular, reference \cite{MahdiSoleymani2021}  casts the effect of finite precision in a privacy-accuracy tradeoff framework. In contrast to all previous works in coded computing geared towards secure MPC, we operate below the threshold of perfect recovery and characterize privacy-accuracy trade-offs. Our incorporation of differential privacy for this characterization is a novel aspect of our set up.  We do not impose any memory constraints on the nodes, and imposition of such constraints can lead to interesting areas of future study.

\vc{Most similar to our work are references \cite{liu2023differentially, liu2023analog}, which are concurrent to our work. These works also consider privacy-accuracy trade-offs for both coded computing and multiparty computing via the differential privacy framework.  The coding schemes considered there are a modification of complex-valued Shamir Secret sharing considered in Fig. \ref{fig:dpcomp}; in fact, our baseline is inspired by these references. However, notably, these works provid $(\epsilon,\delta)$ differential privacy for $\delta > 0,$ unlike our work which considers $\epsilon$-DP (i.e., $\delta = 0$). An analysis of optimal schemes for multiplication for that ensure $(\epsilon,\delta)$-DP and comparison with \cite{liu2023analog,liu2023differentially} is outside the scope of this work.}

\noindent \textbf{Privacy-Utility Trade-offs.} There is a fundamental trade-off between DP and utility (see \cite{agarwal2017price,duchi2013local,asoodeh2021local} for examples in machine learning and statistics). The optimal $\epsilon$-DP noise-adding mechanism for a target moment constraint on the additive noise was characterized in \cite{Pramod2015}. For approximate DP, near-optimal additive noise mechanisms under $\ell_1$-norm and variance constraints were recently given in \cite{GengQuan2020}. 
\vc{An independent body of work in coding theory studies secret sharing schemes over finite fields, and quantifies the amount of information leakage (e.g., measured via the entropy) for general linear codes when the number of honest nodes is fewer than the threshold for perfect secrecy and perfect accuracy; see \cite{kurihara2012secret} and several follow up works. Notably, generalized Hamming weights of linear codes \cite{wei1991generalized} emerge as an important metric for characterizing the privacy of linear codes. Our work differs from this body in that we focus on the application of secret sharing techniques to secure computing (specifically secure multiplication).}









\section{System Model and Statement of Main Results}

We present our system model for distributed differentially private multiplication and state our main results. Our model and results are presented for the case of scalar multiplication here. Natural extensions of the results for the case of matrix multiplication is presented in Sec. \ref{sec:matrix}.

\subsection{System Model}
\label{sec:model}
We consider a computation system with $N$ computation nodes. ${A},{B}\in\mathbb{R}$ are random variables, and node $i \in \{1,2,\ldots,N\}$ receives:
\begin{equation}
    \tilde{{A}}_{i} = a_i {A}+\tilde{R}_{i},~~~ \tilde{{B}}_{i} = b_i {B}+\tilde{S}_{i}, 
    \label{eq:model}
\end{equation}
where $\tilde{R}_{i},\tilde{S}_{i} \in \mathbb{R}$ are random variables such that $(\tilde{R}_{1},\tilde{R}_{2},\ldots, \tilde{R}_{N},\tilde{S}_{1},\tilde{S}_{2},\ldots, \tilde{S}_{N})$ is statistically independent of $({A},{B}),$ and $a_i,b_i \in \mathbb{R}$ are constants.  
In this paper, we assume no shared randomness between $(\tilde{R}_1,\tilde{R}_{2},\ldots,\tilde{R}_N)$ and $(\tilde{S}_1,\tilde{S}_{2},\ldots,\tilde{S}_{N})$ i.e., they are statistically independent: $\mathbb{P}_{\tilde{R}_1,\tilde{R}_{2},\ldots,\tilde{R}_{N},\tilde{S}_1,\tilde{S}_2,\ldots,\tilde{S}_N}=\mathbb{P}_{\tilde{R}_1,\tilde{R}_{2},\ldots,\tilde{R}_{N}}\mathbb{P}_{\tilde{S}_1,\tilde{S}_2,\ldots,\tilde{S}_N}$. For $i\in \{1,2,\ldots,N\},$ computation node $i$ outputs:
\begin{equation}
{\tilde{C}}_{i} = \tilde{{A}}_{i} \tilde{{B}}_{i}. \label{eq:comp_output}
\end{equation} 
A decoder receives the computation output from all $N$ nodes and performs a  map: $d:\mathbb{R}^{N} \rightarrow \mathbb{R}$ that is affine over $\mathbb{R}.$ That is, the decoder produces:
\begin{equation} {\widetilde{C}} = d(\tilde{{C}}_{1},\dots,\tilde{C}_N) =  \sum_{i=1}^{N} w_{i} \tilde{{C}}_{i} + w_0, \label{eq:decodout}\end{equation} 
where the coefficients $w_{i} \in \mathbb{R}, $ specify the linear map $d$. \vc{The coefficients $w_i,i=1,2,\ldots,N$ can depend on the scalars $a_i,b_i, i=1,2,\ldots, t$ and distributions $\mathbb{P}_{\tilde{R}_1, \ldots, \tilde{R}_N}\mathbb{P}_{\tilde{S}_1,\ldots, \tilde{S}_{N}}$, but cannot depend on the actual realizations of these random variables.}

A $N$-node secure multiplication \emph{coding scheme} consists of the joint distributions of $(\tilde{R}_{1},\tilde{R}_{2},\ldots, \tilde{R}_{N})$ and $(\tilde{S}_{1},\tilde{S}_{2},\ldots, \tilde{S}_{N}),$ scalars $a_1,a_2,\ldots,a_N, b_1,b_2,\ldots,b_N$\footnote{It is instructive to note that, for our problem formulation, there is no loss of generality in assuming that $a_i,b_i \in \{0,1\}$.}  and the decoding map ${d}:\mathbb{R}^N \rightarrow \mathbb{R}.$ The performance of a secure multiplication coding scheme is measured in its differential privacy parameters and its accuracy, defined next.

\begin{definition}($t$-node $\epsilon$-DP)
\label{def:edp}
    Let $\epsilon\geq 0$. A coding scheme with random noise variables $$(\tilde{R}_{1},\tilde{R}_{2},\ldots, \tilde{R}_{N}), (\tilde{S}_{1},\tilde{S}_{2},\ldots, \tilde{S}_{N})$$ and scalars $a_i,b_i \; (i \in \{1,\ldots,N\})$  satisfies $t$-node $\epsilon$-DP if, \vc{for any ${A}_{0},{B}_{0},{A}_{1},{B}_{1} \in \mathbb{R}$ that satisfy $|A_0 - A_1| \leq 1, |B_0-B_1| \leq 1$} 
\begin{eqnarray}
  \max\left(\frac{\mathbb{P}\left(\mathbf{Z}^{(0)}_{\mathcal{T}} \in \mathcal{A} \right)}{\mathbb{P}\left(\mathbf{Z}^{(1)}_{\mathcal{T}} \in \mathcal{A} \right)}, \frac{\mathbb{P}\left(\mathbf{Y}^{(0)}_{\mathcal{T}} \in \mathcal{A} \right)}{\mathbb{P}\left(\mathbf{Y}^{(1)}_{\mathcal{T}} \in \mathcal{A} \right)}\right) &\leq& e^\epsilon
  \label{eq:DP}
\end{eqnarray}
for all subsets $\mathcal{T} \subseteq \{1,2,\ldots,N\},|\mathcal{T}|=t$,
for all subsets $\mathcal{A} \subset \mathbb{R}^{1 \times t}$ in the Borel $\sigma$-field, where, for $\ell=0,1$,\begin{eqnarray}
\mathbf{Y}_{\mathcal{T}}^{(\ell)} &\triangleq&  \begin{bmatrix}a_{i_1} {A}_{\ell}+\tilde{R}_{i_1}&a_{i_2}{A}_{\ell}+\tilde{R}_{i_2}&\ldots&a_{i_{|\mathcal{T}|}}{A}_{\ell}+\tilde{R}_{i_{|\mathcal{T}|}}\end{bmatrix}\label{eq:DPdefY} \\\mathbf{Z}_{\mathcal{T}}^{(\ell)} &\triangleq&  
\begin{bmatrix}  b_{i_1}{B}_{\ell}+\tilde{S}_{i_1}&b_{i_2}{B}_{\ell}+\tilde{S}_{i_2}&\ldots&b_{i_{|\mathcal{T}|}}{B}_{\ell}+\tilde{S}_{i_{|\mathcal{T}|}} \end{bmatrix} \label{eq:DPdefZ}, 
\end{eqnarray}
where $\mathcal{T}=\{i_1,i_2,\ldots,i_{|\mathcal{T}|}\}.$
\end{definition}

\vc{ Informally, the above definition means that for an adversary that acquires information from nodes $i_1, i_2, \ldots, i_t,$ the inputs $(a_{i_1}A+\tilde{R}_{i_1},a_{i_2}A+\tilde{R}_{i_2}, \ldots, a_{i_t}A+\tilde{R}_{i_t})$ should satisfy $\epsilon$-DP if the scalar $A$ is treated as a database. A similar constraint has to apply to scalar $B$.}



While privacy guarantees must make minimal assumptions on the data distribution, it is common to make assumptions on the data distribution and its parameters when quantifying utility  guarantees  (e.g., accuracy) \cite{PeterKairouz2015,PeterKairouz2016,MahdiSoleymani2021,jia2021capacity}. We next state the conditions under which our accuracy guarantees hold.
\begin{assumption}
${A}$ and ${B}$ are statistically independent random variables that satisfy
$$ \mathbb{E}\left[{A}^2 \right] \leq \eta, $$ $$\mathbb{E}\left[{B}^{2}\right] {\leq \eta}$$
for a parameter $\eta > 0$.
\label{assumption1}
\end{assumption} 
It should be noted that the above assumption implies that $\mathbb{E}[A^2B^2] \leq \eta^2.$ We measure the accuracy of a coding scheme by the mean square error of the decoded output with respect to the product $AB$. Specifically, we define:
\begin{definition}[Linear Mean Square Error ($\LMSE$)]
\label{eq:lmsedef}
For a secure multiplication coding scheme $\mathcal{C}$ consisting of joint distribution $\mathbb{P}_{\tilde{R}_1,\tilde{R}_{2},\ldots,\tilde{R}_{N},\tilde{S}_1,\tilde{S}_{2},\ldots,\tilde{S}_{N}},$ decoding map $d:\mathbb{R}^{N} \rightarrow \mathbb{R}$, the $\LMSE$ is defined as: 
\begin{equation}
    \LMSE(\mathcal{C}) = \mathbb{E}[|{AB}-\widetilde{C}|^2].
\end{equation}
where $\widetilde{{C}}$ is defined in (\ref{eq:decodout}). 
\end{definition}
The expectation in the above definition is over the joint distributions of the random variables ${A},{B},\tilde{R_i}|_{i=1}^{N},\tilde{S_i}|_{i=1}^{N}$. For fixed parameters, $N,t,\epsilon,\eta$, the goal of this paper is to characterize:
$$ \inf_{\mathcal{C}} \sup_{\mathbb{P}_{A,B}}\LMSE(\mathcal{C})$$
where the infimum\footnote{Since $|AB-\widetilde{C}|^2$ is a non-negative random variable, if $\mathbb{E}[|AB-\widetilde{C}|^2]$ does not exist for some coding scheme $\mathcal{C}$, then we interpret that $\mathbb{E}[|AB-\widetilde{C}|^2] =+\infty,$ with the convention that $+\infty$ is strictly greater than all real numbers.} is over the set of all coding schemes $\mathcal{C}$ that satisfy $t$-node $\epsilon$-DP, and the supremum is over all distributions $\mathbb{P}_{A,B}$ that satisfy Assumption \ref{assumption1}.

{In the remainder of this paper, we make the following zero mean assumptions: we assume that that $\mathbb{E}[A]=\mathbb{E}[B]=0$ and that $\mathbb{E}[\tilde{R}_i]= \mathbb{E}[\tilde{S}_i]=0, \forall i \in \{1,2,\ldots,N\}.$ With these assumptions, it suffices to assume that decoder is linear (that is, it is not just affine), since the optimal affine decoders are in fact linear. The reader may readily verify that our results hold without the zero-mean assumptions with affine decoders.}

\subsection{Signal-to-Noise Ratios}
\label{sec:SNR}
We take a two-step technical approach. First, we characterize accuracy-privacy trade-off in terms of signal-to-noise ratio (SNR) metrics (defined below). 
Second, we obtain the trade-off between mean square error and differential privacy parameters as corollaries to the SNR trade-off. 

We define two SNR metrics: the privacy SNR and the accuracy SNR. Before defining these quantities, we begin with a fundamental lemma from linear estimation theory.

\vc{
\begin{lemma}
Let $X$ be a random variable with $\mathbb{E}[X] = 0$ and $\mathbb{E}[X^2] = \gamma^2$. Let $(Z_1, \ldots, Z_n)$ be a noise random vector that is independent of $X$ and let $$\vec{y} = \begin{bmatrix}
    \nu_1 X + Z_1 \\ \nu_2 X + Z_2 \\ \vdots \\ \nu_n X + Z_n
\end{bmatrix},$$
where  $\nu_1, \ldots, \nu_n \in \mathbb{R}$. Then, 
\begin{equation}
    \inf_{\vec{w} \in \mathbb{R}^{n\times 1}} \mathbb{E} \left[ |\vec{w} \cdot \vec{y} - X |^2 \right] = \frac{\gamma^2}{1+\texttt{SNR}},
    \label{eq:optimalw}
\end{equation}
where $\cdot$ denotes the vector dot product, and
\begin{equation}
    \texttt{SNR} = \frac{\textnormal{det}(\mathbf{K}_1)}{\textnormal{det}(\mathbf{K}_2)}-1, \label{eq:SNR}
\end{equation} where $\mathbf{K}_1$ is an $n \times n$ matrix whose $(i,j)$-th entry is $\mathbb{E}[y_i y_j] = \nu_i\nu_j \gamma^2 + \mathbb{E}[Z_iZ_j]$ and $\mathbf{K}_{2}$ is the covariance matrix of $(Z_1, Z_2, \ldots, Z_n)$. 

Furthermore, there exists $\vec{w}^{*}$ that optimizes the left hand side of \eqref{eq:optimalw}, and this vector satisfies:
$$ \mathbb{E} \left[ |\vec{w}^{*} \cdot \mathbf{y}' - X' |^2 \right] \leq  \frac{\gamma^2}{1+\texttt{SNR}} $$
for any random variable $X'$ where $\mathbb{E}[X'^2] \leq \gamma^2$ and 
$$\vec{y}' = \begin{bmatrix}
    \nu_1 X' + Z_1 \\ \nu_2 X' + Z_2 \\ \vdots \\ \nu_n X' + Z_n
\end{bmatrix}.$$
\label{lem:theSNRlemma}
\end{lemma}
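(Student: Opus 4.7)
The plan is to reduce the claim to a textbook linear minimum-mean-square-error (LMMSE) computation and then use the matrix determinant lemma together with the Sherman--Morrison formula to rewrite the resulting error in the two equivalent forms $\gamma^{2}/(1+\texttt{SNR})$ and $\gamma^{2}-\gamma^{4}\vec{\nu}^{T}\mathbf{K}_1^{-1}\vec{\nu}$. Write $\vec{\nu} = (\nu_1,\ldots,\nu_n)^{T}$ so that $\vec{y} = \vec{\nu} X + \vec{Z}$. Using the zero-mean assumption on $X$ together with the independence of $X$ and $\vec{Z}$ (and the paper's standing zero-mean convention for the noise, consistent with identifying $\mathbf{K}_1[i,j]=\mathbb{E}[y_i y_j]$ with a covariance), the MSE $\mathbb{E}[|\vec{w}\cdot\vec{y}-X|^{2}]$ expands to a convex quadratic form $\vec{w}^{T}\mathbf{K}_1\vec{w} - 2\gamma^{2}\vec{w}^{T}\vec{\nu} + \gamma^{2}$ in $\vec{w}$.

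The first step is to minimize this quadratic. Setting the gradient to zero gives the LMMSE estimator $\vec{w}^{*} = \gamma^{2}\mathbf{K}_1^{-1}\vec{\nu}$, and substituting back yields minimum error $\gamma^{2} - \gamma^{4}\vec{\nu}^{T}\mathbf{K}_1^{-1}\vec{\nu}$. The second step converts this expression into the claimed SNR-form. Because $\mathbf{K}_1 = \mathbf{K}_2 + \gamma^{2}\vec{\nu}\vec{\nu}^{T}$ is a rank-one update of $\mathbf{K}_2$, the matrix determinant lemma gives $\det\mathbf{K}_1/\det\mathbf{K}_2 = 1 + \gamma^{2}\vec{\nu}^{T}\mathbf{K}_2^{-1}\vec{\nu}$, so $\texttt{SNR} = \gamma^{2}\vec{\nu}^{T}\mathbf{K}_2^{-1}\vec{\nu}$. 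A companion application of the Sherman--Morrison identity to $\mathbf{K}_1^{-1}$ shows $\vec{\nu}^{T}\mathbf{K}_1^{-1}\vec{\nu} = (\vec{\nu}^{T}\mathbf{K}_2^{-1}\vec{\nu})/(1+\texttt{SNR})$; substituting this into $\gamma^{2} - \gamma^{4}\vec{\nu}^{T}\mathbf{K}_1^{-1}\vec{\nu}$ collapses it to $\gamma^{2}/(1+\texttt{SNR})$, establishing \eqref{eq:optimalw}.

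For the robustness claim, I would take the very same optimizer $\vec{w}^{*}$ and evaluate its error on $\vec{y}' = \vec{\nu}X' + \vec{Z}$. Since the error decomposes as $(\vec{w}^{*T}\vec{\nu}-1)X' + \vec{w}^{*T}\vec{Z}$ and $X'$ is independent of $\vec{Z}$ (with $\vec{Z}$ zero mean), the cross term vanishes in expectation, leaving $(\vec{w}^{*T}\vec{\nu}-1)^{2}\mathbb{E}[X'^{2}] + \vec{w}^{*T}\mathbf{K}_2\vec{w}^{*}$. This expression is nondecreasing in $\mathbb{E}[X'^{2}]$, so the hypothesis $\mathbb{E}[X'^{2}]\leq\gamma^{2}$, combined with the fact that evaluating it at $\mathbb{E}[X'^{2}]=\gamma^{2}$ recovers the value $\gamma^{2}/(1+\texttt{SNR})$ proved in the main claim, yields the desired inequality.

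The only nontrivial point is the matrix identity bridging the $\mathbf{K}_1$-form and $\mathbf{K}_2$-form of the MMSE: everything else is bookkeeping. The calculation tacitly requires $\mathbf{K}_2$ to be nonsingular so that $\texttt{SNR}$ is well defined; if one wishes to allow singular $\mathbf{K}_2$ with $\mathbf{K}_1$ still invertible, a brief perturbation argument (replace $\mathbf{K}_2$ by $\mathbf{K}_2+\varepsilon I$ and let $\varepsilon\downarrow 0$) delivers the same formula by continuity.
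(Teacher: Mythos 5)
Your proof is correct and follows essentially the same route as the paper: minimize the quadratic in $\vec{w}$ to obtain the LMMSE estimator $\vec{w}^{*}=\gamma^{2}\mathbf{K}_1^{-1}\vec{\nu}$ and error $\gamma^{2}-\gamma^{4}\vec{\nu}^{T}\mathbf{K}_1^{-1}\vec{\nu}$, rewrite that error as $\gamma^{2}\det(\mathbf{K}_2)/\det(\mathbf{K}_1)$ via a rank-one determinant identity, and verify the robustness claim by plugging the fixed $\vec{w}^{*}$ into the error for $\vec{y}'$ and observing monotonicity in $\mathbb{E}[X'^{2}]$. The only cosmetic difference is in the determinant algebra — the paper applies the Weinstein--Aronszajn identity once to pass directly from $1-\gamma^{2}\vec{\nu}^{T}\mathbf{K}_1^{-1}\vec{\nu}$ to $\det(\mathbf{K}_2)/\det(\mathbf{K}_1)$, whereas you first express $\texttt{SNR}=\gamma^{2}\vec{\nu}^{T}\mathbf{K}_2^{-1}\vec{\nu}$ and then invoke Sherman--Morrison, but this is the same calculation.
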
}

\vc{It is instructive to observe that the quantities in \eqref{eq:SNR} are covariance matrices, specifically, $\mathbf{K}_1 = \mathbb{E}[\vec{y}\vec{y}^{T}]$ and $\mathbf{K}_2 = \mathbb{E}[\vec{Z}\vec{Z}^{T}],$ where $\vec{Z} = \begin{bmatrix}Z_1 \\ Z_2 \\ \vdots \\ Z_n\end{bmatrix}.$ A proof of the lemma is placed in Appendix \ref{app:SNRlemmaproof}. The lemma motivates the following definitions.}

\begin{definition}(Privacy signal to noise ratio.)
Consider a secure multiplication coding scheme $\mathcal{C}.$ For any set $\mathcal{S}=\{s_1,s_2,\ldots,s_{|\mathcal{S}|}\} \subseteq \{1,2,\ldots,N\}$ of nodes where $s_1 < s_2 < \ldots < s_{|\mathcal{S}|}$, let $\mathbf{K}_{\mathcal{S}}^{\mathbf{R}}$ and $\mathbf{K}_{\mathcal{S}}^{\mathbf{S}}$ represent the covariance matrices of $\tilde{R}_{i}|_{i \in \mathcal{S}},\tilde{S}_{i}|_{i \in \mathcal{S}}.$ In particular, the $(i,j)$-th entry of $\mathbf{K}_{\mathcal{S}}^{\mathbf{R}}, \mathbf{K}_{\mathcal{S}}^{\mathbf{S}}$ are $\mathbb{E}[\tilde{R}_{s_i}\tilde{R}_{s_j}],\mathbb{E}[\tilde{S}_{s_i}\tilde{S}_{s_j}]$ respectively. Let $\mathbf{K}_{\mathcal{S}}^{A},\mathbf{K}_{\mathcal{S}}^{B}$ denote \vc{$|\mathcal{S}|\times|\mathcal{S}|$} matrices whose $(i,j)$-th entries respectively are $a_{s_{i}}a_{s_{j}} \eta$ and $b_{s_i}b_{s_j} \eta$ where $a_i,b_i$ are constants defined in (\ref{eq:model}). Then, the privacy signal-to-noise ratios corresponding to inputs ${A},
{B}$ denoted respectively as $\texttt{SNR}_{\mathcal{S}}^{A},\texttt{SNR}_{\mathcal{S}}^{B}$ are defined as:
$$\texttt{SNR}_{\mathcal{S}}^{A} = \frac{\text{det}(\mathbf{K}_{\mathcal{S}}^{A}+ \mathbf{K}_{\mathcal{S}}^{{R}})}{\text{det}(\mathbf{K}_{\mathcal{S}}^{{R}})}-1,$$
$$\texttt{SNR}_{\mathcal{S}}^{B} = \frac{\text{det}(\mathbf{K}_{\mathcal{S}}^{B} + \mathbf{K}_{\mathcal{S}}^{\mathbf{S}})}{\text{det}(\mathbf{K}_{\mathcal{S}}^{\mathbf{S}})}-1,$$
where $\text{`det'}$ denotes the determinant.
For \vc{integer} $t \leq N$, the $t$-node privacy signal-to-noise ratio of a $N$-node secure multiplication coding scheme $\mathcal{C}$, denoted as $\texttt{SNR}_{p}(\mathcal{C})$ is defined to be: 
$$ \texttt{SNR}_{p}(\mathcal{C}) = \max_{\mathcal{S} \subseteq \{1,2,\ldots,N\}, |\mathcal{S}|=t} \max(\texttt{SNR}_{\mathcal{S}}^{A},\texttt{SNR}_{\mathcal{S}}^{B}).$$
\label{def:SNRp}
\end{definition}

\begin{remark}\vc{ Because of Lemma \ref{lem:theSNRlemma}, we can infer that in the case of $\mathbb{E}[A^2] = \eta$, an adversary with access to the nodes in set $\mathcal{S}$ can obtain a linear combination of the inputs to these nodes to recover, for example, $A$ with a mean square error of $\frac{\eta}{1+\texttt{SNR}_\mathcal{S}^{\mathcal{A}}}$. This mean square error is an alternate metric --- as compared to DP --- for privacy leakage that will be used as an intermediate step in deriving our results. While it leads to a transparent privacy-accuracy trade-off, the privacy SNR metric has at least two shortcomings as compared to $\epsilon$-DP. First, unlike DP, it depends on some assumptions on the distribution of $\mathbb{P}_A$, e.g., those made in Assumption \ref{assumption1}. Second, because of its connection to mean squared error, the metric depends only on linear combinations of the adversary's input, unlike DP which is an information-theoretic metric and implicitly captures all possible functions of the adversary. We emphasize that we our paper presents converse and tight achievability results that apply for the more standard DP metric that does not use the above assumptions.}
\label{rem:MSE_interpret}
\end{remark}


Next we define the accuracy signal-to-noise ratios. From the definition of $\tilde{C}_{i}$ in (\ref{eq:comp_output}), we observe that:
$$ \widetilde{C}_i = a_ib_i AB + a_i A\tilde{S}_i + b_i B\tilde{R}_i+ \tilde{R}_i\tilde{S}_i.$$
To understand the following definition, it helps to note that in $\mathbb{E}[\widetilde{C}_{i}\widetilde{C}_{j}]$, the ``signal'' component, $\mathbb{E}[AB]$,  has the coefficient $a_ib_ia_jb_j$.

\begin{definition}(Accuracy signal to noise ratio.)\label{def:SNRa}
Consider a secure multiplication coding scheme $\mathcal{C}$ over $N$ nodes. Let $\mathbf{K}_1$ denote the $N \times N$ matrix whose $(i,j)$-th entry is $\mathbb{E}[\tilde{C}_i \tilde{C}_j]$ where $\tilde{C}_{i},\tilde{C}_{j}$ are as defined in (\ref{eq:comp_output}). Let $\mathbf{K}_{2}$ denote the matrix whose $(i,j)$-th entry is $\mathbb{E}[\tilde{C}_i \tilde{C}_j] - a_ib_ia_jb_j \eta^2$, where $a_i,b_i,a_j,b_j$ are constants associated with the coding scheme as per (\ref{eq:model}). Then, the accuracy signal-to-noise of the coding scheme $\mathcal{C},$ denoted as $\texttt{SNR}_{a},$ is defined as:
\begin{equation} \label{eq:SNRa}
    \texttt{SNR}_{a}(\mathcal{C}) = \frac{\text{det}(\mathbf{K}_{1})}{\text{det}(\mathbf{K}_{2})}-1.
\end{equation}
\end{definition}
We drop the dependence on the coding scheme $\mathcal{C}$ from $\texttt{SNR}_{a},\texttt{SNR}_{p}$ in this paper when the coding scheme is clear from the context. \vc{In Definition \ref{def:SNRa} the quantity $\mathbf{K}_1$ is the covariance matrix of $\begin{bmatrix}{\tilde{C}_1} \\  {\tilde{C}_1} \\ \vdots \\ \tilde{C}_{N} \end{bmatrix}$ for the case where $\mathbb{E}[A^2B^2] = \eta^2.$ Similarly, if $\mathbb{E}[A^2]=\mathbb{E}[B^2] = \eta,$ the quantity $\mathbf{K}_2$ is the covariance matrix of random vector $\begin{bmatrix}{\tilde{C}}_1-a_1 b_1 AB \\  {\tilde{C}}_2 - a_2b_2AB \\ \vdots \\ \tilde{C}_{N}-a_2b_2AB \end{bmatrix},$ which can be interpreted as an additive noise with respect to the ``signal'' $AB$ at the decoder.}

The following lemma follows from Lemma \ref{lem:theSNRlemma}.

\begin{lemma}
For a coding scheme $\mathcal{C}$ with accuracy signal-to-noise ratio $\texttt{SNR}_{a},$ for inputs $A,B$ that satisfy Assumption \ref{assumption1}, we have:
$$ \LMSE(\mathcal{C}) \leq \frac{\eta^2}{1+\texttt{SNR}_{a}},$$
with equality if and only if $E[A^2] = E[B^2] = \eta.$
\label{lem:LMSE_SNR}
\end{lemma}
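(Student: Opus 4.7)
The plan is to apply Lemma \ref{lem:theSNRlemma} in a fictional setup that matches the extreme case of Assumption \ref{assumption1} (where $\mathbb{E}[A^2]=\mathbb{E}[B^2]=\eta$), and then to transfer the bound to the true distribution via PSD monotonicity of the relevant covariances.

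First, I decompose $\tilde{C}_i = a_ib_i AB + Z_i$ where $Z_i := a_i A \tilde{S}_i + b_i B \tilde{R}_i + \tilde{R}_i \tilde{S}_i$. The crucial observation is that although $Z_i$ is \emph{not} statistically independent of the signal $AB$, the zero-mean assumptions together with the independences $(A,B)\perp(\tilde{R},\tilde{S})$ and $\tilde{R}\perp\tilde{S}$ yield $\mathbb{E}[Z_i \cdot AB]=0$. Orthogonality (rather than full independence) is all that second-moment methods require. A routine expansion then gives
\[
\mathbb{E}[Z_i Z_j] = \mathbb{E}[A^2]\,a_i a_j \mathbb{E}[\tilde{S}_i \tilde{S}_j] + \mathbb{E}[B^2]\,b_i b_j \mathbb{E}[\tilde{R}_i \tilde{R}_j] + \mathbb{E}[\tilde{R}_i \tilde{R}_j]\mathbb{E}[\tilde{S}_i \tilde{S}_j],
\]
which coincides with $(\mathbf{K}_2)_{ij}$ at $\mathbb{E}[A^2]=\mathbb{E}[B^2]=\eta$; Assumption \ref{assumption1} and the PSD-ness of each of the three constituents (the last by the Schur product theorem applied to the Hadamard product) give $\mathbb{E}[ZZ^T]\preceq \mathbf{K}_2$ in the Loewner order.

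Next, I invoke Lemma \ref{lem:theSNRlemma} in a fictional problem: take signal $X'$ with $\mathbb{E}[X'^2]=\eta^2$, set $\nu_i = a_i b_i$, and take independent noise $Z'$ with covariance $\mathbf{K}_2$; the vector $\nu_i X' + Z'_i$ has covariance exactly $\mathbf{K}_1$. The lemma supplies an optimizing weight $\vec{w}^*$ with fictional minimum value $\eta^2/(1+\texttt{SNR}_a)$. Choosing the coding scheme's decoder to be $d(\tilde{C}) = \vec{w}^{*T}\tilde{C}$ and using the orthogonality from the previous paragraph, the true LMSE becomes
\[
\LMSE(\mathcal{C}) = (\vec{w}^{*T}\vec{\nu}-1)^2\,\mathbb{E}[A^2]\mathbb{E}[B^2] + \vec{w}^{*T}\mathbb{E}[ZZ^T]\vec{w}^*,
\]
where $\vec{\nu}=(a_1b_1,\ldots,a_Nb_N)^T$. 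By $\mathbb{E}[A^2]\mathbb{E}[B^2]\leq \eta^2$ and $\mathbb{E}[ZZ^T]\preceq \mathbf{K}_2$, this is at most the corresponding fictional expression at $\vec{w}^*$, which equals $\eta^2/(1+\texttt{SNR}_a)$.

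For the equality direction, tightness of both inequalities simultaneously forces $\mathbb{E}[A^2]\mathbb{E}[B^2]=\eta^2$ and $(\mathbf{K}_2 - \mathbb{E}[ZZ^T])\vec{w}^* = 0$. The former, combined with $\mathbb{E}[A^2],\mathbb{E}[B^2]\leq\eta$, yields $\mathbb{E}[A^2]=\mathbb{E}[B^2]=\eta$ provided the bias coefficient $\vec{w}^{*T}\vec{\nu}-1$ does not vanish --- which holds generically, since the first-order optimality condition for $\vec{w}^*$ gives the closed form $\vec{w}^{*T}\vec{\nu} = \texttt{SNR}_a/(1+\texttt{SNR}_a)<1$ (the degenerate noiseless case $\texttt{SNR}_a=\infty$ is trivial). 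The converse ``$\Leftarrow$'' is immediate since at $\mathbb{E}[A^2]=\mathbb{E}[B^2]=\eta$ the actual and fictional setups coincide identically. The main conceptual obstacle in executing this plan is precisely the failure of independence between $Z_i$ and $AB$; once one observes that mere orthogonality is enough to reduce the computation to a PSD-monotonicity argument, the rest is algebra.
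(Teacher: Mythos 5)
Your proof is correct, and it fills a genuine gap: the paper merely asserts that Lemma~\ref{lem:LMSE_SNR} ``follows from Lemma~\ref{lem:theSNRlemma}'' and supplies no argument. The natural reading is that the authors meant to invoke the robustness clause at the end of Lemma~\ref{lem:theSNRlemma}, but as you correctly identify, that clause holds only when the noise vector $\vec Z$ is \emph{fixed} while the signal $X'$ varies. In Lemma~\ref{lem:LMSE_SNR} the decomposition $\tilde C_i = a_i b_i\, AB + Z_i$ has $Z_i = a_i A\tilde S_i + b_i B\tilde R_i + \tilde R_i\tilde S_i$, which depends on $A,B$; both the signal power $\mathbb{E}[A^2]\mathbb{E}[B^2]$ \emph{and} the noise second-moment matrix $\mathbb{E}[ZZ^T]$ shrink when $\mathbb{E}[A^2],\mathbb{E}[B^2]$ drop below $\eta$. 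Your key extra ingredient --- writing $\mathbf{K}_2 - \mathbb{E}[ZZ^T] = (\eta - \mathbb{E}[A^2])\,(\vec a\vec a^T)\odot\mathbf{K}^{S} + (\eta-\mathbb{E}[B^2])\,(\vec b\vec b^T)\odot\mathbf{K}^{R}$ and invoking the Schur product theorem to get Loewner monotonicity --- is exactly what is needed to close the argument, and your observation that mere uncorrelatedness ($\mathbb{E}[Z_i\cdot AB]=0$) rather than full independence is all that the second-moment computations require matches what the proof of Lemma~\ref{lem:theSNRlemma} actually uses.

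Two minor imprecisions worth tightening. First, your parenthetical attributing the Schur product theorem only to ``the last'' constituent is off: both terms of $\mathbf{K}_2-\mathbb{E}[ZZ^T]$ are scaled Hadamard products of PSD matrices and both need it; the term $\mathbf{K}^{R}\odot\mathbf{K}^{S}$ simply cancels in the difference. Second, ``generically'' undersells the bias-coefficient fact: the closed form $\vec w^{*T}\vec\nu - 1 = -1/(1+\texttt{SNR}_a)$ holds identically (as you note, it follows from equations~\eqref{eq:00}--\eqref{eq:11}), so the coefficient is nonzero \emph{whenever} $\texttt{SNR}_a<\infty$, not merely generically; your handling of the $\texttt{SNR}_a=\infty$ edge case separately is the right call, although note that in that degenerate case $\LMSE = 0$ for \emph{all} admissible $\mathbb{P}_{A,B}$, so the ``only if'' clause of the lemma actually fails there --- this is a harmless corner the paper does not address either.
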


\subsection{Statement of Main Results}
The main result of this paper is a tight characterization of the trade-off between the accuracy, evaluated in terms of the mean squared error, and the privacy, evaluated in terms of the DP parameter, for $t < N < 2t+1.$. Our main result can also be interpreted in terms of the optimal privacy-accuracy trade-off with accuracy evaluated via the accuracy signal-to-noise, $\texttt{SNR}_{a}$, and privacy, via the privacy signal-to-noise, $\texttt{SNR}_p$ as follows:
\begin{tcolorbox}\begin{equation}(1+\texttt{SNR}_a) = (1+\texttt{SNR}_p)^2.\label{eq:SNR_tradeoff}\end{equation}
\end{tcolorbox}


We state the results more formally below, starting with the achievability result.

\begin{theorem}
Consider positive integers $N,t$ with $N > t$. For every $\delta > 0,$ and for every strictly positive parameter $\texttt{SNR}_{p} > 0$ there exists a $N$-node secure multiplication coding scheme $\mathcal{C}$ with $t$-node privacy signal-to-noise, $\texttt{SNR}_{p}$ and an accuracy $\texttt{SNR}_{a}$ that satisfies:
$$ \texttt{SNR}_{a} \geq 2 \texttt{SNR}_p + \texttt{SNR}_p^2-\delta.$$
\label{thm:main_achievability}
\end{theorem}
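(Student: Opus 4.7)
The strategy is to generalize the three-layer coding scheme of Section~\ref{sec:summary} to arbitrary $N>t$. Fix a scheme parameter $\tau>0$, to be chosen small at the end (playing the role of the ``$\delta$'' in the expository example; we reserve the symbol $\delta$ for the theorem's tolerance), and set $\sigma^2 := \eta/\texttt{SNR}_p$. Pick distinct real scalars $x_1,\ldots,x_N$ with exactly one of them, say $x_{i_0}$, equal to $0$, so that the remaining $x_i$ generate invertible Vandermonde minors. At node~$i$, set
\begin{equation*}
\tilde A_i = A + \bigl(1+\tau^{3/2}\mathbf{1}[i\ne i_0]\bigr)R_1 + \tau \sum_{k=2}^{t} R_k\,x_i^{k-1},
\end{equation*}
where $R_1$ is zero-mean with variance $\sigma^2$ and $R_2,\ldots,R_t$ are unit-variance auxiliary variables, all mutually independent. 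The analogous construction with an independent family $S_1,\ldots,S_t$ encodes $B$; the scalars $a_i=b_i=1$.

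\textbf{Privacy SNR.} For any adversary set $\mathcal T$ of size $t$, the inputs $\{\tilde A_i:i\in\mathcal T\}$ equal $t$ copies of $A+R_1$ up to an $O(\tau)$ perturbation from the second and third layers. Expanding the determinants in Definition~\ref{def:SNRp} as polynomials in $\tau$ and extracting leading-order behavior shows that $\texttt{SNR}^A_{\mathcal T}\to \eta/\sigma^2 = \texttt{SNR}_p$ as $\tau\to 0$; the identical argument handles $\texttt{SNR}^B_{\mathcal T}$. Since only finitely many subsets are involved, choosing $\tau$ small enough and optionally rescaling $\sigma^2$ by a $o_\tau(1)$ factor makes $\texttt{SNR}_p(\mathcal C)$ exactly match the target value.

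\textbf{Accuracy SNR.} The decoder observes $\tilde C_i = \tilde A_i \tilde B_i$ for all $N$ nodes. Exploiting the distinct scales $1,\tau,\tau^{3/2}$ together with the Vandermonde structure of $\{x_i\}$, it forms linear combinations of the $\tilde C_i$ that recover, up to $O(\tau^{1/2})$ error, the two observables
\begin{equation*}
\tilde C_\star = (A+R_1)(B+S_1), \qquad Z = (A+R_1)S_1 + (B+S_1)R_1.
\end{equation*}
A direct linear-MMSE calculation of $AB$ from $(\tilde C_\star, Z)$ under the independence and zero-mean assumptions yields MSE equal to $\eta^2\sigma^4/(\eta+\sigma^2)^2$, which by Lemma~\ref{lem:LMSE_SNR} corresponds to $\texttt{SNR}_a = (\eta+\sigma^2)^2/\sigma^4 - 1 = (1+\texttt{SNR}_p)^2 - 1$ after substituting $\sigma^2 = \eta/\texttt{SNR}_p$. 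Taking $\tau$ small enough absorbs the $O(\tau^{1/2})$ finite-$\tau$ loss into the theorem's tolerance $\delta$, giving $\texttt{SNR}_a\ge 2\texttt{SNR}_p+\texttt{SNR}_p^2-\delta$.

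\textbf{Main obstacle.} The technical crux is the rigorous asymptotic expansion of the determinant ratios in Definitions~\ref{def:SNRp} and \ref{def:SNRa}: both numerators and denominators vanish as $\tau\to 0$ because the first-layer common noise makes the relevant covariance matrices $\mathbf K_1,\mathbf K_2$ nearly rank-one, so proving convergence to the claimed limits requires careful leading-order cancellation. One must further verify that the degree-$(t-1)$ polynomial second layer together with the $\tau^{3/2}$-scaled third layer produces enough linearly independent observations at the decoder to realize the target LMMSE, without introducing any cross-terms (in the expansion of the pairwise products $\tilde C_i\tilde C_j$) that would degrade the leading-order $\texttt{SNR}_a$ below $(1+\texttt{SNR}_p)^2-1$.
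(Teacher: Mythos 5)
Your proposal is essentially the paper's own construction and analysis, with minor cosmetic differences: the paper restricts attention to $N=t+1$ nodes (ignoring any extras), encodes the second layer via an abstract full-rank matrix $\mathbf{G}$ satisfying conditions (C1) and (C2), and works with abstract sequences $\alpha_1^{(n)},\alpha_2^{(n)}$ obeying the limit constraints in \eqref{eq:limits}; you instantiate these choices concretely with a Vandermonde-style second layer over distinct scalars $x_i$ (one of them zero to yield the clean node) and the explicit scales $\alpha_2\sim\tau$, $\alpha_1\sim\tau^{3/2}$. Your identification of the two effective observables $\tilde C_\star=(A+R_1)(B+S_1)$ and $Z=(A+R_1)S_1+R_1(B+S_1)$, the $O(\tau^{1/2})$ residual in recovering $Z$ (dominated by $\alpha_2^2/\alpha_1$), and the final LMMSE value $\eta^2\sigma^4/(\eta+\sigma^2)^2$ all check out and match the paper.

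One refinement worth noting: for the privacy side, the paper does \emph{not} expand determinant ratios directly, because, as you correctly flag, both $\det(\mathbf K_1)$ and $\det(\mathbf K_2)$ degenerate as $\tau\to 0$. Instead, it bounds $\texttt{SNR}^{(A)}_{\mathcal S}$ by directly controlling the mean-square error of an arbitrary linear combiner $\sum_{i\in\mathcal S}\beta_i\tilde A_i$, splitting into the two cases (clean node in $\mathcal S$ or not) and using the full-rank minor condition on $\mathbf G$ to show that nulling the second layer costs a quadratic penalty in the surviving noise, which in turn drives the term $-2x\alpha_1\sqrt{t}\nu_2+\alpha_2^2\rho^2\nu_2^2$ to zero as $\alpha_1/\alpha_2\to 0$. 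Your determinant-expansion route to the same limit would work, but the paper's estimator-based bound is both more direct and already carries the inequality $\texttt{SNR}^{(A)}_{\mathcal S}\le\eta/x^2+\delta$ uniformly over all $t$-subsets, which is exactly what is needed. In short: the scheme, the layering, the two observables, the accuracy computation, and the $\tau\to 0$ asymptotics are all the same; the paper simply bypasses the delicate near-singular determinant limit in the privacy half by arguing at the level of linear estimators.
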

Notably, it suffices to show the achievability for $N=t+1.$ If $N > t+1,$ the $(t+1)$-node secure multiparty multiplication scheme can be utilized for the first $t+1$ nodes and the remaining nodes can simply receive $0$. We now translate the achievability result in terms of $\epsilon$-DP. For $\epsilon > 0,$ let $\mathcal{S}_{\epsilon}(\mathbb{P})$ denote the set of all real-valued random variables that satisfy $\epsilon$-DP, that is, $X \in\mathcal{S}_{\epsilon}(\mathbb{P})$ if and only if:
$$ \sup \frac{\mathbb{P}(X+X' \in \mathcal{A})}{\mathbb{P}(X+X'' \in \mathcal{A})} \leq e^{\epsilon}$$
where the supremum is over all constants $X',X'' \in \mathbb{R}$ that satisfy $|X'-X''| \leq 1$ and all subsets $\mathcal{A} \subset \mathbb{R}$ that are in the Borel $\sigma$-field. Let $L^2(\mathbb{P})$ denote the set of all real-valued random variables with finite variance. Let 
$$ \sigma^{*}(\epsilon) = \inf_{X \in \mathcal{S}_\epsilon(\mathbb{P}) \cap L^2(\mathbb{P})} \mathbb{E}\left[(X-\mathbb{E}[X])^2\right].$$

In plain words, $\sigma^*(\epsilon)$ denotes the smallest noise variance that achieves single user differential privacy parameter $\epsilon$. It is worth noting that $\sigma^{*}(\epsilon)$ has been explicitly characterized in \cite{Pramod2015}, Theorem 7, as:
\begin{equation} (\sigma^{*}(\epsilon))^2 = \frac{2^{2/3}e^{-2\epsilon/3}(1+e^{-2\epsilon/3})+e^{-\epsilon}} {(1-e^{-\epsilon})^2}. \label{eq:optimalsigma}
\end{equation}

\begin{corollary}
Consider positive integers $N,t$ with $N \leq 2t$. Then, for every $\epsilon,\delta > 0,$ there exists a coding scheme $\mathcal{C}$ that achieves $t$-node $\epsilon$-DP,
$$ \LMSE(\mathcal{C}) \leq \frac{\eta^2 (\sigma^*(\epsilon))^4}{(\eta+(\sigma^*(\epsilon))^2)^2}+\delta.$$
\label{cor:achievability}
\end{corollary}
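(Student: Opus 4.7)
The plan is to combine Theorem~\ref{thm:main_achievability} with Lemma~\ref{lem:LMSE_SNR}, tuned so that the per-node noise is the variance-optimal $\epsilon$-DP mechanism. First I would set the target privacy SNR to $\texttt{SNR}_p^\star = \eta/(\sigma^*(\epsilon))^2$, which corresponds to adding staircase noise of variance $(\sigma^*(\epsilon))^2$ --- known to be the minimum-variance $\epsilon$-DP mechanism by \cite{Pramod2015}. Invoking Theorem~\ref{thm:main_achievability} with this target and an auxiliary slack $\delta' > 0$ yields a coding scheme $\mathcal{C}$ whose accuracy SNR satisfies $1 + \texttt{SNR}_a \geq (1 + \texttt{SNR}_p^\star)^2 - \delta'$. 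Lemma~\ref{lem:LMSE_SNR} then bounds the LMSE by
$$\LMSE(\mathcal{C}) \leq \frac{\eta^2}{(1 + \texttt{SNR}_p^\star)^2 - \delta'}.$$
Substituting $\texttt{SNR}_p^\star = \eta/(\sigma^*(\epsilon))^2$ and simplifying gives $\eta^2(\sigma^*(\epsilon))^4/(\eta+(\sigma^*(\epsilon))^2)^2$ up to a term that vanishes continuously with $\delta'$; taking $\delta'$ small enough absorbs this into the target $\delta$.

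The main obstacle --- and the subtle point of the argument --- is that Theorem~\ref{thm:main_achievability} only produces a privacy SNR guarantee, whereas Corollary~\ref{cor:achievability} demands the stronger information-theoretic $\epsilon$-DP guarantee (as flagged in Remark~\ref{rem:MSE_interpret}). To bridge this gap, I would exploit the specific structure of the layered achievability construction sketched in equations~(1.5)--(1.10): the dominant noise on each node is a staircase random variable of variance $(\sigma^*(\epsilon))^2$, which on its own satisfies $\epsilon$-DP, while the higher-order layers are scaled by small powers of a layering parameter $\delta_0$. As $\delta_0 \to 0$, the adversary's $t$-node view should become a post-processing of the single noisy query $A + R_1$ (and analogously for $B + S_1$), and since post-processing preserves DP, the view inherits the $\epsilon$-DP guarantee of $R_1$. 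If the construction yields exact $\epsilon$-DP only in the limit $\delta_0 \to 0$, one can start from a slightly tighter $\epsilon' < \epsilon$, apply the construction there, and use continuity of $\sigma^*(\cdot)$ in \eqref{eq:optimalsigma} to swallow the residual gap into the $\delta$ slack. The remaining routine calculations --- algebraic simplification of $(1+\texttt{SNR}_p^\star)^2$ and the invocation of Lemma~\ref{lem:LMSE_SNR} --- complete the proof.
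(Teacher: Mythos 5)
Your overall plan matches the paper's: take the layered construction, set the scaling $x$ slightly above $\sigma^{*}(\epsilon)$ so that the dominant first-layer noise has more than the minimum variance needed for $\epsilon$-DP, then argue that as the layering parameter vanishes the adversary's view nearly inherits the single-query DP guarantee, absorbing the residual into the $\delta$ slack. You also correctly flag the central conceptual issue (Remark~\ref{rem:MSE_interpret}): Theorem~\ref{thm:main_achievability} only delivers an SNR guarantee, and an extra argument is needed to upgrade it to $\epsilon$-DP.

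The gap is in your post-processing claim. You assert that as the layering parameter goes to zero the adversary's $t$-node view ``becomes a post-processing of the single noisy query $A+R_1$.'' This is correct only when $t+1 \notin \mathcal{S}$: in that case every observation in $\mathcal{S}$ has the form $A + (x+\alpha_1^{(n)})R_1 + Z_i$ with $Z_i$ independent of $(A, R_1)$, so $A \rightarrow A+(x+\alpha_1^{(n)})R_1 \rightarrow \{\tilde{A}_i\}_{i\in\mathcal{S}}$ is a genuine Markov chain and post-processing applies. But when $t+1 \in \mathcal{S}$, the adversary sees both $A+xR_1$ (exactly, with no extra layers) and $A+(x+\alpha_1^{(n)})R_1 + \alpha_2^{(n)}[R_2,\ldots,R_t]\vec{g}_i$ for $i \in \mathcal{S}\setminus\{t+1\}$. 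Subtracting these reveals $\alpha_1^{(n)} R_1 + \alpha_2^{(n)}[\cdots]$, i.e.\ an \emph{independent noisy side-channel about $R_1$ itself}, so the view is \emph{not} a Markov function of any single noisy query of $A$, and no amount of shrinking $\alpha_1^{(n)}, \alpha_2^{(n)}$ changes that structural fact for fixed $n$. Convergence of distributions does not by itself yield convergence of the $\epsilon$-DP parameter, so "take the limit" is not a proof here. The paper's proof (Sec.~\ref{app:DP}) resolves this by applying an invertible linear map (using condition (C1) that $\mathbf{G}_{2:t}$ is invertible) to re-express the view as a tuple of \emph{independent} DP mechanisms --- one with the first-layer noise, achieving $\epsilon^{*}$-DP, and $t-1$ Laplacian mechanisms achieving $\sqrt{2}\,\alpha_1^{(n)} g_i' /(\alpha_2^{(n)} x)$-DP each --- and then invokes DP composition to bound the total leakage by $\epsilon^{*} + \sqrt{2}\sum_{i\geq 2}\alpha_1^{(n)} g_i' /(\alpha_2^{(n)} x)$, which tends to $\epsilon^{*} < \epsilon$ as $n \to \infty$. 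This composition step, together with the deliberate choice of Laplacian noise for the second layer (to make the per-coordinate $\epsilon$ explicit), is what makes the quantitative convergence of the DP parameter rigorous, and it is missing from your outline.
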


Theorem \ref{thm:main_achievability} and Corollary \ref{cor:achievability} are shown in Section \ref{sec:achievability}.

\begin{remark} \vc{ For the case where $N \geq 2t+1$, perfect accuracy can be achieved by real-valued Shamir Secret Sharing. Specifically, consider the secure multiplication coding scheme that treats quantities in \eqref{eq:ss} as real-valued and sets $\tilde{A_i} = p_A(x_i), \tilde{B}_i = p_B(x_i)$ where $x_1, x_2, \ldots, x_n \in \mathbb{R}$ as arbitrary distinct scalars. Then the output of the $i$th node $\tilde{C}_i$ is an evaluation of the degree $2t$ polynomial $p_A(x)p_B(x)$ whose constant term is $AB.$ This constant term can be recovered perfectly by the decoder from $N \geq 2t+1$ nodes through polynomial interpolation. Thus, the scheme achieves $\texttt{SNR}_a = \infty.$ 
Further, by letting random variables $R_i|_{i=1}^{t},S_{i}|_{i=1}^{t}$ to be statistically i.i.d. random variables with arbitrarily large variance, the privacy SNR $\texttt{SNR}_p$ can be made arbitrarily small. To see this, consider an adversary that observes the inputs to $\mathcal{S}=\{s_1,s_2,\ldots, s_t\},$ where $s_1 < s_2 < \ldots < s_t.$ The observation of the adversary can be expressed as:
 $$ \vec{A}_{\mathcal{S}}:= A \vec{1} + \mathbf{V}\vec{R}_{\mathcal{S}} $$
$$ \vec{B}_{\mathcal{S}}:= B\vec{1} + \mathbf{V}\vec{S}_{\mathcal{S}} ,$$
where $\vec{1}$ is a $t \times 1$ all ones vector, $\vec{R}_{\mathcal{S}}, \vec{S}_{\mathcal{S}}$ are $t \times 1$ vectors whose $i$-th entries are $R_{s_i},S_{s_i}$ and $\mathbf{V}$ is a $t \times t$ Vandermonde matrix whose $(i,j)$-th entry is $x_{s_i}^{j}.$ Because $\mathbf{V}$ is invertible, the adversary's information is equivalent to:

 $$ \mathbf{V}^{-1}\vec{A}_{\mathcal{S}}= A \mathbf{V}^{-1} \vec{1} + \mathbf{V}\vec{R}_{\mathcal{S}} $$
$$ \mathbf{V}^{-1}\vec{B}_{\mathcal{S}}= B \mathbf{V}^{-1} \vec{1}+ \vec{S}_{\mathcal{S}}.$$
Clearly by letting the variance of $R_i, S_i, i=1,2,\ldots, t$ to be arbitrarily large (keeping scalars $x_1, x_2,\ldots, x_N$ fixed), we can make $\texttt{SNR}_{p}$ arbitrarily small. Therefore, the point $(\texttt{SNR}_{a}=\infty, \texttt{SNR}_{p}=0)$ is achievable. Further, if we choose $R_1, R_2, \ldots, R_t, S_1, S_2, \ldots, S_t$ to be i.i.d., Laplace random variables with variance $\sigma$, the above scheme achieves $\epsilon$-DP, with $\epsilon \to 0$ as $\sigma \to \infty$.}
\label{eq:remark5}
\end{remark}
\begin{remark}For the case of $N=t,$ we readily show $\texttt{SNR}_{a} \leq \texttt{SNR}_{p}$ and consequently, we have $\texttt{LMSE} \geq \frac{\eta^2}{1+\texttt{SNR}_{p}}.$ To see this, consider the decoder with coefficients $d_1,d_2,\ldots,d_N$. By definition of the privacy SNR and basic linear estimation theory, we have: $\mathbb{E}\left[(\sum_{i=1}^{N} d \tilde{A}_{i} - A)^2\right]\geq \frac{\eta}{1+\texttt{SNR}_{p}}.$ Then, we can bound the mean square error of the decoder as:
\begin{eqnarray*}
    \texttt{LMSE} &=& \mathbb{E}\left[\left(\sum_{i=1}^{N}w_i \tilde{A}_{i}\tilde{B}_i - AB)^2 \right)\right]\\&=&
     \mathbb{E}\left[\left(\sum_{i=1}^{N}w_i \tilde{A}_{i}({B}+\tilde{S}_i) - AB)^2 \right)\right]\\
     \\&\stackrel{(a)}{=}&
     \mathbb{E}\left[\left(\sum_{i=1}^{N}w_i \tilde{A}_{i}{B} - AB \right)^2+\left(\sum_{i=1}^{N}w_i\tilde{A}_i \tilde{S}_i  \right)^2 \right]\\ &{\geq} &
     \mathbb{E}\left[\left(\sum_{i=1}^{N}w_i \tilde{A}_{i}{B} - AB \right)^2\right]\\
     &\geq & \frac{\eta^2}{1+\texttt{SNR}_p},
\end{eqnarray*}
where in $(a),$ we have used the fact that $\tilde{S}_i$ is uncorrelated with $B$ and $\mathbb{E}[B]=0$. Therefore, if we simply add one node to go from $N=t$ to $N=t+1,$ then our main achievability result implies that the quantity $\frac{\texttt{LMSE}}{\eta^2}$ becomes equal to, or smaller than squared. Because\footnote{Even a decoder that ignores all the computation outputs and predicts $\widetilde{C}=0$ obtains $\LMSE = \eta^2$.} $\frac{\texttt{LMSE}}{\eta^2} < 1,$ our achievability result implies a potentially significant reduction in the mean squared error for the case of $N = t+1$ as compared to the case of $N=t.$  
\end{remark}

We next state our converse results.

\begin{theorem}
Consider positive integers $N,t$ with $N \leq 2t$. For any $N$ node secure multiplication coding scheme $\mathcal{C}$ with accuracy signal-to-noise ratio $\texttt{SNR}_a$ and $t$-node privacy signal-to-noise $\texttt{SNR}_{p}:$
$$ \texttt{SNR}_{a} \leq 2 \texttt{SNR}_p + \texttt{SNR}_p^2.$$
\label{thm:main_converse}
\end{theorem}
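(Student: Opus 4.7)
The plan is to reformulate the desired inequality $\texttt{SNR}_a \le 2\,\texttt{SNR}_p + \texttt{SNR}_p^2$ as the equivalent product form $1+\texttt{SNR}_a \le (1+\texttt{SNR}_p)^2$ and prove the latter as a determinantal inequality. From Definition \ref{def:SNRa}, $1+\texttt{SNR}_a = \det\mathbf{K}_1/\det\mathbf{K}_2$, and by independence of $(\tilde R_1,\ldots,\tilde R_N)$ and $(\tilde S_1,\ldots,\tilde S_N)$ together with the worst-case choice $\mathbb{E}[A^2]=\mathbb{E}[B^2]=\eta$, $A\perp B$ (which saturates the supremum appearing in the definition of $\texttt{SNR}_a$), the matrices factor as $\mathbf{K}_1 = K_U\odot K_V$ and $\mathbf{K}_2 = K_U\odot K_V - \eta^2\,\mathbf{c}\mathbf{c}^T$, where $K_U = \eta\,\vec{a}\vec{a}^T + K^R$ and $K_V = \eta\,\vec{b}\vec{b}^T + K^S$ are the full $N\times N$ covariance matrices of $\tilde{\vec A}$ and $\tilde{\vec B}$ (with $K^R,K^S$ denoting the full noise covariances, so that $K^R[\mathcal S]=\mathbf{K}^{\mathbf R}_{\mathcal S}$), $\mathbf{c}=\vec{a}\odot\vec{b}$, and $\odot$ is the Hadamard (entrywise) product. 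By Definition \ref{def:SNRp}, $1+\texttt{SNR}_{\mathcal{S}}^A = \det K_U[\mathcal{S}]/\det K^R[\mathcal{S}]$, which is at most $1+\texttt{SNR}_p$ whenever $|\mathcal S|=t$, and analogously for $B$.

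Because $N\le 2t$, I can pick subsets $\mathcal{S}_1,\mathcal{S}_2\subseteq\{1,\ldots,N\}$ of size $t$ each with $\mathcal{S}_1\cup\mathcal{S}_2=\{1,\ldots,N\}$ (they may overlap when $N<2t$). Given the setup above, the theorem then reduces to the key determinantal inequality
\begin{equation*}
\frac{\det(K_U\odot K_V)}{\det(K_U\odot K_V-\eta^2\,\mathbf{c}\mathbf{c}^T)}\;\le\;\frac{\det K_U[\mathcal{S}_1]}{\det K^R[\mathcal{S}_1]}\cdot\frac{\det K_V[\mathcal{S}_2]}{\det K^S[\mathcal{S}_2]},
\end{equation*}
which, combined with the two privacy-SNR upper bounds, yields $1+\texttt{SNR}_a\le(1+\texttt{SNR}_p)^2$. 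It is worth emphasizing that this inequality \emph{cannot} be proved by simply dropping the cross term $K^R\odot K^S$ from $\mathbf{K}_2$: doing so yields only a bound linear in $\texttt{SNR}_p$, strictly weaker than the quadratic bound that Theorem \ref{thm:main_achievability} shows to be tight.

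Proving the key inequality is the main technical obstacle. My plan is to pass to the $(N+1)\times(N+1)$ augmented covariance matrices $K_U^+$ of $(A,\tilde{\vec A})$ and $K_V^+$ of $(B,\tilde{\vec B})$. Two Schur-complement identities should play a central role: first, $\det K_U^+ = \eta\cdot\det K^R$ together with its $V$-analog; second, $K_U^+\odot K_V^+$ is the covariance matrix of $(AB,\tilde{\vec C})$, which gives $\det(K_U^+\odot K_V^+) = \eta^2\det\mathbf{K}_2$. These identities recast the inequality as a comparison between Hadamard-product determinants of $(N+1)\times(N+1)$ block matrices and those of certain principal submatrices defined by $\mathcal{S}_1$ and $\mathcal{S}_2$ together with the leading coordinate. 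I expect the finishing step to use a Fischer/Oppenheim-type determinantal inequality for Hadamard products of block-structured PSD matrices, combined with a block factorization exploiting the cover $\mathcal{S}_1\cup\mathcal{S}_2=\{1,\ldots,N\}$. The delicate point is that the diagonal-products decoder is neither a subset nor a coarsening of a more informative ``cross-products'' decoder based on $\{\tilde A_i\tilde B_j : i\in\mathcal S_1, j\in\mathcal S_2\}$, so standard data-processing arguments do not apply directly; the quadratic-in-$\texttt{SNR}_p$ converse must emerge from the determinantal calculus itself, with the $K^R\odot K^S$ interaction term carried through the argument rather than discarded.
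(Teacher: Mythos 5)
Your reduction of the theorem to a determinantal inequality is correct as far as it goes: the Hadamard-product factorization $\mathbf{K}_1 = K_U \odot K_V$, $\mathbf{K}_2 = K_U \odot K_V - \eta^2\,\mathbf{c}\mathbf{c}^T$, and the Schur-complement identities $\det K_U^+ = \eta\det K^R$ and $\det(K_U^+ \odot K_V^+) = \eta^2\det\mathbf{K}_2$ all check out. But after that reduction you have not proved anything; you have only restated the theorem in determinantal form. The ``key determinantal inequality'' you isolate \emph{is} essentially the content of the theorem, and you explicitly leave its proof to a future ``finishing step'' with a speculative gesture at Fischer/Oppenheim-type inequalities. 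You also correctly observe that dropping the $K^R \odot K^S$ interaction only gives a linear-in-$\texttt{SNR}_p$ bound and that data-processing arguments do not apply --- which is an accurate diagnosis of the difficulty --- but a diagnosis is not a proof. As written, this is a reformulation with a concrete gap, not a proof.

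For comparison, the paper takes a quite different and fully concrete route. Rather than trying to prove a global determinantal inequality on Hadamard products, it normalizes the noise to unit-variance uncorrelated variables, writes the decoder error as a bilinear form $\widetilde{C} - AB = \vec{u}^T \Delta\,\vec{s}$ so that the MSE equals $\|\Delta\|_F^2 = \eta^2/(1+\texttt{SNR}_a)$, and then proves a nullvector lemma (Lemma \ref{lem:nullvector_SNR}): for any $\mathcal{S}$ with $|\mathcal{S}|\le t$ there is a $\vec{\lambda}$ orthogonal to $\{\vec{w}_i : i\in\mathcal{S}\}$ with $\lambda_1^2/\|\vec{\lambda}\|^2 \ge 1/(1+\texttt{SNR}_{\mathcal{S}}^{(B)})$. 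Applying this to $\mathcal{S}^c$ (which has $|\mathcal{S}^c|\le t$ precisely because $N\le 2t$), it evaluates $\Delta\vec{\lambda}$, notes the contribution from $\mathcal{S}^c$ is annihilated so only $\{\vec{v}_i : i\in\mathcal{S}\}$ survives, lower-bounds $\|\Delta\vec{\lambda}\|^2$ via the best linear estimator of $A$ from $\mathcal{S}$, and chains $\|\Delta\|_F^2 \ge \|\Delta\vec{\lambda}\|^2/\|\vec{\lambda}\|^2$ to obtain the product bound $(1+\texttt{SNR}_a) \le (1+\texttt{SNR}_{\mathcal{S}}^{(A)})(1+\texttt{SNR}_{\mathcal{S}^c}^{(B)})$. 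The role you assign to the cover $\mathcal{S}_1\cup\mathcal{S}_2 = \{1,\ldots,N\}$ is played in the paper by the explicit complementary partition $(\mathcal{S}, \mathcal{S}^c)$, and the quadratic dependence on $\texttt{SNR}_p$ emerges automatically from multiplying the Frobenius-norm bound by the nullvector ratio --- no Hadamard-determinant machinery is needed. If you want to salvage your plan you would need to actually establish the determinantal inequality; otherwise, the nullvector route is the one that closes the argument.
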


\begin{corollary}
Consider positive integers $N,t$ with $N \leq 2t$. For any coding scheme ${\mathcal{C}}$ that achieves $t$-node $\epsilon$-DP, there exists a distribution $\mathbb{P}_{A,B}$ that satisfies Assumption \ref{assumption1} and
$$ \LMSE(\mathcal{C}) \geq \frac{\eta^2( \sigma^*(\epsilon))^4}{(\eta+(\sigma^*(\epsilon))^2)^2}.$$
\label{cor:converse}
\end{corollary}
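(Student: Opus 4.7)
The plan is to derive the $\LMSE$ lower bound from Theorem \ref{thm:main_converse} by (i) choosing a worst-case distribution for $(A,B)$ that saturates Lemma \ref{lem:LMSE_SNR}, and (ii) establishing a new intermediate bound
\begin{equation}
\texttt{SNR}_p(\mathcal{C}) \leq \frac{\eta}{(\sigma^{*}(\epsilon))^2},
\label{eq:psnr_dp}
\end{equation}
which converts the DP hypothesis into an upper bound on the privacy SNR. Once \eqref{eq:psnr_dp} is established, Theorem \ref{thm:main_converse} yields $1 + \texttt{SNR}_a \leq (1 + \texttt{SNR}_p)^2 \leq \left(1 + \eta/(\sigma^{*}(\epsilon))^2\right)^2$. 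Choosing $A$ and $B$ to be independent with $\mathbb{P}(A = \pm\sqrt{\eta}) = \mathbb{P}(B = \pm\sqrt{\eta}) = 1/2$ guarantees $\mathbb{E}[A^2] = \mathbb{E}[B^2] = \eta$, so Lemma \ref{lem:LMSE_SNR} applies with equality and gives $\LMSE(\mathcal{C}) = \eta^2/(1+\texttt{SNR}_a)$. Substituting the SNR bound yields $\LMSE(\mathcal{C}) \geq \eta^2(\sigma^{*}(\epsilon))^4 / (\eta + (\sigma^{*}(\epsilon))^2)^2$, matching the claim.

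The core step is to prove \eqref{eq:psnr_dp}. Fix any subset $\mathcal{T} = \{i_1,\ldots,i_t\} \subseteq \{1,\ldots,N\}$; I would show $\texttt{SNR}^A_\mathcal{T} \leq \eta/(\sigma^{*}(\epsilon))^2$, with the symmetric argument handling $\texttt{SNR}^B_\mathcal{T}$. Any linear estimator of $A$ from the adversarial view can be written as $\hat{A} = \sum_j w_j \tilde{A}_{i_j} + w_0 = cA + V + w_0$, where $c = \sum_j w_j a_{i_j}$ and $V = \sum_j w_j \tilde{R}_{i_j}$ is independent of $A$. By the $t$-node $\epsilon$-DP hypothesis, $(\tilde{A}_{i_1},\ldots,\tilde{A}_{i_t})$ is $\epsilon$-DP with respect to $A$; by post-processing, so is $\hat{A}$. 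When $c \neq 0$, an affine change of the Borel set through division by $c$ shows that the noise $V/c$ lies in $\mathcal{S}_\epsilon(\mathbb{P})$, and therefore $\mathrm{Var}(V) \geq c^2 (\sigma^{*}(\epsilon))^2$ by the optimal single-user noise characterization of \cite{Pramod2015}. Hence the MSE of the estimator is lower bounded by $(c-1)^2 \eta + c^2(\sigma^{*}(\epsilon))^2$, whose minimum over $c \in \mathbb{R}$ (attained at $c^{*} = \eta/(\eta + (\sigma^{*}(\epsilon))^2)$) equals $\eta(\sigma^{*}(\epsilon))^2 / (\eta + (\sigma^{*}(\epsilon))^2)$; the degenerate case $c=0$ yields the larger value $\eta$. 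Since this lower bound holds for every linear estimator, Lemma \ref{lem:theSNRlemma} gives $\eta/(1+\texttt{SNR}^A_\mathcal{T}) \geq \eta(\sigma^{*}(\epsilon))^2/(\eta + (\sigma^{*}(\epsilon))^2)$, which rearranges to $\texttt{SNR}^A_\mathcal{T} \leq \eta/(\sigma^{*}(\epsilon))^2$. Taking the maximum over $\mathcal{T}$ and combining with the symmetric bound for $B$ establishes \eqref{eq:psnr_dp}.

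The main obstacle is the rescaling step that turns post-processed $\epsilon$-DP of $cA + V + w_0$ into membership of $V/c$ in $\mathcal{S}_\epsilon(\mathbb{P})$: one must carefully verify that the DP likelihood ratio is invariant under the affine change of variable $\mathcal{A} \mapsto (\mathcal{A} - w_0)/c$ (handling the sign of $c$) and then invoke Theorem 7 of \cite{Pramod2015} to deduce $\mathrm{Var}(V/c) \geq (\sigma^{*}(\epsilon))^2$. Once this conversion is set up, the remaining reasoning is a one-dimensional optimization over $c$ together with direct applications of the previously established Theorem \ref{thm:main_converse} and Lemmas \ref{lem:theSNRlemma}, \ref{lem:LMSE_SNR}.
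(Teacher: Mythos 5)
Your proposal is correct and uses the same essential ingredients as the paper's proof: the chain $t$-node $\epsilon$-DP $\Rightarrow$ (by post-processing and the optimality of $\sigma^*(\epsilon)$) a bound relating the privacy SNR to $\eta/(\sigma^*(\epsilon))^2$ $\Rightarrow$ (by Theorem \ref{thm:main_converse}) a bound on $\texttt{SNR}_a$ $\Rightarrow$ (by Lemma \ref{lem:LMSE_SNR} at a saturating distribution) the stated $\LMSE$ bound. The only difference is directional and cosmetic: you unroll the argument forward from $\epsilon$ through $\texttt{SNR}_p$ to $\texttt{SNR}_a$, isolating $\texttt{SNR}_p(\mathcal{C}) \leq \eta/(\sigma^*(\epsilon))^2$ as a reusable intermediate lemma, whereas the paper fixes $\texttt{SNR}_a$, deduces via Theorem \ref{thm:main_converse} that some $\mathcal{S}$ has $\texttt{SNR}_\mathcal{S}^{(A)} \geq \sqrt{1+\texttt{SNR}_a}-1$, constructs the adversary's unbiased linear estimate $A+Z$ with $\mathbb{E}[Z^2] \leq \eta/(\sqrt{1+\texttt{SNR}_a}-1)$, and then applies the same post-processing argument to conclude $(\sigma^*(\epsilon))^2 \leq \mathbb{E}[Z^2]$. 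Your version is a touch cleaner in handling the rescaling ($c \neq 0$ versus $c = 0$) explicitly; the paper implicitly handles this via Lemma \ref{lem:theSNRlemma}.
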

In fact, our converse shows so long as $\mathbb{E}[A^2]=\mathbb{E}[B^2]=\eta,$ the lower bound of the above corollary is satisfied. Theorem \ref{thm:main_converse} and Corollary \ref{cor:converse} are shown in Section \ref{app:converse}. 

\section{Achievability: Proofs of Theorem \ref{thm:main_achievability} \vc{and Corollary \ref{cor:achievability}}} \label{sec:achievability}
To prove Theorem \ref{thm:main_achievability}, it suffices to consider the case where $N=t+1$. In our achievable scheme, we assume that node $i$ receives: 
$$\Gamma_i = \left[A~R_1~R_2~\ldots~R_t\right] \vec{v}_{i},$$
$$\Theta_i = \left[B~S_1~S_2~\ldots~S_t\right] \vec{w}_{i}.$$

\noindent where $\vec{v}_{i},\vec{w}_{i}$ are $(t+1) \times 1$ vectors. We assume that $R_i\big|_{i=1}^{t}, S_{i}\big|_{i=1}^{t}$ are zero mean unit variance statistically independent random variables.  Node $i$ performs the computation  
$$\tilde{C}_i = \Gamma_i \Theta_i.$$

Our achievable coding scheme prescribes the choice of vectors $\vec{v}_{i},\vec{w}_{i}$. Then, we analyze the achieved privacy and accuracy. Our proof for $t > 1$ is a little bit more involved than the proof for $t=1.$ The description below applies for all cases for $t,$ and includes the simplifications that arise for the case of $t=1.$

\vc{We describe the coding scheme in Sec. \ref{sec:coding_scheme}. The privacy and accuracy analyses of the coding schemes are respectively placed in Sections \ref{sec:privacy_analysis} and \ref{sec:accuracy} respectively. Corollary \ref{cor:achievability} is proved in Sec. \ref{app:DP}.}

\subsection{Description of Coding Scheme}
\label{sec:coding_scheme} 

\vc{ To prove the theorem, for a fixed parameter $\texttt{SNR}_{p} > 0$, we develop a sequence of coding schemes indexed by parameter $n$, such that each coding scheme has an privacy SNR at least $\texttt{SNR}_p$. Further, we show that for any constant $\delta > 0$, there exists a sufficiently integer $n$ such that the $n$-th coding scheme in the sequence satisfies $(1+\texttt{SNR}_a) \geq (1+\texttt{SNR}_p)^2 - \delta.$}
Let $\alpha_1^{(n)},\alpha_2^{(n)}$ be strictly positive  sequences such that:

\begin{equation} 
    \lim_{n \to \infty} \frac{\alpha_1^{(n)}}{\alpha_2^{(n)}} = 
\lim_{n \to \infty} \alpha_2^{(n)} = \lim_{n\to\infty} \frac{(\alpha_2^{(n)})^2}{\alpha_1^{(n)}} = 0 
\label{eq:limits}
\end{equation} 
Notice that the above automatically imply that $\lim_{n\to\infty} \alpha_1^{(n)} = 0$. As an example, $\alpha_{2}^{(n)}$ can be chosen to be an arbitrary sequence of positive real numbers that converge to $0$, and we can set $\alpha_{1}^{(n)} = (\alpha_{2}^{(n)})^{1.5}$; indeed, this is the choice depicted in Fig. \ref{fig:code_construction} in Sec. \ref{sec:summary}.

For $t > 1$, let $\mathbf{G}=\begin{bmatrix}\vec{g}_{1}&\vec{g}_2& \ldots & \vec{g}_{t}\end{bmatrix}$ be any $(t-1)\times t$ matrix such that:
\begin{enumerate}
\item[(C1)] every $(t-1) \times (t-1)$ sub-matrix is full rank, 
\item[(C2)] $\begin{bmatrix}
    1 & 1 & \ldots & 1\\
   \vec{g}_1 & \vec{g}_2 & \ldots & \vec{g}_t
\end{bmatrix}$ has a full rank of $t.$
\end{enumerate}

For $t > 1$, the $n$th member of the sequence of coding schemes sets
\begin{eqnarray}
     \vec{v}_{t+1} =\vec{w}_{t+1} &=& \begin{bmatrix}1 \\ x \\ 0 \\ \vdots \label{eq:ach_v1_tgreat1}\\ 0\end{bmatrix},\\ \vec{v}_{i} = \vec{w}_{i} &=& \vec{v}_{t+1} +  \begin{bmatrix}0 \\ \alpha_1^{(n)} \\ \alpha_{2}^{(n)} \vec{g}_{i}\end{bmatrix}, 1 \leq i \leq t,\label{eq:ach_v2_tgreat1}
 \end{eqnarray}
and for $t=1$, we simply have
\begin{eqnarray}
    \vec{v}_{2} = \vec{w}_{2} &=& \begin{bmatrix}1 \\ x \end{bmatrix} \label{eq:ach_v2_teq1},\\ \vec{v}_{1} = \vec{w}_{1} &=& \vec{v}_{2} + \begin{bmatrix}0 \\ \alpha_1^{(n)}R_1 \end{bmatrix},\label{eq:ach_v2_teq2}
\end{eqnarray}

\noindent where $x > 0$ is a parameter whose role becomes clear next. In the pictorial description of our coding scheme in Fig. \ref{fig:code_construction} in Sec. \ref{sec:summary}, we selected $\mathbf{G}=\begin{bmatrix} 1 & -1 \end{bmatrix}$. 

\vc{In the sequel, we denote a function $f(n) = O(g{(n)})$ if $\lim_{n \to \infty} \left|\frac{f(n)}{g(n)}\right| < \infty.$ We treat $\alpha_1^{(n)},\alpha_2^{(n)}$ as functions of $n$.}


    

\subsection{Privacy Analysis}
\label{sec:privacy_analysis}
\underline{Informal privacy analysis:} For expository purposes, we first provide a coarse privacy analysis with informal reasoning.
With the above scheme, we claim that 
$\texttt{SNR}_{p} \approx \eta/x^2,$ and so, it suffices to choose $x \approx \sqrt{\frac{\eta}{\texttt{SNR}_p}}.$ Consider $A$'s privacy constraint, we require $\texttt{SNR}_{\mathcal{S}}^{A} \leq \texttt{SNR}_{p}$ for every $\mathcal{S} \subset \{1,2,\ldots,N\}, |\mathcal{S}|=t$. First we consider the scenario where $\mathcal{S}=\{1,2,\ldots,t\}$. Each node's input is of the form $A + R_1(x+ \alpha_1^{(n)}) + \alpha_2^{(n)} \begin{bmatrix}R_2& R_3 &\ldots & R_t\end{bmatrix}\vec{g}_{i}.$
Even if an adversary with access to the inputs to nodes in $\mathcal{S}$ happens to know $R_2,R_3,\ldots,R_{t}$, but not $R_1,$ the noise $(x+\alpha_1^{(n)}) R_1$ provides enough privacy, that is the privacy signal to noise ratio for this set is $\approx \eta/x^2$ for sufficiently large $n$.  

Now, consider the cases where the set $\mathcal{S}$ of $t$ colluding adversaries includes node $t+1$. In this case, the adversary has $A+R_1x$ from node $t+1.$ The other $t-1$ colluding nodes have inputs: $A + R_1(x+ \alpha_1^{(n)}) + \alpha_2^{(n)} \begin{bmatrix}R_2 & R_3 & \ldots R_3 & \ldots & R_t\end{bmatrix}\vec{g}_{i}$ for $i \in \mathcal{S}-\{t+1\}.$
Informally, this can be written as $A+R_1x + R_1 \alpha_1^{(n)} + \Omega(\alpha_2^{(n)})Z_i,$ for some random variable $Z_i$ with variance $\Theta(1)$.

On the one hand, observe that these $t-1$ nodes contain a linear combination of $A,R_1$ that is linearly independent of the input to the $(t+1)$-th node (which is $A+xR_1$).  
It might seem possible for the adversary to increase its signal-to-noise ratio beyond $\frac{\eta}{x^2}$ by combining the input of these $t-1$ nodes with node $t+1$'s input. However, observe crucially that the first layer of the inputs to these $t-1$ nodes is linearly \emph{dependent} with $t+1$'s input. The privacy signal-to-noise ratio can be  increased by a non-negligible extent at the adversary only if it is able to access information in the third layer. Since $|\alpha_2^{(n)}| \gg |\alpha_1^{(n)}|$, in order to access the information in the third layer and reduce/cancel the effect of $R_1$, the adversary must first be able to cancel the second layer terms whose magnitude is $\Omega(\alpha_2^{(n)})$. But these second layer terms are a combination of $t-1$ independent noise variables $R_2,R_3,\ldots,R_t$ that are modulated by linearly independent vectors. Hence, any non-trivial linear combination of these $t-1$ inputs necessarily contains a non-zero $\Omega(\alpha_2^{(n)})$ additive noise term.  So, their effect cannot be canceled and the $\alpha_1^{(n)} R_1$ term in the third layer is hidden from the decoder (See Fig. \ref{fig:code_construction}). Consequently, as $n \to \infty$, the input to these $t-1$ nodes is, approximately, a statistically degraded version of $A+xR_1$. Therefore, the parameter $\texttt{SNR}_p$ cannot be increased beyond $\frac{\eta}{x^2}.$ 

\underline{Formal privacy analysis:} We now present a formal privacy analysis.
We show that for any $\delta > 0$, by taking $n$ sufficiently large, we can ensure that: \begin{equation}\max(\texttt{SNR}_{\mathcal{S}}^{(A)}, \texttt{SNR}_{\mathcal{S}}^{(B)}) \leq \frac{\eta}{x^2} + \delta \label{eq:SNRpbound_ach}\end{equation}
for every subset $\mathcal{S}$ of $t$ nodes. Because of the symmetry of the coding scheme, it suffices to show that $\texttt{SNR}_{\mathcal{S}}^{(A)}$ satisfies the above relation. In our analysis, we will repeatedly use the fact that any linear combination $\sum_{i \in \mathcal{S}}\beta_i \tilde{A}_{i}$ of the inputs to the adversary satisfies:
$$ \mathbb{E}\left[\left(\left(\sum_{i\in\mathcal{S}} \beta_i \tilde{A}_{i}\right)-A\right)^2\right] \geq \frac{\eta}{1+\texttt{SNR}_{\mathcal{S}}^{(A)}} $$

First consider the case where $t+1 \notin \mathcal{S}$. For each $i \in \mathcal{S},$ the input $\tilde{A}_{i}$ is of the form $A+(x+\alpha_1^{(n)})R_1 + Z_i,$ where $Z_i$ is zero mean random variable that is statistically independent of $(R_1,A).$ Therefore, we have:
\begin{align*}
&\inf_{\beta_i \in \mathbb{R}, i \in \mathcal{S}} 
  \mathbb{E} \left[\left(\left(\sum_{j\in\mathcal{S}} \beta_j \tilde{A}_{j}\right)-A\right)^2\right]\\
&  \vc{ \inf_{\beta_i \in \mathbb{R}, i \in \mathcal{S}} 
  \mathbb{E} \left[\left(\left(\sum_{j\in\mathcal{S}} \beta_j\right)(A+(x+\alpha_1^{(n)})R_1)-A\right)^2 + \sum_{j\in\mathcal{S}}\beta_j^2 \mathbb{E}[Z_i^2]\right] }\\
&\stackrel{(a)}{\geq} \inf_{\beta \in \mathbb{R}} \mathbb{E}\left[\left(\beta(A+(x+\alpha_1^{(n)})R_1)-A\right)^2\right]\\
&= \frac{\eta}{1+\frac{\eta}{(x+\alpha_1^{(n)})^2}} \geq \frac{\eta}{1+\frac{\eta}{x^2}}.
\end{align*} 
where $(a)$ holds because $\sum_{j \in \mathcal{S}}\beta_j^2\mathbb{E}[Z_j^2$ is non-negative, and hence can be dropped to obtain a bound.
Consequently: $\texttt{SNR}_{\mathcal{S}}^{(A)} \leq \frac{\eta}{x^2}.$

Now consider the case: $t+1 \in \mathcal{S}$. Consider a linear estimator:
\begin{align*} &\hat{A}
= &\beta_{t+1} (A + xR_{1}) + \sum_{i \in \mathcal{S} \setminus \{t+1\}} \beta_{i} \left( A+ R_{1}(x+\alpha_1^{(n)})+ \alpha_{2}^{(n)} \begin{bmatrix}R_2& R_3 & \ldots & R_t\end{bmatrix} \vec{g}_{i}\right)\\
&=&A\left(\sum_{i \in \mathcal{S}} \beta_{i}\right) +  R_{1} \left(x \sum_{i \in \mathcal{S}} \beta_{i} + \alpha_1^{(n)} \sum_{i\in \mathcal{S} \setminus \{t+1\}} \beta_i\right)+ \alpha_{2}^{(n)} \begin{bmatrix}R_2& R_3  & \ldots & R_t\end{bmatrix}\left(\sum_{i\in\mathcal{S} \setminus \{ t+1\}}\beta_i \vec{g}_{i}\right)
\end{align*}

Because of property (C1), there are only two possibilities: (i) $\beta_i = 0,$ for all $i\in \mathcal{S}\setminus \{t+1\}$, or (ii) $\left(\sum_{i\in\mathcal{S} \setminus \{t+1\}}\beta_i \vec{g}_{i}\right) \neq 0$. In the former case, the linear combination is $\hat{A} = \beta_{t+1}(A_{t+1} + x R_1)$ from which, the best linear estimator has signal to noise ratio $\eta/x^2$ as desired. Consider the latter case, let $\rho > 0$ be the smallest singular value among the singular values of all the $(t-1) \times (t-1)$ sub-matrices of $\mathbf{G}$. We bound the noise power of $\hat{A}$ below; in these calculations, we use the fact that $R_i$ are zero-mean unit variance uncorrelated random variables for  $i=1,2,\ldots,t$.
\begin{align*}
	&\left(x \sum_{i \in \mathcal{S}} \beta_{i} + \alpha_1^{(n)} \sum_{i\in \mathcal{S} \setminus \{t+1\}} \beta_i\right)^2 + (\alpha_2^{(n)})^2 \mathbb{E}\left[ \left(\begin{bmatrix}R_2& R_3 & \ldots & R_t\end{bmatrix} \sum_{i\in\mathcal{S}\setminus\{t+1\}}\beta_i \vec{g}_{i}\right)^2\right]\\
&= \left(x \sum_{i \in \mathcal{S}} \beta_{i} + \alpha_1^{(n)} \sum_{i\in \mathcal{S}\setminus\{t+1\}} \beta_i\right)^2 + (\alpha_2^{(n)})^2 \left|\left|\sum_{i\in\mathcal{S}\setminus\{t+1\}}\beta_i \vec{g}_{i} \right|\right|^2 \\
& \geq \left(x \sum_{i \in \mathcal{S}} \beta_{i} + \alpha_1^{(n)} \sum_{i\in \mathcal{S}\setminus\{t+1\}} \beta_i\right)^2 + (\alpha_2^{(n)})^2 \rho^2 \sum_{i\in\mathcal{S}\setminus\{t+1\}}\beta_i^2 
\end{align*}

The signal-to-noise ratio for signal $A$  in $\hat{A}$ can be bounded as:
\begin{eqnarray*}
&&\frac{\eta \left(\sum_{i \in \mathcal{S}} \beta_{i}\right)^2 }{\left(x \sum_{i \in \mathcal{S}} \beta_{i} + \alpha_1^{(n)} \sum_{i\in \mathcal{S} \setminus \{t+1\}} \beta_i\right)^2 + (\alpha_2^{(n)})^2 \rho^2 \sum_{i\in\mathcal{S} \setminus\{t+1\}}\beta_i^2  }\\
& \stackrel{(a)}{\leq} &\frac{\eta \left(\sum_{i \in \mathcal{S}} \beta_{i}\right)^2 }{x^2 \left(\sum_{i \in \mathcal{S}} \beta_{i}\right)^2 + 2 x \alpha_1^{(n)} \left(\sum_{i \in \mathcal{S} \setminus \{t+1\}} \beta_i\right)\left(\sum_{i\in \mathcal{S}}\beta_i\right) + (\alpha_2^{(n)})^2 \rho^2 \sum_{i\in\mathcal{S} \setminus \{t+1\}}\beta_i^2  }\\
& = &\frac{\eta}{x^2 + 2 x \alpha_1^{(n)} \nu_1 + (\alpha_2^{(n)})^2 \rho^2 \nu_2^2 } 
\stackrel{(b)}{\leq} \frac{\eta}{x^2 - 2 x \alpha_1^{(n)} \sqrt{t} \nu_2 + (\alpha_2^{(n)})^2 \rho^2 \nu_2^2 } \stackrel{(c)}{\leq} \frac{\eta}{x^2 - \frac{(\alpha_1^{(n)})^2}{(\alpha_2^{(n)})^2} \frac{x^2 t}{\rho^2},}
\end{eqnarray*}

\vc{ where we have used the notation $\nu_1 \stackrel{\Delta}{=} \frac{\sum_{i \in \mathcal{S}\backslash \{t+1\}}\beta_i}{\sum_{i\in \mathcal{S}}\beta_i}$ and $\nu_2^2 \stackrel{\Delta}{=}\frac{\sum_{i\in\mathcal{S} \setminus \{t+1\}}\beta_i^2 }{(\sum_{i\in \mathcal{S}}\beta_i)^2}$}. The upper bound of (a) holds because we have replaced the denominator by a smaller quantity.
In (b), we have used the fact that $\nu_1^2 \leq t \nu_2^2$ and consequently $-\sqrt{t} \nu_2 \leq \nu_1 \leq \sqrt{t} \nu_2 .$ (c) holds because $$ \inf_{\nu_2}(\alpha_2^{(n)})^2 \rho^2 \nu_2^2 -  2 x \alpha_1^{(n)} \sqrt{t} \nu_2 = - \frac{x^2 (\alpha_1^{(n)})^2 t}{(\alpha_2^{(n)})^2 \rho^2}.$$ As $n \to \infty,$ (\ref{eq:limits}) implies that  $\frac{(\alpha_1^{(n)})^2}{(\alpha_2^{(n)})^2} \to 0$, and consequently, for any $\delta > 0,$ we can choose a sufficiently large $n$ to ensure that the right hand side of $(c)$ can be made smaller than $\frac{\eta}{x^2} + \delta$. Thus, for sufficiently large $n$, $\texttt{SNR}_p \leq \frac{\eta}{x^2}+\delta$ for any $\delta > 0.$ 


\subsection{Accuracy Analysis}
\label{sec:accuracy}
\vc{To complete the proof of Theorem \ref{thm:main_achievability}, because of \eqref{eq:SNRpbound_ach},} it suffices to show that for any $\delta > 0,$ we can obtain $\texttt{SNR}_{a} \geq \frac{(x^2+\eta)^2}{x^4}-1-\delta$ 

\underline{Informal Accuracy Analysis:} We provide a high-level description of the accuracy analysis for the case of $t=2,N=3.$ We assume that $\mathbf{G} = [1~~-1]$ like in Fig. \ref{fig:code_construction}. The computation outputs of the three nodes are:
 \begin{align*}
     \widetilde{C}_1 &=  (A+R_1(x+\alpha_1^{(n)}) + \alpha_2^{(n)}R_2 )(B+S_1(x+\alpha_1^{(n)})+\alpha_2^{(n)}R_2)\\&
     = \underbrace{\widetilde{C}_3}_{\text{First layer}} + \underbrace{\alpha_2^{(n)}(S_2(A+R_1x)+R_2(B+S_1x))}_{\text{Second Layer}} + \underbrace{\alpha_1^{(n)}(S_1(A+R_1x)+R_1(B+S_1x))}_{\text{Third Layer}} + \underbrace{O(\alpha_1^{(n)} \alpha_2^{(n)})}_{\text{Fourth Layer}}\\
     \widetilde{C}_2 &=(A+R_1(x+\alpha_1^{(n)}) - \alpha_2^{(n)}R_2 )(B+S_1(x+\alpha_1^{(n)})-\alpha_2^{(n)}S_2)\\&
     = \underbrace{\widetilde{C}_3}_{\text{First layer}} - \underbrace{\alpha_2^{(n)}(S_2(A+R_1x)+R_2(B+S_1x))}_{\text{Second Layer}}  + \underbrace{\alpha_1^{(n)}(S_1(A+R_1x)+R_1(B+S_1x))}_{\text{Third Layer}} + \underbrace{O(\alpha_1^{(n)} \alpha_2^{(n)})}_{\text{Fourth Layer}}\\
     \widetilde{C}_3 &= (A+R_1x)(B+S_1x)
 \end{align*}
In effect, at nodes $1$ and $2,$ the computation output can be interpreted as a superposition of at least $4$ layers. The first layer has magnitude $\Theta(1),$ the second $\Theta(\alpha_2^{(n)}),$ the third $\Theta(\alpha_1^{(n)}),$ and the remaining layers have magnitude $O(\alpha_1^{(n)} \alpha_2^{(n)}).$ The decoder can eliminate the effect of the second layer by computing:
$$ \overline{C} = \frac{\widetilde{C}_1+\widetilde{C}_2}{2} = (A+R_1(x+\alpha_1^{(n)}))(B+S_1(x+\alpha_1^{(n)})) + o(\alpha_1^{(n)})$$
Notice that the decoder also has access to $\tilde{C}_3 = (A+R_1x)(B+S_1x)$. From $\overline{C}$ and $\tilde{C}_3,$ the decoder can compute:

\vc{$$\overline{\overline{C}} = \frac{\overline{C}-\tilde{C}_3}{\alpha_1^{(n)}}  = (AS_1+BR_1) + 2R_1S_1x + O(\alpha_1^{(n)}) $$}
A simple calculation of the noise covariance matrix reveals that from $\overline{\overline{C}}$ and $C_3$, the decoder can compute $AB$ with $\texttt{SNR}_a \approx \frac{\eta^2}{x^4}+\frac{2 \eta }{x^2}$ as desired.

\noindent \underline{Formal Accuracy Analysis:} To show the theorem statement, it suffices to show that for any $\delta > 0,$ we can achieve $\texttt{SNR}_{a} > 2 \frac{\eta}{x^2}+\frac{\eta^2}{x^4}-\delta$ for a sufficiently large $n$.
We show this next by constructing a specific linear combination of the observations that achieves the desired signal-to-noise ratio. Observe that with our coding scheme, the nodes compute: 
$$ \Gamma_{t+1}\Theta_{t+1} =  (A+R_1 x)(B+S_1 x)$$
and, for $i=1,\ldots,t$:
\begin{align*} \Gamma_i \Theta_i &= (A+R_1(x+\alpha_1^{(n)}))(B+S_1(x+\alpha_1^{(n)})) + \alpha_2^{(n)}\bigg( (A+R_1(x+\alpha_1^{(n)}))\begin{bmatrix}S_2&\ldots & S_t\end{bmatrix} \\&+  (B+S_1(x+\alpha_1^{(n)}))\begin{bmatrix}R_2&\ldots & R_t\end{bmatrix}\bigg) \vec{g}_{i} + O((\alpha_2^{(n)})^2) \end{align*}   

Let $\gamma_1, \gamma_2,\ldots,\gamma_t$ be scalars, not all equal to zero, such that 
$ \sum_{i=1}^{t} \gamma_i \vec{g}_i = 0.$
Because $\vec{g}_{i}$ are $t-1$ dimensional vectors, they are linearly dependent, and such scalars indeed do exist.  Condition (C2) implies that $\sum_{i=1}^{t} \gamma_i \neq 0.$ With appropriate rescaling if necessary, we assume $\sum_{i=1}^{t} \gamma_i =1.$ The decoder computes:
$\tilde{\Gamma} \tilde{\Theta} \stackrel{\Delta}{=} \sum_{i=1}^{t} \gamma_i \Gamma_i \Theta_i,$ which is equal to:
$$\tilde{\Gamma}\tilde{\Theta} = (A+R_1(x+\alpha_1^{(n)}))(B+S_1(x+\alpha_1^{(n)})) + O((\alpha_2^{(n)})^2).$$

Then, the signal-to-noise ratio achieved is at least that obtained by using the signal and noise covariance matrices of \begin{eqnarray}\Gamma_{t+1}\Theta_{t+1} &=& AB+x(AS_1+BR_1) + R_1S_1 x^2 \label{eq:ach-decoder-processing1} \\ \tilde{\Gamma}\tilde{\Theta} &=&  AB + (x+\alpha_1^{(n)})(AS_1+BR_1) + (x+\alpha_1^{(n)})^2 R_1 S_1 + O((\alpha_2^{(n)})^2).\label{eq:ach-decoder-processing2}\end{eqnarray}The analysis is done in equations (\ref{eq:SNR1})-(\ref{eq:SNR3}) next:

\begin{align}\texttt{SNR}_a &\geq \frac{\begin{vmatrix}\eta^2+2 \eta x^2 + x^4  & \eta^2 + 2\eta x(x+\alpha_1^{(n)}) + x^2(x+\alpha_1^{(n)})^2 \\\eta^2 + 2\eta x(x+\alpha_1^{(n)}) + x^2(x+\alpha_1^{(n)})^2 & \eta^2 + 2\eta (x+\alpha_1^{(n)})^2 + (x+\alpha_1^{(n)})^4 + O((\alpha_2^{(n)})^4)\end{vmatrix}}{\begin{vmatrix}2\eta x^2 + x^4  & 2\eta x(x+\alpha_1^{(n)}) + x^2(x+\alpha_1^{(n)})^2\\ 2\eta x(x+\alpha_1^{(n)}) + x^2(x+\alpha_1^{(n)})^2 &  2\eta (x+\alpha_1^{(n)})^2 + (x+\alpha_1^{(n)})^4 + O((\alpha_2^{(n)})^4) \end{vmatrix}}-1\label{eq:SNR1}\\ 
&= \frac{4 \alpha_1^{(n)}x (\eta+2x^2) + 2(\eta+x^2)^2 + (\alpha_1^{(n)})^2(\eta+2x^2)+O((\alpha_2^{(n)})^4)}{2x^2(\alpha_1^{(n)}+x^2)+O((\alpha_2^{(n)})^4)}-1\label{eq:SNR3}
\end{align}
As $n \to \infty$, observe that $\alpha_1^{(n)}, \frac{(\alpha_2^{(n)})^2}{\alpha_1^{(n)}} \rightarrow 0.$ Using this in (\ref{eq:SNR3}), for any $\delta > 0$, there exists a sufficiently large $n$ to ensure that
$\texttt{SNR}_a \geq  \frac{2(x^2+\eta)^2 } {2x^4}-1 -\delta = \frac{\eta^2}{x^4}+\frac{2\eta }{x^2} - \delta.$ This completes the proof.


\subsection{Proof of Corollary \ref{cor:achievability}}
\label{app:DP}

Our proof involves a specific realization of random variables $R_1,R_2,\ldots,R_t,S_1,S_2,\ldots,S_t$ that satisfies the conditions of the achievable scheme in Section \ref{sec:coding_scheme}. We couple this with a refined differential privacy analysis. The accuracy analysis remains the same as in the proof of Section \ref{sec:accuracy}. Here, in our description, we focus on describing $R_1,R_2,\ldots,R_t$ and showing that the $t$-node DP privacy constraints are satisfied for input $A$. A symmetric argument applies for input $B$ also.

For a fixed DP parameter $\epsilon$, let $x= {\sigma^*(\epsilon)}+\delta',$ for $\delta' > 0$.  \vc{Section \ref{sec:accuracy}} shows that the scheme achieves accuracy $\texttt{SNR}_a \approx (1+\frac{\eta}{x^2})^2$, or more precisely:
$$ \LMSE(\mathcal{C}) \leq \frac{\eta^2 (\sigma^*(\epsilon))^4}{(\eta+(\sigma^*(\epsilon))^2)^2} + \delta$$
by choosing $\delta'$ sufficiently small. Now, it remains to show that a specific realization of $R_1,R_2,\ldots,R_t$ achieves $\epsilon$-DP. For a fixed value of parameter $x$, let $\vc{\epsilon^{**}}$ be:
\begin{equation}\vc{\epsilon^{**}} = \inf_{Z, E[Z^2]\geq 1} \sup_{\mathcal{A},A_0,A_1 \in \mathbb{R},|{A}_{0}-A_1| \leq 1} \ln \left(\frac{\mathbb{P}(A_0+x Z) \in \mathcal{A}}{\mathbb{P}(A_1+xZ) \in \mathcal{A}}\right)\label{eq:epsilonstarstar}\end{equation}
where the infimum is over all real-valued random variables $Z$ satisfying the variance\footnote{Note that choosing $E[Z]=0$ does not change the value of $\epsilon^{**},$ so we simply assume $E[Z]=0$ here as well.} constraint, and  $\ln$ denotes the natural logarithm. Notice here that the noise variance $E[(x^2Z^2)] = x^2$ is strictly larger than $(\sigma^{*}(\epsilon))^2$. Because $\sigma^{*}$ is a strictly monotonically decreasing function (see (\ref{eq:optimalsigma})), we have: $\epsilon^{**} < \epsilon$. \vc{Let $\epsilon^{*}$ be a real number such that $\epsilon^{**} < \epsilon^{*} < \epsilon.$ Let $Z^*$ be a random variable such that $\mathbb{E}[(Z^*)^2]<1$ and  where
$$\sup_{\mathcal{A},A_0,A_1 \in \mathbb{R},|{A}_{0}-A_1| \leq 1} \ln \left(\frac{\mathbb{P}(A_0+x Z^{*}) \in \mathcal{A}}{\mathbb{P}(A_1+xZ^{*}) \in \mathcal{A}}\right) \leq \epsilon^{*}.$$
That is, $A+xZ^{*}$ is $\epsilon^{*}$-DP. Note that equation (\ref{eq:epsilonstarstar}) implies that  existence of random variable $Z^{*}$ satisfying the above relation.}
 We now consider the achievable scheme of Section \ref{sec:coding_scheme} with $R_{1}$ taking the same distribution as $Z^*.$ We let $R_2,R_3,\ldots,R_t$ to be independent unit variance Laplace random variables that are independent of $R_1$. Note that by construction, 
\begin{equation}
    \sup_{\mathcal{A} \in \mathbb{R},-1 \leq \lambda \leq 1} \frac{\mathbb{P}\left(A+xR_1  \in \mathcal{A} \right)}{\mathbb{P}\left(A+xR_1+\lambda\in \mathcal{A} \right)} \leq e^{\epsilon^{*}} \label{eq:ach_singlenode_DP}
\end{equation}

Here, we are considering an adversary that is aiming to learn $A$ from $\tilde{A}_{i}, i \in \mathcal{S}$ which for every $t$-sized  subset $\mathcal{S}$. Let $\overline{\epsilon}_{n}$ be the parameter such that the coding scheme specified achieves $t$-node $\overline{\epsilon}_n$-DP. We show that as $n \to \infty$, $\overline{\epsilon}_n \to \epsilon^{*},$ thus showing that for sufficiently large $n$, our scheme achieves $t$-node $\epsilon$-DP . 

First, consider the case where $t+1 \notin \mathcal{S}$. For each $i \in \mathcal{S},$ the input $\tilde{A}_{i}$ is of the form $A+R_1(x+\alpha_1^{(n)})R_1 + Z_i,$ where $Z_i$ is zero mean random variable that is statistically independent of $R_1.$ Therefore, $$ A \rightarrow A+(x+\alpha_1^{(n)}) R_1 \rightarrow \{\tilde{A}_{i}: {i\in \mathcal{S}}\}$$
forms a Markov chain. Using post-processing property and the notation of (\ref{eq:DPdefY}),(\ref{eq:DPdefZ}), we note that 
\begin{eqnarray*}&& \sup_{\mathcal{A} \in \mathbb{R}^{t}} \frac{\mathbb{P}\left(\mathbf{Y}^{(0)}_{\mathcal{S}} \in \mathcal{A} \right)}{\mathbb{P}\left(\mathbf{Y}^{(1)}_{\mathcal{S}} \in \mathcal{A} \right)} \\
&\leq &\sup_{\mathcal{A} \in \mathbb{R},-1 \leq \lambda \leq 1} \frac{\mathbb{P}\left(A+(x+\alpha_1^{(n)})R_1  \in \mathcal{A} \right)}{\mathbb{P}\left(A+(x+\alpha_1^{(n)})R_1+\lambda\in \mathcal{A} \right)}\\ 
&=& \sup_{\mathcal{A} \in \mathbb{R},-\frac{x}{x+\alpha_1^{(n)}} \leq \lambda \leq \frac{x}{x+\alpha_1^{(n)}}} \frac{\mathbb{P}\left(A+xR_1  \in \mathcal{A} \right)}{\mathbb{P}\left(A+xR_1+\lambda\in \mathcal{A} \right)}\\
& \leq & e^{\epsilon^{*}}
\end{eqnarray*}
where, in the final inequality, we have used (\ref{eq:ach_singlenode_DP}) combined with the fact that: $0 < \frac{x}{x+\alpha_1^{(n)}} \leq 1$ Thus, the input to the adversary follows $t$ node $\epsilon^*$-DP.

Now, consider the case where $t+1 \in \mathcal{S}$.
To keep the notation simple, without loss of generality, we assume that $\mathcal{S}=\{2,3,\ldots,t+1\}.$ We denote $$ \mathbf{G}_{2:t} = \begin{bmatrix}\vec{g}_{2} & \vec{g}_{3} & \ldots & \vec{g}_{t} \end{bmatrix}$$ In this case, an adversary obtains \begin{align*}\vec{Z} &=& \left(A+R_1x, A+(x+\alpha_1^{(n)}) R_1 \right.\left.~+ \alpha_2^{(n)}\begin{bmatrix}R_2&R_3&\ldots&R_t\end{bmatrix}\mathbf{G}_{2:t}\right).\end{align*} Using the fact that $\mathbf{G}_{2:t}$ is invertible based on the property (C1) in Section \ref{sec:coding_scheme}, a one-to-one function on the adversary's input yields:
$$ \vec{Z'} = \left(A+R_1x, -\frac{\alpha_1^{(n)}}{\alpha_{2}^{(n)} x} A \vec{1} (\mathbf{G}_{2:t})^{-1}+ \begin{bmatrix}R_2&R_3&\ldots&R_t\end{bmatrix}\right),$$
where $\vec{1}$ is a $t-1 \times 1$ row vector.
Denoting $\vec{Z'} = (Z'_1, Z'_2,\ldots,Z'_t)$, observe that $Z_i' = \lambda_i A + \zeta_i R_i$ for some $\lambda_i,\zeta_i \in \mathbb{R}$.  By construction $Z'_1$ achieves $\epsilon^*$-DP. For $i \geq 2$, because $R_i$ is a unit variance Laplacian RV, $Z_i'$ is a privacy mechanism that achieves $\frac{\alpha_i}{\beta_i} \sqrt{2}$-DP with respect to input $A$. Because $R_1,R_2,\ldots,R_t$ are independent, $\vec{Z'}$ is an RV that achieves 
$\epsilon^{*} + \sqrt{2} \sum_{i=2}^{t}\frac{\alpha_i}{\beta_i}$-DP. Note that for $i \geq 2,$ $\lambda_i' = 1$ and $\zeta_i = \frac{\alpha_1^{(n)}g_i'}{\alpha_{2}^{(n)} x},$ where, $g_i'$ is the $i$-th element of $\vec{1} (\mathbf{G}_{2:t})^{-1})$. So, the adversary's input satisfies $t$-node $\epsilon^{*} + \sqrt{2} \sum_{i=2}^{t}\frac{\alpha_1^{(n)}g_i'}{\alpha_{2}^{(n)} x}$-DP. The proof is complete on noting that the DP parameter approaches $\epsilon^{*}$ as $n \to \infty$.

\section{Proofs of Theorem \ref{thm:main_converse} and Corollary \ref{cor:converse}}
\label{app:converse}
Recall that we consider a set up with $N$ computation nodes such that the input is private to any $t$ nodes, where $N \leq 2t.$ Consider any $N$-node secure multiplication coding scheme. \vc{ In the coding scheme, node $i$ receives inputs $\tilde{A}_{i} = a_iA + \tilde{R}_i, \tilde{B}_i = b_i B + \tilde{S}_i.$} Suppose that, for the achievable scheme, the $t$-node privacy signal-to-noise ratio $0 < \texttt{SNR}_p < \infty$ and accuracy signal-to-noise ratio $\texttt{SNR}_a.$ 
There exist uncorrelated, zero-mean, unit-variance random variables $\overline{R}_{1},\overline{R}_{2},\ldots,\overline{R}_{N}, \overline{S}_{1},\overline{S}_{2},\ldots,\overline{S}_{N}$ such that $A,B,\overline{R}_i\big|_{i=1}^{t}, S_{i}\big|_{i=1}^{t}$ are zero mean unit variance uncorrelated random variables, and the inputs to node $i$ are:
$$\Gamma_i = \left[\frac{A}{\sqrt{\eta}}~\overline{R}_1~\overline{R}_2~\ldots~\overline{R}_N\right] \vec{v}_{i}$$
$$\Theta_i = \left[\frac{B}{\sqrt{\eta}}~\overline{S}_1~\overline{S}_2~\ldots~\overline{S}_N\right] \vec{w}_{i}$$

where $\vec{v}_{i},\vec{w}_{i}$ are $N \times 1$ vectors\footnote{To see this, simply set $\begin{bmatrix} \overline{R}_{1}&\overline{R}_{2} &\ldots & \overline{R}_{N}\end{bmatrix} = \begin{bmatrix} \tilde{R}_{1}&\tilde{R}_{2} &\ldots & \tilde{R}_{N}\end{bmatrix}{\mathbf{K}^{-1/2}}$ where $\mathbf{K}$ is the $N \times N$ covariance matrix of $\begin{bmatrix} \tilde{R}_{1}&\tilde{R}_{2} &\ldots & \tilde{R}_{N}\end{bmatrix}$. $\overline{S}_{i}|_{i=1}^{N}$ can be found similarly.}.  Node $i$ performs the computation  
$$\tilde{C}_i = \Gamma_i \Theta_i,$$
and a decoder connects to the $N$ nodes and obtains:
$$\vc{d_i\widetilde{C} = \sum_{i=1}^{N}  \tilde{C}_i}$$
The error $\widetilde{C}-AB$ of the decoder can be written as:
$$\widetilde{C}-AB= \left[\frac{A}{\sqrt{\eta}}~\overline{R}_1~\overline{R}_2~\ldots~\overline{R}_N\right]\Delta \begin{bmatrix}\frac{B}{\sqrt{\eta}} \\ \overline{S}_1 \\ \overline{S}_2 \\ \vdots \\ \overline{S}_N\end{bmatrix} $$
where
\begin{equation} \Delta = \sum_{i=1}^{N} d_{i} \vec{v}_{i} \vec{w}_{i}^{T}  - \begin{bmatrix}\eta & 0 & \ldots & 0 \\ 0 & 0  & \ldots & 0 \\ & & \ddots & \\ 0 & 0 & \ldots & 0\end{bmatrix}\label{eq:Delta}\end{equation}

\vc{ Because of Lemma \ref{lem:LMSE_SNR}}, observe that, for the optimal choice of $d_1,d_2,\ldots,d_N$, \begin{equation}\mathbb{E}\left[|| \widetilde{C}-AB ||^2 \right] = || \Delta||_{F}^{2} = \frac{\eta^2}{1+\texttt{SNR}_a}.\label{eq:DeltaMMSE} \end{equation} We aim to lower bound $|| \Delta||_{F}^{2}.$  Our converse is a natural consequence of the following theorem.

\begin{theorem}
For any $N$ node secure coding scheme with $N \leq 2t$, for any set $\mathcal{S} \subset \{1,2,\ldots,N\}$ where $|\mathcal{S}|=t,$  we have:
 $$(1+\texttt{SNR}_a) \leq (1+\texttt{SNR}^{(A)}_{\mathcal{S}})(1+\texttt{SNR}^{(B)}_{\mathcal{S}^{c}}) $$
 By symmetry, we also have:
 $$(1+\texttt{SNR}_a) \leq (1+\texttt{SNR}^{(B)}_{\mathcal{S}})(1+\texttt{SNR}^{(A)}_{\mathcal{S}^{c}}) $$
 \label{thm:converse}
\end{theorem}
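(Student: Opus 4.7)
The plan is to bound $\|\Delta\|_F^2$ from below by sandwiching $\Delta$ between two orthogonal projectors that annihilate $\sum_{i} d_i \vec{v}_i \vec{w}_i^T$ while preserving a shadow of the ``signal'' term $\eta \vec{e}_1 \vec{e}_1^T$, where $\vec{e}_1$ denotes the standard basis vector corresponding to the coordinate of $A/\sqrt{\eta}$ (respectively $B/\sqrt{\eta}$) in the noise-augmented ambient space.

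\emph{Step 1: geometrize the privacy SNRs.} Since $[A/\sqrt{\eta},\overline{R}_1,\ldots,\overline{R}_N]^T$ is a vector of uncorrelated zero-mean unit-variance random variables, $\Gamma_{\mathcal{S}}$ has covariance $V_{\mathcal{S}}^T V_{\mathcal{S}}$, where $V_{\mathcal{S}}$ is the matrix whose columns are $\vec{v}_i$, $i \in \mathcal{S}$. Applying Lemma \ref{lem:theSNRlemma} to the linear estimation of $A/\sqrt{\eta} = \vec{e}_1^T [A/\sqrt{\eta},\overline{R}_1,\ldots]^T$ from $\Gamma_{\mathcal{S}}$ yields
$$\frac{1}{1+\texttt{SNR}_{\mathcal{S}}^{A}} = \|P_A^{\perp}\vec{e}_1\|^2,$$
where $P_A^{\perp} = I - V_{\mathcal{S}}(V_{\mathcal{S}}^T V_{\mathcal{S}})^{-1}V_{\mathcal{S}}^T$ is the orthogonal projector onto the complement of the column span of $V_{\mathcal{S}}$. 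An analogous identity holds for $\texttt{SNR}_{\mathcal{S}^c}^{B}$, with a projector $P_B^{\perp}$ built from the columns $\vec{w}_i$, $i \in \mathcal{S}^c$.

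\emph{Step 2: kill the double sum via asymmetric projection.} Multiplying $\Delta$ in \eqref{eq:Delta} by $P_A^{\perp}$ on the left and $P_B^{\perp}$ on the right, every term $d_i\vec{v}_i\vec{w}_i^T$ vanishes: the partition $\{1,\ldots,N\} = \mathcal{S} \sqcup \mathcal{S}^c$ places each index $i$ either in $\mathcal{S}$ (so $P_A^{\perp}\vec{v}_i = 0$) or in $\mathcal{S}^c$ (so $P_B^{\perp}\vec{w}_i = 0$). Only the signal term survives, leaving
$$P_A^{\perp}\Delta P_B^{\perp} = -\eta\,(P_A^{\perp}\vec{e}_1)(P_B^{\perp}\vec{e}_1)^T.$$

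\emph{Step 3: compare Frobenius norms.} Since orthogonal projection is a Frobenius-norm contraction,
$$\|\Delta\|_F^2 \;\geq\; \|P_A^{\perp}\Delta P_B^{\perp}\|_F^2 \;=\; \eta^2\,\|P_A^{\perp}\vec{e}_1\|^2\,\|P_B^{\perp}\vec{e}_1\|^2 \;=\; \frac{\eta^2}{(1+\texttt{SNR}_{\mathcal{S}}^{A})(1+\texttt{SNR}_{\mathcal{S}^c}^{B})}.$$
Combining with \eqref{eq:DeltaMMSE} yields the first claimed inequality. The second follows by symmetry, sandwiching $\Delta$ between the analogous projectors built from $\vec{v}_i$, $i \in \mathcal{S}^c$ on the left and $\vec{w}_i$, $i \in \mathcal{S}$ on the right.

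The main obstacle is spotting the right asymmetric projection: the outer-product structure $\vec{v}_i\vec{w}_i^T$ invites acting on the left and the right by \emph{different} projectors, one tied to $\mathcal{S}$ (through the $\vec{v}$-columns) and the other to $\mathcal{S}^c$ (through the $\vec{w}$-columns), so that the union of their kernels covers every index $i$. Once this is recognized, the algebra is almost mechanical, and the hypothesis $N \leq 2t$ enters only later when deducing Theorem \ref{thm:main_converse}: it guarantees $|\mathcal{S}^c| \leq t$, so that each factor on the right can be bounded above by $1+\texttt{SNR}_p$ (using monotonicity of the privacy SNR in the adversary's observation set).
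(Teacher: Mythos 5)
Your proof is correct, and it is a genuinely cleaner packaging of the same underlying geometry as the paper's argument. The paper constructs a single vector $\vec{\lambda}$ in the null space of $\{\vec{w}_i : i \in \mathcal{S}^c\}$ (Lemma \ref{lem:nullvector_SNR}), multiplies $\Delta$ by $\vec{\lambda}$ on the right so that the $\mathcal{S}^c$ terms drop out, and is then left with a vector of the form $\sum_{i\in\mathcal{S}} c_i\vec{v}_i - \lambda_1\eta\vec{e}_1$ whose norm it bounds below by a second invocation of the MMSE characterization of $\texttt{SNR}_\mathcal{S}^{(A)}$; this is chained via $\|\Delta\|_F \geq \|\Delta\|_2 \geq \|\Delta\vec{\lambda}\|/\|\vec{\lambda}\|$. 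Your version applies the orthogonal projectors on \emph{both} sides, so every term of $\sum_i d_i\vec{v}_i\vec{w}_i^T$ is annihilated in a single pass (the partition $\mathcal{S}\sqcup\mathcal{S}^c$ covers every index either through $P_A^\perp\vec{v}_i=0$ or $P_B^\perp\vec{w}_i=0$), and what survives is exactly a rank-one matrix $-\eta\,(P_A^\perp\vec{e}_1)(P_B^\perp\vec{e}_1)^T$ whose Frobenius norm factorizes. This replaces the paper's two separate estimation arguments with one symmetric algebraic step, and it replaces the spectral-norm intermediary with the more elementary Pythagorean contraction $\|P_A^\perp\Delta P_B^\perp\|_F \leq \|\Delta\|_F$. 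Note also that the paper's $\vec{\lambda}$ is literally $P_B^\perp\vec{e}_1$ (it is defined by $\underline{\vec{w}}+\vec{\lambda}=\vec{e}_1$ with $\underline{\vec{w}}$ the projection onto the column span), so the two proofs share the same extremal object; yours just applies the full projection operator rather than its action on $\vec{e}_1$ alone. Your observation that $N\leq 2t$ is not used in establishing the inequality itself, but only to conclude $\texttt{SNR}^{(B)}_{\mathcal{S}^c}\leq\texttt{SNR}_p$ from $|\mathcal{S}^c|\leq t$, is also correct and matches where the paper invokes that hypothesis.
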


For any coding scheme that that satisfies $\texttt{SNR}^{(A)}_{\mathcal{S}},\texttt{SNR}^{(B)}_{\mathcal{S}} \leq \texttt{SNR}_{p}$ for every subset $\mathcal{S}$ of $t$ nodes, the above theorem automatically implies the statement of Theorem \ref{thm:main_converse}, that is:
$$(1+\texttt{SNR}_a) \leq (1+\texttt{SNR}_p)^2 $$
The proof of Theorem \ref{thm:converse} depends on the following key lemma.

\begin{lemma}
For any set $\mathcal{S}$ of nodes with $|\mathcal{S}| \leq t,$ there exists a vector $$\vec{\lambda} = \left[\lambda_1~~\lambda_2~~\ldots~~\lambda_{N}\right]^T$$
such that 
\begin{enumerate}[(i)]
    \item $$ \vec{\lambda}^{T}\vec{w}_{i} = 0, \forall i \in \mathcal{S},$$ 
    \item $$ \frac{\lambda_1^2}{||\vec{\lambda}||^2} \geq \frac{1}{1+ \texttt{SNR}^{(B)}_{\mathcal{S}}}.$$
\end{enumerate}

Symmetrically, there exists a vector $$\vec{\theta}= \left[\theta_1~~\theta_2~~\ldots~~\theta_{t+1}\right]^T$$
for every subset $\mathcal{S}$ of nodes with $ |\mathcal{S}| \leq t$ such that 
\begin{enumerate}[(i)]
    \item[(iii)] $$ \vec{\theta}^{T} \vec{v}_{i} = 0, \forall i \in \mathcal{S},$$ 
    \item [(iv)]$$ \frac{\theta_1^2}{||\vec{\theta}||^2} \geq \frac{1}{1+ \texttt{SNR}^{(A)}_{\mathcal{S}}}.$$ 
\end{enumerate}
\label{lem:nullvector_SNR}
\end{lemma}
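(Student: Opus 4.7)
The plan is to construct $\vec{\lambda}$ as an explicit orthogonal projection and then identify its squared length with $1/(1+\texttt{SNR}_{\mathcal{S}}^{(B)})$ by pairing two expressions for the same LMMSE. Let $W_{\mathcal{S}}$ denote the matrix whose rows are $\vec{w}_i^T$ for $i \in \mathcal{S}$, set $\mathcal{W} := \mathrm{row}(W_{\mathcal{S}})$, and let $P_{\mathcal{W}}$ be the orthogonal projection onto $\mathcal{W}$. I would take
$$ \vec{\lambda} := \vec{e}_1 - P_{\mathcal{W}}\vec{e}_1, $$
where $\vec{e}_1 = (1,0,\ldots,0)^T$ picks out the coordinate of $B/\sqrt{\eta}$ in the source vector. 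By construction $\vec{\lambda} \in \mathrm{null}(W_{\mathcal{S}})$, so condition (i) is immediate, i.e., $\vec{\lambda}^T \vec{w}_i = 0$ for all $i \in \mathcal{S}$.

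For (ii), standard properties of orthogonal projections ($P_{\mathcal{W}}^2 = P_{\mathcal{W}} = P_{\mathcal{W}}^T$) give
$$ \lambda_1 = \vec{e}_1^T \vec{\lambda} = 1 - \|P_{\mathcal{W}}\vec{e}_1\|^2 = \|\vec{\lambda}\|^2, $$
so $\lambda_1^2/\|\vec{\lambda}\|^2 = \|\vec{\lambda}\|^2 = 1 - \|P_{\mathcal{W}}\vec{e}_1\|^2$. (If $\vec{\lambda}=0$, the subsequent identification forces $\texttt{SNR}_{\mathcal{S}}^{(B)} = \infty$ and (ii) is vacuous; otherwise I proceed to evaluate the right-hand side.)

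The crux is identifying $\|P_{\mathcal{W}}\vec{e}_1\|^2$ in terms of the privacy SNR. Setting $\vec{x} := (B/\sqrt{\eta},\, \bar{S}_1, \ldots, \bar{S}_N)^T$, which has identity covariance, one has $\Theta_i = \vec{w}_i^T \vec{x}$, so the covariance matrix of $(\Theta_i)_{i \in \mathcal{S}}$ equals $W_{\mathcal{S}} W_{\mathcal{S}}^T = \mathbf{K}_{\mathcal{S}}^{B} + \mathbf{K}_{\mathcal{S}}^{\mathbf{S}}$. Applying Lemma \ref{lem:theSNRlemma} to the estimation of $B/\sqrt{\eta}$ from $(\Theta_i)_{i \in \mathcal{S}}$ (so that $\gamma^2 = 1$ and $\nu_i = b_{s_i}\sqrt{\eta}$) gives LMMSE equal to $1/(1+\texttt{SNR}_{\mathcal{S}}^{(B)})$. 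The closed-form linear-regression expression for the same LMMSE is
$$ 1 - \vec{e}_1^T W_{\mathcal{S}}^T (W_{\mathcal{S}} W_{\mathcal{S}}^T)^{\dagger} W_{\mathcal{S}} \vec{e}_1 \;=\; 1 - \|P_{\mathcal{W}}\vec{e}_1\|^2. $$
Equating the two yields $\|P_{\mathcal{W}}\vec{e}_1\|^2 = \texttt{SNR}_{\mathcal{S}}^{(B)}/(1+\texttt{SNR}_{\mathcal{S}}^{(B)})$, hence $\lambda_1^2/\|\vec{\lambda}\|^2 = 1/(1+\texttt{SNR}_{\mathcal{S}}^{(B)})$, which is in fact an equality stronger than the claimed lower bound. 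The symmetric existence of $\vec{\theta}$ satisfying (iii)--(iv) follows by the mirror construction, with $\vec{v}_i$, $\bar{R}_i$, and $A/\sqrt{\eta}$ replacing $\vec{w}_i$, $\bar{S}_i$, and $B/\sqrt{\eta}$.

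The main obstacle I anticipate is purely bookkeeping: verifying that the $\texttt{SNR}$ emerging from Lemma \ref{lem:theSNRlemma} applied to this whitened source model coincides exactly with $\texttt{SNR}_{\mathcal{S}}^{(B)}$ of Definition \ref{def:SNRp}, and handling possible rank-deficiency of $W_{\mathcal{S}} W_{\mathcal{S}}^T$ (resolved either via the Moore--Penrose pseudoinverse or by noting that $P_{\mathcal{W}}$ is geometrically well-defined regardless of the chosen parameterization of $\mathcal{W}$). Once the ``LMMSE $\leftrightarrow$ projection'' dictionary is set up, the lemma reduces to the Pythagorean identity $\|\vec{e}_1\|^2 = \|P_{\mathcal{W}}\vec{e}_1\|^2 + \|P_{\mathcal{W}^{\perp}}\vec{e}_1\|^2$ applied to the standard basis vector.
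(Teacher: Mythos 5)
Your proposal is correct and uses essentially the same key idea as the paper: decompose $\vec{e}_1$ as the sum of a vector in $\mathrm{span}\{\vec{w}_i : i\in\mathcal{S}\}$ and one in its orthogonal complement, take $\vec{\lambda}$ to be the latter, and relate the squared norms to the LMMSE from Lemma~\ref{lem:theSNRlemma}. The paper phrases the same decomposition via rank--nullity and the identity $\vec{\underline{w}} + \vec{\lambda} = \vec{e}_1$, then proves an intermediate inequality (the bound $\underline{w}_1^2/\sum_{i\geq 2}\underline{w}_i^2 \leq \texttt{SNR}^{(B)}_{\mathcal{S}}$ for any $\vec{\underline{w}}$ in the span) and substitutes; you instead invoke the Pythagorean identity $\lambda_1 = \|\vec{\lambda}\|^2$ directly and identify $1 - \|P_{\mathcal{W}}\vec{e}_1\|^2$ with the closed-form LMMSE $= 1/(1+\texttt{SNR}^{(B)}_{\mathcal{S}})$, obtaining equality rather than the claimed inequality. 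This is slightly cleaner and strictly stronger, though of course the inequality is all the downstream converse argument needs. Your identification of the covariance matrices with $W_{\mathcal{S}}W_{\mathcal{S}}^T$ (so that $\vec{\nu}^T\mathbf{K}_1^{-1}\vec{\nu} = \|P_{\mathcal{W}}\vec{e}_1\|^2$) is consistent with the proof of Lemma~\ref{lem:theSNRlemma} in Appendix~B, and your handling of the degenerate case $\vec{\lambda}=0$ (equivalently $\vec{e}_1\in\mathcal{W}$, giving $\texttt{SNR}^{(B)}_{\mathcal{S}}=\infty$) matches the paper's explicit treatment of that case.
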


\begin{proof}
Consider any vector $\vec{\underline{w}}=[\underline{w}_1~~\underline{w}_2~~\ldots~~\underline{w}_{N}]^{T}$ that in the span of $\{\vec{w}_{i}: i \in \mathcal{S}\}$. \vc{Let $\mathcal{S} = \{s_1, s_2, \ldots, s_{t}\}$ where $s_1 < s_2< \ldots < s_t,$ and suppose that $\vec{\underline{w}}=\sum_{i=1}^{t} \beta_i \vec{w}_{s_i}.$ Because of Lemma \ref{lem:theSNRlemma} and Definition \ref{def:SNRp}, we know that 
$$ \mathbb{E}\left[\left|\gamma(\beta_1 \Theta_{s_1} + \beta_2 \Theta_{s_2} + \ldots +\beta_t \Theta_{s_t}) - \frac{B}{\sqrt{\eta}}\right|^2\right] \geq \frac{1}{1+\texttt{SNR}_{\mathcal{S}}^{(B)}} $$
for any constant $\gamma$. 
Observing that $\Theta_i = \left[\frac{B}{\sqrt{\eta}}~\overline{S}_1~\overline{S}_2~\ldots~\overline{S}_N\right] \vec{w}_{i}$, we can re-write the above equation as:
$$\mathbb{E}\left[\left|\left[\frac{B}{\sqrt{\eta}}~\overline{S}_1~\overline{S}_2~\ldots~\overline{S}_N\right] \gamma \vec{w}- B\right|^2\right] \geq \frac{1}{1+\texttt{SNR}^{(B)}_{\mathcal{S}}}.$$ 

Because $\frac{B}{\eta}, \overline{S}_1,\ldots, \overline{S}_{N}$ are unit norm independent random variables with $\mathbb{E}[S_1]=\mathbb{E}[S_2]=\ldots = \mathbb{E}[S_N]=0,$ the above equation implies
$$(\gamma w_1-1)^2 + \sum_{i=2}^{N} \gamma^2 w_i^2 \geq \frac{1}{1+\texttt{SNR}_{\mathcal{S}}^{(B)}}.$$
Setting $\gamma = \frac{w_1}{w_1^2+w_2^2 + \ldots + w_N^2}$ and re-arranging the terms of the above equation, we have:}
\begin{equation}\frac{{\underline{w}}_{1}^2}{\underline{w}_2^2+\underline{w}_3^2+\ldots+\underline{w}_{N}^2} \leq  \texttt{SNR}^{(B)}_{\mathcal{S}}\label{eq:snrbound} \end{equation}

Because $|\mathcal{S}| \leq t,$ the nullspace of $\{\vec{w}_{i}: i \in \mathcal{S}\}$ is non-trivial. If $\begin{bmatrix}1 \\ 0 \\ 0 \\ \vdots \\ 0 \end{bmatrix}$ lies in the span of  $\{\vec{w}_{i}: i \in \mathcal{S}\}$, then $\texttt{SNR}^{(B)}_{\mathcal{S}} = \infty,$ and any non-zero vector $\vec{\lambda}$ in the null space of  $\{\vec{w}_{i}: i \in \mathcal{S}\}$ satisfies $(i)$ and $(ii)$. So, it suffices to show the existence of the $\vec{\lambda}$ that satisfies the theorem for the case where $\begin{bmatrix}1 \\ 0 \\ 0 \\ \vdots \\ 0 \end{bmatrix}$ does not lie in the span of $\{\vec{w}_{i}: i \in \mathcal{S}\}$.

By the rank-nullity theorem, there exists a vector $\vec{\underline{w}}=[\underline{w}_1~~\underline{w}_2~~\ldots~~\underline{w}_{N}]^{T}$ in the span of $\{\vec{w}_{i}: i \in \mathcal{S}\},$ and a non-zero vector $\vec{\lambda} = \left[\lambda_1~~\lambda_2~~\ldots~~\lambda_{N}\right]^{T}$ that is in the nullspace of $\{\vec{w}_{i}: i \in \mathcal{S}\},$ such that 
\begin{equation} \vec{\underline{w}}+\vec{\lambda} = \begin{bmatrix}1 \\ 0 \\ 0 \\ \vdots \\ 0 \end{bmatrix}\label{eq:1} \end{equation}
Specifically, note that $\lambda_i = -\underline{w}_i,$ for $i=2,3,\ldots, t+1.$ Because $\vec{\lambda}$ nulls $\vec{\underline{w}}$, we have:
\begin{equation*}\lambda_1 \underline{w}_1 =  - \sum_{i=2}^{N} \lambda_{i}\underline{w}_{i} \end{equation*} Consequently, we have:  \begin{equation}\lambda_1 \underline{w}_1 = \sum_{i=2}^{N} \underline{w}_i^2 = \sum_{i=2}^{N} \lambda_i^2 \label{eq:lambdawproduct}\end{equation}

Therefore: 
\begin{align}
\frac{||\vec{\lambda}||^2}{\lambda_1^2} & = 1+ \frac{\sum_{i=2}^{N} \lambda_i^2}{\lambda_1^2}\\ &=
1+ \frac{\sum_{i=2}^{N} \underline{w}_i^2}{\lambda_1^2} \label{eq:sub1} \\
&=1+ \frac{\underline{w}_1^2}{\sum_{i=2}^{N} \underline{w}_i^2} \label{eq:sub2}\\
&\leq 1+ \texttt{SNR}^{(B)}_{\mathcal{S}} \label{eq:finalstepconverselemma}
\end{align}
where in (\ref{eq:sub1}) and (\ref{eq:sub2}), we have used (\ref{eq:lambdawproduct}), \vc{and in \eqref{eq:finalstepconverselemma}, we have used \eqref{eq:snrbound}.}

Therefore: 
 $$ \frac{\lambda_1^2}{||\vec{\lambda}||^2} \geq \frac{1}{1+ \texttt{SNR}^{(B)}_{\mathcal{S}}}$$
 as required. The existence of a vector $$\vec{\theta}^{(A)}_{\mathcal{S}} = \left[\theta_1~~\theta_2~~\ldots~~\theta_{t+1}\right]$$ that satisfies (iii),(iv) in the lemma statement follows from symmetry.
\end{proof}

\begin{proof}[Proof of Theorem \ref{thm:converse}]
Consider a set $\mathcal{S}$ of $t$ nodes. Let $$\vec{\lambda} = \left[\lambda_1~~\lambda_2~~\ldots~~\lambda_{N}\right]^{T}$$ be a $N \times 1$ vector that is orthogonal to \vc{all vectors in the set $\{\vec{w}_{i}, i \in \mathcal{S}^{c}\}$} such that:
\begin{equation} \frac{\lambda_1^{2}}{||\vec{\lambda}||^2} \geq  \frac{1}{1+\texttt{SNR}^{(B)}_{\mathcal{S}^{c}}}\label{eq:lambda1} \end{equation}

Because $|\mathcal{S}|=t$ and $N \leq 2t,$ it transpires that $|\mathcal{S}^{c}| \leq t,$ and from Lemma \ref{lem:nullvector_SNR}, we know that a vector $\vec{\lambda}$ satisfying the above conditions exist.
We then have:
\begin{equation}{\Delta}\vec{\lambda}  = \sum_{i\in \mathcal{S}} c_i  \vec{v}_{i}   - \begin{bmatrix}\lambda_1 \eta\\ 0 \\ \vdots \\ 0\end{bmatrix}\label{eq:delta1}\end{equation}
where $c_i = d_{i} \vec{w}_{i}^{T} \vec{\lambda}$ and $\Delta$ is as in \eqref{eq:Delta}. Now, \vc{ from Lemma \ref{lem:theSNRlemma},} we know that:
$$\inf_{\overline{c}_{i}}\left|\left| \sum_{i\in \mathcal{S}} \overline{c}_i  \vec{v}_{i}   - \begin{bmatrix}\sqrt{\eta} \\ 0 \\ 0 \\ \vdots \\ 0\end{bmatrix} \right|\right|^2 = \frac{\eta}{1+\texttt{SNR}^{(A)}_{\mathcal{S}}} $$ 
\vc{ By multiplying the above equation by $\lambda_1^2$}, we have:
$$\left|\left|\sum_{i\in \mathcal{S}} {c}_i  \vec{v}_{i}   - \begin{bmatrix}\lambda_1 \eta \\ 0 \\ 0 \vdots \\ 0\end{bmatrix} \right|\right|^2 \geq \frac{\lambda_1^{2}\eta^2}{1+\texttt{SNR}^{(A)}_{\mathcal{S}}}$$
Taking norms in (\ref{eq:delta1}) and applying the above, we get:
\begin{equation}||{\Delta}\vec{\lambda}||^2  \geq \frac{\lambda_1^{2} \eta^2}{1+\texttt{SNR}^{(A)}_{\mathcal{S}}}
\end{equation}
By definition of the $\ell_2$ norm, we have 
$$ ||{\Delta}\vec{\lambda}||^2 \leq || \Delta ||_2^2 ||\vec{\lambda}||^2$$
Because, for any matrix, its Frobenius norm is lower bounded by its $\ell_2$ norm, we have: 

\begin{equation}||{\Delta}||_F^2  \geq \frac{\lambda_1^{2}}{||\vec{\lambda}||^2} \frac{\eta^2}{1+\texttt{SNR}^{(A)}_{\mathcal{S}}}
\end{equation}

From (\ref{eq:lambda1}), we have:
\begin{equation} \label{eq:delta_lower}
||{\Delta}||_F^2  \geq \eta^2
\frac{1}{1+\texttt{SNR}^{(B)}_{\mathcal{S}^{c}}}
\frac{1}{1+\texttt{SNR}^{(A)}_{\mathcal{S}}},
\end{equation} \vc{
which, in combination with \eqref{eq:DeltaMMSE} and the fact that $\texttt{SNR}_p \geq \max(\texttt{SNR}^{(A)}_{\mathcal{S}},\texttt{SNR}^{(B)}_{\mathcal{S}^c})$  implies that 
$$ (1+\texttt{SNR}_{a}) \leq (1+\texttt{SNR}_p)^2 $$
 }
\end{proof}

\subsection{Proof of Corollary \ref{cor:converse}}
Consider an achievable coding scheme $\mathcal{C}$ that achieves $t$-node $\epsilon$-DP. From Lemma \ref{lem:LMSE_SNR}, we know that:
\begin{equation}
    \LMSE(\mathcal{C}) = \frac{\eta^2}{1+\texttt{SNR}_a} \label{eq:converse_lmse}
\end{equation} From Theorem \ref{thm:main_converse}, we know that there exists a set $\mathcal{S} \subset\{1,2,\ldots,N\}$ such that (i) $\texttt{SNR}_{\mathcal{S}}^{(A)} \geq \sqrt{1+\texttt{SNR}_a}-1,$ or(ii) $\texttt{SNR}_{\mathcal{S}}^{(B.)} \geq \sqrt{1+\texttt{SNR}_a}-1$. Without loss of generality, assume that (i) holds for the coding scheme $\mathcal{C}.$
Consequently, there exist scalars $w_i, i \in \mathcal{S}$ such that:
$$ \sum_{i \in \mathcal{S}} w_i \tilde{A}_{i} = A+Z $$
where $Z$ is uncorrelated with $A$ and $\mathbb{E}[Z]^2 \leq \frac{\eta}{\sqrt{1+\texttt{SNR}_{a}}-1}.$ By definition of the function $\sigma^*(\epsilon),$ we have:
\begin{equation} (\sigma^*(\epsilon))^2 \leq E[Z]^2 \leq \frac{\eta}{\sqrt{1+\texttt{SNR}_{a}}-1}.\label{eq:converse_DP_SNR}\end{equation}
Combining (\ref{eq:converse_lmse}) and (\ref{eq:converse_DP_SNR}), we get the desired result.

\section{Extension to Matrix Multiplication}
\label{sec:matrix}
We consider the problem of computing the \emph{matrix} product $\mathbf{A}\mathbf{B}$, where $\mathbf{A} \in \mathbb{R}^{M\times L}$ and show how our scalar case results extend to matrix multiplication application. Our main result is an equivalence between codes for scalar multiplication and matrix multiplication under certain assumptions on the matrix multiplication code.
\textbf{Notation:} In the sequel, for a matrix $\mathbf{M}$, we denote the entry in its $i$-th row and $j$-th column by $\mathbf{M}[i,j].$

Let $\mathcal{C}$ be an arbitrary $N$-node secure multiplication coding scheme that achieves $t$-node $\epsilon$-DP and the accuracy $\texttt{SNR}_a$ for computing a scalar product $AB$.  We define $\mathcal{C}_\text{matrix}$ as a matrix extension of $\mathcal{C}$ that applies the coding scheme $\mathcal{C}$ \emph{independently} to each entry of the matrices $\mathbf{A} \in \mathbb{R}^{M \times L}$ and $\mathbf{B} \in \mathbb{R}^{L \times K}$. Specifically, node $i$ in $\mathcal{C}_{\text{matrix}}$ receives:
$$\tilde{\mathbf{A}}_i = a_i 
\mathbf{A} + \mathbf{\tilde{R}}_{i}$$
$$\tilde{\mathbf{B}}_i = b_i 
\mathbf{B} + \mathbf{\tilde{S}}_{i}$$
where the entries of the \vc{$M \times L$} random matrix $\mathbf{R}_{i}$ and the entries of the \vc{$L \times K$} random matrix $\mathbf{S}_{i}$ are chosen in an i.i.d. manner from the same distribution specified by $\mathcal{C},$ and the constant \vc{scalars} $a_i,b_i$ are also specified in $\mathcal{C}$. We evaluate the performance of $\mathcal{C}_{\text{matrix}}$ using worst-case metrics for both privacy and accuracy as follows: 
\begin{definition}(Matrix $t$-node $\epsilon$-DP)
\label{def:edp_matrix}
    Let $\epsilon\geq 0$. A coding scheme with random noise variables $$(\tilde{\mathbf{R}}_{1},\tilde{\mathbf{R}}_{2},\ldots, \tilde{\mathbf{R}}_{N}), (\tilde{\mathbf{S}}_{1}, \tilde{\mathbf{S}}_{2},\ldots, \tilde{\mathbf{S}}_{N})$$
    where $\tilde{\mathbf{R}}_{i} \in\mathbb{R}^{M \times L}, \tilde{\mathbf{S}}_{i} \in\mathbb{R}^{L \times K}$
    and scalars $a_i,b_i \; (i \in \{1,\ldots,N\})$  satisfies matrix $t$-node $\epsilon$-DP if, for any $\mathbf{A}_{0},\mathbf{A}_{1} \in \mathbb{R}^{M \times L} , \mathbf{B}_{0}, \mathbf{B}_{1} \in \mathbb{R}^{L \times K}$ that satisfy $\left|\left|\begin{bmatrix}\mathbf{A}_{0} \\ \mathbf{B}_{0}^T \end{bmatrix} - \begin{bmatrix}\mathbf{A}_{1} \\ \mathbf{B}_{1}^T \end{bmatrix}\right|\right|_\text{max} \leq 1$,
\begin{eqnarray}
\max\left(
\max_{\substack{m= 1, \ldots, M \\ l=1,\ldots, L}} \left(\frac{\mathbb{P}\left(\mathbf{Y}^{(0)}_{\mathcal{T}}[m, l] \in \mathcal{A} \right)}{\mathbb{P}\left(\mathbf{Y}^{(1)}_{\mathcal{T}}[m,l] \in \mathcal{A} \right)} \right),
\max_{\substack{l= 1, \ldots, L\\ k=1,\ldots, K}}
\left(
\frac{\mathbb{P}\left(\mathbf{Z}^{(0)}_{\mathcal{T}}[l,k] \in \mathcal{A} \right)}{\mathbb{P}\left(\mathbf{Z}^{(1)}_{\mathcal{T}}[l,k] \in \mathcal{A} \right)}\right) 
\right) &\leq& e^\epsilon
  \label{eq:DPmatrix}
\end{eqnarray}
for all subsets $\mathcal{T} \subseteq \{1,2,\ldots,N\},|\mathcal{T}|=t$,
for all subsets $\mathcal{A} \subset \mathbb{R}^{1 \times t}$ in the Borel $\sigma$-field, where, for $\ell=0,1$,
\begin{eqnarray}
\mathbf{Y}_{\mathcal{T}}^{(\ell)}[m,l] &\triangleq&  \begin{bmatrix}a_{i_1} \mathbf{A}_{\ell}[m,l]+\mathbf{\tilde{R}}_{i_1}[m,l], & \ldots, & a_{i_{|\mathcal{T}|}}\mathbf{A}_{\ell}[m,l]+\mathbf{\tilde{R}}_{i_{|\mathcal{T}|}}[m,l] \end{bmatrix}\label{eq:DPdefYmatrix} \\
\mathbf{Z}_{\mathcal{T}}^{(\ell)}[l,k] &\triangleq&  
\begin{bmatrix}  b_{i_1}\mathbf{B}_{\ell}[l,k]+\mathbf{\tilde{S}}_{i_1}[l,k], & \ldots, & b_{i_{|\mathcal{T}|}}\mathbf{B}_{\ell}[l,k] + \mathbf{\tilde{S}}_{i_{|\mathcal{T}|}}[l,k] \end{bmatrix} \label{eq:DPdefZmatrix}, 
\end{eqnarray}
where $\mathcal{T}=\{i_1,i_2,\ldots,i_{|\mathcal{T}|}\}.$ 
\end{definition}

\begin{definition} [Matrix LMSE.] For a matrix coding scheme $\mathcal{C}_\text{matrix}$, we define the LMSE as follows:
\begin{equation}
    \textsf{LMSE}(\mathcal{C}_\text{matrix}) = \max_{\substack{m= 1, \ldots, M\\ k=1,\ldots, K}} \mathbb{E} [ |(\mathbf{A}\mathbf{B})[m,k] - \tilde{\mathbf{C}}[m,k] |^2 ], 
\end{equation}
where $\tilde{\mathbf{C}}$ is a decoded matrix using an affine decoding scheme $d$.
\end{definition}

Analogous to Assumption \ref{assumption1} in the scalar case, our accuracy analysis is contingent on the data $\mathbf{A},\mathbf{B}$ satisfying the following assumption.

\begin{assumption}
$\mathbf{A}$ and $\mathbf{B}$ are statistically independent matrices of dimensions $M \times L$ and $L \times K$, and they satisfy:
\begin{enumerate}[(a)]
\item \begin{equation}
    \mathbb{E}\left[   || \mathbf{A} ||_\text{max}^2 \right] \leq \eta, \; \mathbb{E}\left[ || \mathbf{B} ||_\text{max}^2 \right] {\leq \eta},
    \label{eq:assumption-matrix1}
\end{equation}
for a parameter $\eta > 0$, and 
\item \begin{equation}
    \mathbb{E}[\mathbf{A}[m,i] \mathbf{A}[m,j]] \mathbb{E}[\mathbf{B}[i,k] \mathbf{B}[j,k]] = 0,
\end{equation}
for all $1 \leq i \neq j \leq L$ and $m = 1, \cdots M, k = 1, \cdots K$.
\end{enumerate}
\label{assumption2}
\end{assumption} 

Assumption \ref{assumption2}-(b) for instance holds if the entries of $\mathbf{A}$ are uncorrelated, or if the entries of $\mathbf{B}$ are uncorrelated. Our main result states that $\mathcal{C}_{\textrm{matrix}}$ attains an identical privacy and accuracy as $\mathcal{C}$


\begin{theorem}
    Consider any (scalar) multiplication coding scheme $\mathcal{C}$. Let $\mathcal{C}_{\textrm{matrix}}$ denote its matrix extension. 
    Then, 
    \begin{enumerate}[(i)]
        \item $\mathcal{C}$ satisfies $t$-node $\epsilon$-DP if and only if $\mathcal{C}_{\textrm{matrix}}$  satisfies $t$-node matrix $\epsilon$-DP.
        \item If $\mathbf{A},\mathbf{B}$ satisfy Assumption \ref{assumption2}, then:
        $$\textrm{LMSE}(\mathcal{C}_{\textrm{matrix}})  \leq \sup\textrm{LMSE}(\mathcal{C}),$$
        where the supremum on the right hand side is over all data distributions $\mathbb{P}_{A,B}$ that satisfy Assumption \ref{assumption1} with parameter $\eta.$ Further, the bound above is met with equality if every entry of $\mathbf{A},\mathbf{B}$ has standard deviation $\eta.$ 
    \end{enumerate}
    \label{thm:matrix}
\end{theorem}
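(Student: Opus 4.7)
The plan is to reduce both parts of the theorem to the scalar analysis of $\mathcal{C}$ by exploiting the entry-wise i.i.d.\ construction of $\mathcal{C}_{\textrm{matrix}}$.

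For part (i), I would observe that at any fixed position $(m,l)$, the random vector $\mathbf{Y}^{(\ell)}_{\mathcal{T}}[m,l]$ in \eqref{eq:DPdefYmatrix} has exactly the same joint distribution as the $t$-node scalar view of $\mathcal{C}$ applied to the scalar input $\mathbf{A}_{\ell}[m,l]$; this is because the noise entries $\{\tilde{\mathbf{R}}_i[m,l]\}_{i=1}^{N}$ are drawn i.i.d.\ across positions with marginal distribution equal to $\tilde{R}_i$ from $\mathcal{C}$. Hence whenever $\|\mathbf{A}_0-\mathbf{A}_1\|_{\max}\leq 1$, each scalar pair satisfies $|\mathbf{A}_0[m,l]-\mathbf{A}_1[m,l]|\leq 1$, so scalar $\epsilon$-DP of $\mathcal{C}$ bounds each entrywise ratio in \eqref{eq:DPmatrix} by $e^\epsilon$, and the outer maximum over entries preserves the bound. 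The converse is obtained by specializing to matrix pairs $\mathbf{A}_0,\mathbf{A}_1$ that differ in a single entry by at most $1$ (and similarly for $\mathbf{B}$), which collapses the matrix DP inequality into the scalar DP inequality.

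For part (ii), I would use the natural matrix decoder $\tilde{\mathbf{C}} = \sum_i w_i \tilde{\mathbf{A}}_i \tilde{\mathbf{B}}_i$ built from the scalar decoding coefficients of $\mathcal{C}$. Expanding the per-entry error gives
\begin{equation*}
\tilde{\mathbf{C}}[m,k] - (\mathbf{A}\mathbf{B})[m,k] = \sum_{l=1}^{L} E_{m,l,k},
\end{equation*}
where $E_{m,l,k} := \sum_i w_i \tilde{\mathbf{A}}_i[m,l] \tilde{\mathbf{B}}_i[l,k] - \mathbf{A}[m,l]\mathbf{B}[l,k]$ is precisely the scalar reconstruction error of $\mathcal{C}$ for the product $\mathbf{A}[m,l]\mathbf{B}[l,k]$, using its own independent noise copies from positions $(m,l)$ and $(l,k)$. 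Each diagonal term $\mathbb{E}[E_{m,l,k}^2]$ is the scalar $\textrm{LMSE}$ of $\mathcal{C}$ on a pair satisfying Assumption \ref{assumption1} (by Assumption \ref{assumption2}(a)) and is therefore bounded by $\sup \textrm{LMSE}(\mathcal{C})$.

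The main technical step, and the principal obstacle, is verifying that every cross term $\mathbb{E}[E_{m,l,k} E_{m,l',k}]$ with $l\neq l'$ vanishes. I would expand $E_{m,l,k}$ into its four monomial types $\mathbf{A}[m,l]\mathbf{B}[l,k]$, $\mathbf{A}[m,l]\tilde{\mathbf{S}}_i[l,k]$, $\tilde{\mathbf{R}}_i[m,l]\mathbf{B}[l,k]$, and $\tilde{\mathbf{R}}_i[m,l]\tilde{\mathbf{S}}_i[l,k]$, and case-check the sixteen resulting product classes. Fifteen vanish from (a) zero-mean independence of the noise entries at distinct matrix positions $(m,l)\neq(m,l')$ and $(l,k)\neq(l',k)$, together with (b) zero-mean independence of $\mathbf{A}$ and $\mathbf{B}$; the surviving pure data--data cross term collapses exactly to $\mathbb{E}[\mathbf{A}[m,l]\mathbf{A}[m,l']]\,\mathbb{E}[\mathbf{B}[l,k]\mathbf{B}[l',k]]$, which is zero by Assumption \ref{assumption2}(b). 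This cancellation is what allows one to pass from the diagonal bound to the $\textrm{LMSE}$ inequality in the theorem, and the equality statement when each entry of $\mathbf{A},\mathbf{B}$ has standard deviation $\eta$ follows from the tightness of Lemma \ref{lem:LMSE_SNR} once each diagonal term attains the scalar supremum.
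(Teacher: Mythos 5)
For part (i), your argument is the same as the paper's: the entrywise i.i.d.\ construction makes each position-wise adversarial view distributionally identical to the scalar view, so the outer max over entries preserves the scalar $e^\epsilon$ bound; the converse specializes to matrix pairs differing in a single entry. No objection there.

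For part (ii), your route is genuinely different from the paper's. The paper fixes $(m,k)$, computes the $N\times N$ signal and noise covariance matrices $\mathbf{K}_1,\mathbf{K}_2$ of the node outputs $\tilde{\mathbf{C}}_i[m,k]$, shows $\mathbf{K}_1 = L\,\overline{\mathbf{K}}_1$ and $\mathbf{K}_2 = L\,\overline{\mathbf{K}}_2$, and concludes $\texttt{SNR}_a(\mathcal{C}_{\text{matrix}}) = \texttt{SNR}_a(\mathcal{C})$. You instead work with the per-entry error directly: $\tilde{\mathbf{C}}[m,k]-(\mathbf{A}\mathbf{B})[m,k] = \sum_{l}E_{m,l,k}$, and kill the cross terms via independence of noise entries across positions plus Assumption \ref{assumption2}(b). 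These are essentially the same orthogonality facts wearing different clothes --- your cross-term cancellation is exactly what makes the paper's ``covariance scales by $L$'' identities true --- and your version is somewhat more transparent.

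However, your final step does not follow. After cancellation you are left with $\mathbb{E}\bigl[\,|\tilde{\mathbf{C}}[m,k]-(\mathbf{A}\mathbf{B})[m,k]|^2\,\bigr] = \sum_{l=1}^{L}\mathbb{E}[E_{m,l,k}^2]$, and each summand is bounded by $\sup\textrm{LMSE}(\mathcal{C})$; that gives $L\cdot\sup\textrm{LMSE}(\mathcal{C})$, not the theorem's $\sup\textrm{LMSE}(\mathcal{C})$. You assert the cancellation ``allows one to pass from the diagonal bound to the LMSE inequality,'' but the factor of $L$ does not go away --- indeed the signal variance itself, $\mathbb{E}\bigl[((\mathbf{A}\mathbf{B})[m,k])^2\bigr]$, equals $L\eta^2$, so one should expect the per-entry MSE to scale with $L$ once $\texttt{SNR}_a$ is fixed. (The paper's own write-up stops at SNR equality and then says this ``implies the theorem statement''; since the LMSE is signal-variance times $1/(1+\texttt{SNR}_a)$ by Lemma \ref{lem:theSNRlemma}, the same factor-of-$L$ discrepancy is present there, so at minimum a normalization by $L$ needs to be introduced --- either in the definition of matrix LMSE or explicitly in the bound.) Until that factor is accounted for, the claimed inequality is not established; the rest of the cross-term bookkeeping is sound.
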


The above theorem establishes an equivalence between the trade-off for matrix multiplication and the trade-off for scalar multiplication. Specifically, using Corollaries \ref{cor:achievability} and \ref{cor:converse}, we infer that  for a fixed matrix DP parameter $\epsilon,$ the optimal LMSE for the matrix multiplication case is the same as for scalar multiplication, that is:
$$ \textrm{LMSE} \approx \frac{\eta^2 (\sigma^*(\epsilon))^4}{\eta+(\sigma^*(\epsilon))^2}^2.$$

The above equivalence must however be interpreted with some caveats. First, the equivalence assumes that the coding scheme for the matrix case extends the scalar strategy to each input matrix element in an \emph{independent} manner. The question of whether correlation in the noise distribution can reduce the LMSE for a fixed DP parameter is left open. Second, the above trade-off requires Assumption (\ref{assumption2})-(b). In some cases, this assumption may be justified - for example, if $\mathbf{B}$ has data samples drawn from some distribution in an i.i.d. manner. However, in some cases, this assumption of uncorrelated data may be too strong. The question of the optimal trade-off when this assumption is dropped remains open. 

\begin{proof}[Proof of Theorem \ref{thm:matrix}]
\underline{Proof of (i)}

An elementary proof readily from the definition of matrix $t$-node $\epsilon$-DP and the matrix extension of the coding scheme $\mathcal{C}$. Specifically, let $(\overline{R}_1,\overline{R}_2,\ldots, \overline{R}_N)$ and $(\overline{S}_1,\overline{S}_2,\ldots, \overline{S}_N)$ denote the noise random variables of coding scheme $\mathcal{C}$.  Let $(\mathbf{\tilde{R}}_1, \ldots, \mathbf{\tilde{R}}_N)$ and $(\mathbf{\tilde{S}}_1, \ldots, \mathbf{\tilde{S}}_N)$ denote the noise random variables of the coding scheme $\mathcal{C}_{\text{matrix}}.$ 

To show the ``if'' statement, assume that $\mathcal{C}$ satisfies $t$-node $\epsilon$-DP. We show that $\mathcal{C}_{\textrm{matrix}}$ also satisfies $t$-node $\epsilon$-DP. Let $\mathbf{A}_{0},\mathbf{A}_{1},\mathbf{B}_0,\mathbf{B}_1$ denote matrices that satisfy $\left|\left|\begin{bmatrix}\mathbf{A}_{0} \\ \mathbf{B}_{0}^T \end{bmatrix} - \begin{bmatrix}\mathbf{A}_{1} \\ \mathbf{B}_{1}^T \end{bmatrix}\right|\right|_\text{max} \leq 1.$  
For an arbitrary subset $\mathcal{A}$ of the Borel sigma field, let $$m^{*}, l^{*} = \text{argmax}_{\substack{m= 1, \ldots, M \\ l=1,\ldots, L}} \left(\frac{\mathbb{P}\left(\mathbf{Y}^{(0)}_{\mathcal{T}}[m, l] \in \mathcal{A} \right)}{\mathbb{P}\left(\mathbf{Y}^{(1)}_{\mathcal{T}}[m,l] \in \mathcal{A} \right)} \right),$$
    $$ 
   l^{**}, k^{**} = \text{argmax}_{\substack{l= 1, \ldots, L \\ k=1,\ldots, K}} \left(\frac{\mathbb{P}\left(\mathbf{Z}^{(0)}_{\mathcal{T}}[l, k] \in \mathcal{A} \right)}{\mathbb{P}\left(\mathbf{Z}^{(1)}_{\mathcal{T}}[l,k] \in \mathcal{A} \right)} \right). 
   $$

   For any set $\mathcal{T} = \{i_1,i_2,\ldots,i_t\} \subset \{1,2,\ldots,N\}$:
\begin{align*}   &\max_{m=1,2,\ldots,M, l=1,2,\ldots,L}\frac{\mathbb{P}\left(\mathbf{Y}^{(0)}_{\mathcal{T}}[m^{*}, l^{*}] \in \mathcal{A} \right)}{\mathbb{P}\left(\mathbf{Y}^{(1)}_{\mathcal{T}}[m^{*}, l^{*}] \in \mathcal{A} \right)}\\
&\stackrel{(a)}=\frac{\mathbb{P}(\begin{bmatrix}a_{i_1}\mathbf{A}_0[m^*,l^*] + \overline{R}_{i_1} & \ldots &  a_{i_t}\mathbf{A}_0[m^*,l^*] + \overline{R}_{i_t}\end{bmatrix}}{\mathbb{P}(\begin{bmatrix}a_{i_1}\mathbf{A}_1[m^*,l^*] + \overline{R}_{i_1} & \ldots &  a_{i_t}\mathbf{A}_1[m^*,l^*] + \overline{R}_{i_t}\end{bmatrix})}\\&
\stackrel{(b)}\leq e^{\epsilon}.
\end{align*}
In $(a)$ above, we have used the fact that $(\overline{R}_{i_1}, \overline{R}_{i_2},\ldots, \overline{R}_{i_t})$ has the same distribution as \\ $(\mathbf{R}_{i_1}[m^*,l^*], \mathbf{R}_{i_2}[m^*,l^*],\ldots, \mathbf{R}_{i_t}[m^*,l^*])$. In (b) we have used the fact that $$|\mathbf{A}_0[m^*,l^*]-\mathbf{A}_1[m^*,l^*]|\leq \left|\left|\begin{bmatrix}\mathbf{A}_{0} \\ \mathbf{B}_{0}^T \end{bmatrix} - \begin{bmatrix}\mathbf{A}_{1} \\ \mathbf{B}_{1}^T \end{bmatrix}\right|\right|_\text{max}  1,$$ coupled with the fact that $\mathcal{C}$ satisfies $t$-node $\epsilon$-DP.
   
   A similar argument leads us to conclude that
   \begin{equation*} 
        \frac{\mathbb{P}\left(\mathbf{Z}^{(0)}_{\mathcal{T}}[l^{**}, k^{**}] \in \mathcal{A} \right)}{\mathbb{P}\left(\mathbf{Z}^{(1)}_{\mathcal{T}}[l^{**}, k^{**}] \in \mathcal{A} \right)}
        ) \leq e^\epsilon,
   \end{equation*}
   from which we infer $\mathcal{C}_{\textrm{matrix}}$ it satisfies matrix $t$-node $\epsilon$-DP. 
   
   The ``only if'' statement also follows through a similar elementary argument, and the details are omitted here.
   
   \underline{Proof of (ii)}
Our proof revolves around showing that the signal-to-noise ratio achieved in obtaining $\mathbf{C}[m,l]$ using coding scheme $\mathcal{C}_{\textrm{matrix}}$ is the same (for the worst case distribution $\mathbb{P}_{\mathbf{A},\mathbf{B}}$) as the SNR achieved by $\mathcal{C}$, where $\mathbf{C}=\mathbf{A}\mathbf{B}.$ 
 Consider scalar random variables $A,B$  that satisfy Assumption \ref{assumption1} with $\mathbb{E}[A^2] = \mathbb{E}[B^2] = \eta.$ let $(\overline{R}_1,\overline{R}_2,\ldots, \overline{R}_N)$ and $(\overline{S}_1,\overline{S}_2,\ldots, \overline{S}_N)$ denote the noise random variables of coding scheme $\mathcal{C}$. Denote by $\overline{\mathbf{K_1}}$ and $\overline{\mathbf{K_2}}$ as the covariance matrices of the coding scheme $\mathcal{C}$ as given in \eqref{eq:SNRa}. As per definition~\ref{def:SNRa}, 
    $$
        \texttt{SNR}_a = \frac{\text{det}(\overline{\mathbf{K}}_1)}{\text{det}(\overline{\mathbf{K}}_2)} -1. 
    $$
   For $1 \leq n_1,n_2 \leq N$, the $(n_1,n_2)$-th entry in the matrix $\overline{\mathbf{K}}_1$ is in the form of 
   \begin{equation} \label{eq:scalar_exp}
       \mathbb{E} \left[ \left((A+\overline{R}_{n_1})(B+\overline{S}) \right)^2 \right] \;\; \text { if $n_1 = n_2$ or } \;\; \mathbb{E}\left[ \left((A+\overline{R}_{n_1})(B+\overline{S}_{n_1})\right) \left((A+\overline{R}_{n_2})(B+\overline{S}_{n_2}) \right) \right]~\text { if $n_1 \neq n_2$} ,
   \end{equation}
   and the entries in the $\overline{\mathbf{K}}_2$  have the form of 
   \begin{align*}
       \mathbb{E} \left[ \left((A+\overline{R}_{n_1})(B+\overline{S}_{n_1}) -AB \right)^2 \right] \;\; \text {if $n_1 = n_2$ or } & \mathbb{E}\left[ \left((A+\overline{R}_{n_1})(B+\overline{S}_{n_1})- AB \right) \left((A+\overline{R}_{n_2})(B+\overline{S}_{n_2}) - AB \right) \right]  \\&~~~~~~~~\text { if $n_1 \neq n_2$.} 
   \end{align*}
   
   To evaluate $\textsf{LMSE}(\mathcal{C}_\text{matrix})$, we analyze the accuracy SNR of each entry in $\mathbf{C}$, i.e., $\mathbf{C}[m,k] = \mathbf{A}[m,:] \mathbf{B}[:,k]^T$. Let $\mathbf{K}_{1},\mathbf{K}_{2}$ denote the covariance matrices in the corresponding accuracy SNR calculation. For nodes $n_1,n_2 \in \{1,2,\ldots,N\}$, we define vector notations $\mathbf{a} = \mathbf{A}[m,:]$, $\mathbf{b} = \mathbf{B}[:,k]^T$, $\mathbf{r} = \mathbf{\tilde{R}}_{n_1}[m,:]$, $\mathbf{s} = \mathbf{\tilde{S}}_{n_1}[:,k]^T$, $\mathbf{r'} = \mathbf{\tilde{R}}_{n_2}[m,:]$, and $\mathbf{s'} = \mathbf{\tilde{S}}_{n_2}[:,k]^T$. Then, the $(n_1,n_2)$th entry of signal   covariance matrix $\mathbf{K}_1$ for $C[m,k]$ is composed of: 
   \begin{equation}
              \mathbb{E} \left[ \left((\mathbf{a}+ \mathbf{r})\cdot (\mathbf{b} + \mathbf{s}) \right)^2 \right] \;\; \text { if $n_1 = n_2$ or } \;\; 
               \mathbb{E} \Bigr[ \left((\mathbf{a}+ \mathbf{r})\cdot (\mathbf{b} + \mathbf{s}) \right) \left((\mathbf{a}+ \mathbf{r'})\cdot (\mathbf{b} + \mathbf{s'}) \right)  \Bigr] \text { if $n_1 \neq n_2$} .
   \end{equation}
    Assuming $\mathbb{E} [a_i^2] = \eta, \mathbb{E} [b_i^2] = \eta$\footnote{Elementary linear estimation theory shows that the LMSE obtained, for a fixed noise distribution and decoding co-efficients, is monotonically decreasing in parameters
$\mathbb{E} [a_i^2],\mathbb{E} [b_i^2] < \eta$; so the standard deviations being equal to $\eta$ is the worst case.}, we show that $\mathbf{K}_{1}(n_1,n_2) = L \overline{\mathbf{K}}_1(n_1,n_2)$ in the steps below.

\begin{align}
     \mathbb{E} [(\mathbf{a} \cdot \mathbf{b})^2]  &= \mathbb{E} [ (a_1 b_1 + \cdots + a_L b_L)^2] 
     = \sum_{i=1,\ldots L} \mathbb{E} [a_i^2 b_i^2] + \sum_{i \neq j } \mathbb{E} [a_i b_i a_j b_j] = \sum_{i=1,\ldots L} \mathbb{E} [a_i^2]  \mathbb{E}[b_i^2]  = L \eta^2  \\ 
     &= L \cdot \mathbb{E} [A^2 B^2].
\end{align}

\begin{align}
     \mathbb{E} [(\mathbf{a} \cdot \mathbf{s} + \mathbf{r} \cdot \mathbf{b})^2]  &= \mathbb{E} [ (a_1 s_1 + \cdots + a_L s_L + r_1 b_1 + \cdots + r_L b_L)^2] \\ 
     &= \sum_{i=1,\ldots L} (\mathbb{E} [a_i^2 s_i^2] + \mathbb{E} [r_i^2 b_i^2]) + \sum_{i,j = 1, \ldots L} \mathbb{E} [a_i s_i r_j b_j] + \sum_{i \neq j} (\mathbb{E} [a_i s_i a_j s_j] + \mathbb{E} [r_i b_i r_j b_j]) \\ 
     &= \sum_{i=1,\ldots L} (\mathbb{E} [a_i^2] \mathbb{E}[s_i^2] + \mathbb{E} [r_i^2] \mathbb{E}[b_i^2]) =  L \left(\eta \mathbb{E}[\overline{S}_{n_1}^2] +  \eta \mathbb{E}[\overline{R}_{n_1}^2] \right) \\
     &= L \cdot \mathbb{E}[(A\overline{S}+\overline{R}_{n_1}B)^2]. 
\end{align}
Similarly, 
\begin{align}
     \mathbb{E} [(\mathbf{r} \cdot \mathbf{s})^2]  = \mathbb{E} [ (r_1 s_1 + \cdots + r_L s_L)^2] 
     = \sum_{i=1,\ldots L} \mathbb{E} [r_i^2 s_i^2] + \sum_{i \neq j } \mathbb{E} [r_i s_i r_j s_j] = \sum_{i=1,\ldots L} \mathbb{E} [r_i^2]  \mathbb{E}[s_i^2] = L \cdot\mathbb{E}[\overline{R}_{n_1}^2 \overline{S}_{n_1}^2].
\end{align}
By plugging these in, we obtain:
\begin{align}
     \mathbb{E} \left[ \left((\mathbf{a}+ \mathbf{r})\cdot (\mathbf{b} + \mathbf{s}) \right)^2 \right] = L \cdot  \mathbb{E} \left[ \left((A+\overline{R}_{n_1})(B+\overline{S}_{n_1}) \right)^2 \right]. 
\end{align}
With similar calculations, we can show that: 
\begin{align}
 \mathbb{E} \Bigr[ \left((\mathbf{a}+ \mathbf{r})\cdot (\mathbf{b} + \mathbf{s}) \right) \left((\mathbf{a}+ \mathbf{r'})\cdot (\mathbf{b} + \mathbf{s'}) \right)  \Bigr] = L \cdot  \mathbb{E}\left[ \left((A+\overline{R}_{n_1})(B+\overline{S}_{n_1})\right) \left((A+\overline{R}_{n_2})(B+\overline{S}_{n_2}) \right) \right]. 
\end{align}
%
Thus, we have shown that $\mathbf{K_1} = L \cdot \overline{\mathbf{K}}_1.$ It is mechanical to also show $\mathbf{K_2} = L \cdot \overline{\mathbf{K}}_2.$ Hence, the 
\begin{equation}
    \texttt{SNR}_a(\mathcal{C}_\text{matrix}) = \frac{\text{det}(\mathbf{K}_1)}{\text{det}(\mathbf{K}_2)} -1 = \frac{\text{det}(L \cdot \overline{\mathbf{K}}_1)}{\text{det}(L \cdot \overline{\mathbf{K}}_2)} -1 = \frac{\text{det}(\overline{\mathbf{K}}_1)}{\text{det}(\overline{\mathbf{K}}_2)} -1 = \texttt{SNR}_a(\mathcal{C}),
\end{equation}
which implies the theorem statement.

\end{proof}

\section{Precision}
\label{sec:precision}
The coding scheme of Section \ref{sec:achievability} requires sequences $\alpha_1^{(n)}, \alpha_2^{(n)} \to 0.$ Notably this translates to requirements of increased compute precision.  In this section, we quantify the price of our coding schemes in terms of the required precision. We compare two schemes (i) the scheme of Theorem \ref{thm:main_achievability} that requires $N=t+1$ nodes (ii) the BGW coding scheme that achieves perfect privacy and perfect accuracy with $N=2t+1$ nodes. 
We consider a lattice quantization scheme with random dither and show that with this quantizer, the former scheme requires much more computing \vc{as compared to} the latter scheme.

\
\begin{table}[htb!]
\centering
\begin{tabular}{|p{5cm}|| p{3cm}| p{3cm} |p{3.5cm}|} 
 \hline
Number of nodes $N$ & Infinite-Precision accuracy (MSE) & Target accuracy (MSE) with finite precision & Number of bits $M(\delta)$ per node required to meet target error \\ [0.5ex] 
 \hline\hline
  $2t+1$ (BGW coding scheme) & 0 & $\delta$ & $\lim_{\delta \rightarrow 0} \frac{M({\delta})}{\log \frac{1}{\delta}} = 0.5$ \\\hline
 $t+1$ (Our coding scheme) & $\frac{(\sigma^{*}(\epsilon))^4}{(1+(\sigma^{*}(\epsilon))^2}$ & $\frac{(\sigma^{*}(\epsilon))^4}{(1+(\sigma^{*}(\epsilon))^2}+\delta$ &  $\lim_{\delta \rightarrow 0}\frac{M({\delta})}{\log \frac{1}{\delta}} = 1.5$ \\
 \hline
\end{tabular}
\caption{\small{A depiction of the privacy-accuracy trade-off taking into account the number of bits of precision required. The accuracy is reported as mean square error (MSE) assuming that $\epsilon$-DP is required to be achieved; we assume $\eta=1$ for simplicity.}}
\label{table:2}
\end{table}
\vc{
A comparison between the two schemes is shown in Table \ref{table:2}; we assume for simplicity that $\eta=1$ in the table and in the remainder of this section. The study of the effect of finite precision on secure MPC schemes based on secret sharing, including the BGW coding scheme, is a rich area of study (see for example \cite{catrina2010secure, catrina2009multiparty} and several follow up works). Table \ref{table:2} is based on this analysis, see for example Sec. 3.3 in \cite{catrina2010secure}). In Table \ref{table:2}, we are only concerned with asymptotic number of bits as $\delta \to 0,$ whereas this body of work aims to study the non-asymptotic performance. In addition, to the best of our knowledge, these works require $A,B$ to be bounded, which is stronger than Assumption \ref{assumption1}. In Appendix \ref{app:BGWprecision}, we show that asymptotic performance noted in Table \ref{table:2} for $N \geq 2t+1$ holds even if we only made Assumption \ref{assumption1} for the real-valued secret sharing scheme explained in Remark \ref{eq:remark5}\footnote{\vc{Another difference between classical secure MPC and real-valued secret sharing of Remark \ref{eq:remark5} is that, in the former, \emph{perfect privacy} interpreted as statistical independence, whereas, in the latter, it is interpreted a family of schemes that can drive the DP parameter $\epsilon$ to $0$. Notice that the former notion is stronger than the latter notaion because statistical independence automatically implies $\epsilon =0.$}}.}

Table \ref{table:2} wqa1se indicates that for a mean square error increase of $\delta$ compared to the infinite precision counter-part, our scheme requires nearly $1.5 \log\left(\frac{1}{\delta}\right)$ bits of precision per computation node for arbitrarily small $\delta$, whereas the BGW coding scheme requires nearly $0.5 \log\left(\frac{1}{\delta}\right)$ bits. Since we require $N=(t+1)$ computation nodes, the total number of bits required for our scheme is $\frac{3t+3}{2} \log\left(\frac{1}{\delta}\right)$ bits, whereas real-valued secret sharing scheme requires $\frac{2t+1}{2} \log\left(\frac{1}{\delta}\right)$ bits. Our result indicates that our approaches of this paper are not to be viewed as a panacea for computation overheads of secure multiparty computation. Rather, the schemes provide a pathway for increased trust as there is explicit control on the information leakage even if $N-1$ nodes collude. This increased trust comes at the cost of reduced privacy, accuracy, and a moderate increase in the overall computation overhead. Specifically, for a fixed value of $N$, our methods allow for secure multiplication in systems where the parameter $t$ is allowed to exceed $\lceil \frac{N-1}{2}\rceil.$ 

We also emphasize that our results here pertain to a specific choice of the coding scheme of Section \ref{sec:achievability} and a specific quantization scheme. Our analysis, therefore, shows that general-purpose quantizers coupled with the achievable scheme of Section \ref{sec:achievability} incurs a computational overhead as \vc{compared to the BGW coding scheme.} The question of the existence and design of quantizers or coding schemes that reduce this computation penalty is open. We describe our setup and results in greater detail next. An brief analysis of the BGW scheme providing justification to Table \ref{table:2} is placed in Appendix \ref{app:BGWprecision}.

\subsection{Achievable scheme of Section \ref{sec:achievability} under finite compute precision}
Consider the coding scheme of Section \ref{sec:achievability}, where
$$\Gamma_i= \left[A~R_1~R_2~\ldots~R_t\right] \vec{v}_{i}$$
$$ \Theta_i = \left[B~S_1~S_2~\ldots~S_t\right] \vec{w}_{i}$$
where $\vec{v}_i,\vec{w}_i$ are specified in (\ref{eq:ach_v1_tgreat1})-(\ref{eq:ach_v2_teq2}), that is, for $t \geq 2,$
$$ \vec{v}_{t+1} = \vec{w}_{t+1} = \begin{bmatrix}1 \\ x \\ 0 \\ \vdots \\ 0\end{bmatrix}$$ $$\vec{v}_{i} = \vec{w}_{i} = \vec{v}_{t+1} + \begin{bmatrix}0 \\ \alpha_1\\ \alpha_{2}\vec{g}_{i}\end{bmatrix}, 1 \leq i \leq t$$
For $t=1$.
$$ \vec{v}_{2} = \vec{w}_{2} = \begin{bmatrix}1 \\ x \end{bmatrix}, \vec{v}_{1} = \vec{w}_{1} = \vec{v}_{2} + \begin{bmatrix}0 \\ \alpha_1 \end{bmatrix}$$
In the above equations, $\mathbf{G}$ is a constant matrix that satisfies properties $(C1)$ and $(C2)$ specified in Section \ref{sec:coding_scheme}. Here, we have suppressed the dependence on the sequence index $n$ in parameters $\alpha_1,\alpha_2$. For our discussion here, it suffices to remind ourselves that, when there is no quantization error, the coding scheme achieves the mean square error: $$\mathbb{E}[(AB-\hat{C})^2] = \frac{x^4}{(1+x^2)^2}+\delta =\frac{1}{(1+\texttt{SNR}_p)^2}+\delta,$$ where $\delta \to 0$ so long as as $\alpha_1, \frac{\alpha_2}{\alpha_1}, \frac{\alpha_1^2}{\alpha_2} \rightarrow 0.$ In the sequel, we analyze the effect of quantization error on the mean square error.

Let $\Lambda \subset \mathbb{R}$ be a lattice with Vornoi region $\mathcal{V} \subset \mathbb{R}$. Let $D_1^{(A)}, D_2^{(A)},\ldots, D_N^{(A)},D_1^{(B)}, D_2^{(B)},\ldots, D_N^{(B)} $ be random variables uniformly distributed over $\mathcal{V}$, that are independent of each other and all the random variables in the coding scheme.  Consider the coding scheme of Section \ref{sec:achievability}, but with the inputs to the computation nodes are quantized to $M$ bits. 
$$\hat{\Gamma_i} = Q_i^{(A)}\left(\Gamma_i\right),$$
$$\hat{\Theta_i} = Q_i^{(B)}\left(\Theta_i\right),$$
where $Q_i^{(A)}:\mathbb{R} \rightarrow \mathbb{R}, Q_i^{(B)}:\mathbb{R} \rightarrow \mathbb{R}$ are independent dithered lattice quantizers. Specifically, 
we let $Q_i^{(A)}(x) = Q_{\Lambda}(x)+ D_i^{(A)}$ where $Q_{\Lambda}(x): \mathbb{R} \rightarrow \Lambda$ denotes the nearest point in $\Lambda$ to $x$. We define: $$ Y_i \stackrel{\Delta}{=} \Gamma_i - \hat{\Gamma}_i$$
$$ Z_i \stackrel{\Delta}{=} \Theta_i - \hat{\Theta}_i.$$

Standard lattice quantization theory \cite{zamir2014lattice} dictates that $Y_i,Z_i$ are independent of $\Gamma_{i},\Theta_{i}.$ We assume that the lattice $\Lambda$ is designed  - possibly based on the knowledge of distributions of $\Gamma_i,\Theta_i,i=1,2,\ldots,N$ -  so that we have $M$ bit quantizers, that is: $H(\hat{\Gamma}_i),H(\hat{\Theta}_i) \leq M.$ Assuming $A,B$ are random variables with finite differential entropy, it follows that $\mathbb{E}[Y_i^2],\mathbb{E}[Z_i^2] = \Omega(2^{-2M}).$

We also assume that $Q_i^{(A)},Q_i^{(B)}$ are statistically independent of each other, implying that $Y_i$ is independent of $(B,S_1,\ldots,S_t)$ and similarly, $Z_i$ is independent of $(A,R_1, \ldots, R_t).$

The output of the $i$th node is $\hat{C}_{i} = \hat{\Gamma_{i}}\hat{\Theta_i}$ - that is, we assume that the computation node performs perfectly precise computation so long as the inputs are quantized to $M$ bits. We consider a linear decoding strategy that estimates $C=AB$ as$$\hat{C} = \sum_{i=1}^{N} d_i \hat{C}_i = \sum_{i=1}^{N} d_i\hat{\Gamma_i}\hat{\Theta_i}.$$

Consider a fixed $\delta > 0$. We assume that parameters $\alpha_{1}(\delta),\alpha_2(\delta), M(\delta),d_1(\delta),d_2(\delta),\ldots,d_{t+1}(\delta)$ are chosen to satisfy the accuracy limit
$$\mathbb{E}[(AB-\hat{C})^2] \leq  \frac{1}{(1+\texttt{SNR}_p)^2} + \delta$$
for all $\mathbb{P}_{A}\mathbb{P}_{B}$ that satisfy $\mathbb{E}[A^2], \mathbb{E}[B^2] \leq 1.$ We develop two results next.
First, we show that for any choice of $\alpha_{1}(\delta),\alpha_2(\delta), M(\delta),d_1(\delta),d_2(\delta),\ldots,d_{t+1}(\delta)$, the following lower bound holds:
$$\lim_{\delta \to 0} \frac{M(\delta)}{\log \left(\frac{1}{\delta}\right)} \geq 3/2.$$
Then, we show that there exists a positive number $\overline{\delta} > 0$ and a realization of parameters $$\alpha_{1}(\delta),\alpha_2(\delta), M(\delta),d_1(\delta),d_2(\delta),\ldots,d_{t+1}(\delta)$$ such that, if   
$$\lim_{\delta \to 0} \frac{M(\delta)}{\log \left(\frac{1}{\delta}\right)} > 3/2,$$
then 
$$\mathbb{E}[(AB-\hat{C})^2] \leq  \frac{1}{(1+\texttt{SNR}_p)^2} + \delta$$
for all $\delta < \overline{\delta}.$ In the sequel, we often suppress the dependence on $\delta$ for all parameters except the number of quantization bits $M(\delta)$ for simpler notation.  We first show the lower bound.


\subsection{Lower Bound}
In the sequel, we assume that $\mathbb{E}[A]=\mathbb{E}[B]=0,$ and consider $\mathbb{E}[A^2]=\mathbb{E}[B^2] = 1.$ 

\begin{align}
  &\mathbb{E}[(AB-\hat{C})^2] = \frac{1}{1+\texttt{SNR}_a}+\delta \\&= \min_{d_1,d_2,\ldots,d_N}  \mathbb{E}[(\sum_{i=1}^{N} d_i\hat{C_i}-\Gamma_i\Theta_i)^2] \nonumber \\&= \min_{d_1,d_2,\ldots,d_N} 
\left(\mathbb{E}\bigg[\bigg(\sum_{i=1}^{t}d_i\big(A+R_1 (x + \alpha_1)+ \alpha_2 \begin{bmatrix}R_2&\ldots & R_t\end{bmatrix}\vec{g}_{i} + Y_i\big)\big(B+\right.\\&\left.~~~~S_1 (x + \alpha_1) + \alpha_2 \begin{bmatrix}S_2&\ldots & S_t\end{bmatrix} \vec{g}_i+Z_i\big)+
 d_{t+1} \big(A+R_1 x +Y_{t+1} \big)\big(B+S_1 x+Z_{t+1})-AB\bigg)^2\bigg]\right)
 \\& = 
 \min_{d_1,d_2,\ldots,d_N} 
\left(\mathbb{E}\bigg[\bigg(\sum_{i=1}^{t}d_i\big(A+R_1 (x + \alpha_1)+ \alpha_2 \begin{bmatrix}R_2&\ldots & R_t\end{bmatrix}\vec{g}_{i} \big)\big(B+\right.\nonumber\\&~~~~S_1 (x + \alpha_1) + \alpha_2 \begin{bmatrix}S_2&\ldots & S_t\end{bmatrix} \vec{g}_i\big)+
 d_{t+1} \big(A+R_1 x\big)\big(B+S_1 x)-AB\bigg)^2\bigg]\nonumber\\&~~~ \left.+ \sum_{i=1}^{t+1} d_i^2(\beta_i^2 \mathbb{E}[Y_i^2] +\gamma_i^2\mathbb{E}[Z_i^2])\right) \label{eq:prec-delta-bound}
 \end{align}
where
 $$\gamma_i^2 = \mathbb{E}\left[\big(A+R_1 (x + \alpha_1)+ \alpha_2 \begin{bmatrix}R_2&\ldots & R_t\end{bmatrix}\vec{g}_{i} \big)^2\right],i=1,2,\ldots,t$$ 
 $$ \gamma_{t+1}^2 = \mathbb{E}\left[\big(A+R_1 (x + \alpha_1)\big)^2\right] $$
 $$\beta_i^2 = \mathbb{E}\left[\big(B+S_1 (x + \alpha_1) + \alpha_2 \begin{bmatrix}S_2&\ldots & S_t\end{bmatrix} \vec{g}_i\big)^2\right],i=1,2,\ldots,t$$
  $$\beta_{t+1}^2 = \mathbb{E}\left[\big(B+S_1 (x + \alpha_1)\big)^2\right]$$
 We now lower bound $\delta$ in two ways. The first imposes an upper bound on $\alpha_1$ in terms of $\delta$, for sufficiently small $\delta$. The second uses the first bound to impose a lower bound on $M$.  The first bound begins with omitting the effect of $Y_i,Z_i,i=1,2,\ldots,t+1$ from (\ref{eq:prec-delta-bound}) as follows:
 
  \begin{align}
&  \mathbb{E}[(AB-\hat{C})^2]  \\ &\geq \min_{d_1,d_2,\ldots,d_N} 
\left(\mathbb{E}\bigg[\bigg(\sum_{i=1}^{t}d_i\big(A+R_1 (x + \alpha_1)+ \alpha_2 \begin{bmatrix}R_2&\ldots & R_t\end{bmatrix}\vec{g}_{i} \big)\big(B+\right.\nonumber \\&\left.~~~~S_1 (x + \alpha_1) + \alpha_2 \begin{bmatrix}S_2&\ldots & S_t\end{bmatrix} \vec{g}_i\big)+
 d_{t+1} \big(A+R_1 x\big)\big(B+S_1 x)-AB\bigg)^2\bigg]\right) \\
 &\geq \min_{d_1,d_2,\ldots,d_N} 
 \mathbb{E}\bigg[\bigg(\left(\sum_{i=1}^{t}d_i\right)\big(A+R_1 (x + \alpha_1)\big)\big(B+S_1 (x + \alpha_1)\big) +d_{t+1}\big(A+R_1 x \big)\big(B+S_1 x)-AB\bigg)^2\bigg]  \\
 &= \frac{2x^2\left((\alpha_1)^2 + 2 (\alpha_1)x +  x^2\right)}{(\alpha_1)^2(2x^2+1)+4(\alpha_1)(x+x^3)+2(x^2+1)^2}\\
 &=\frac{2x^4}{2(x^2+1)^2} + \Theta(\alpha_1) \\& = \frac{1}{1+\texttt{SNR}_{p}^2}+\Theta(\alpha_1)
 \label{eq:precision-delta-bound1}
   \end{align}
   
From the final equation, we infer that there exists a constant $\overline{\delta} > 0$ and a constant $c'$ such that, for all $\delta < \overline{\delta},$ $\alpha_1 \leq c'\delta.$ We now derive the second bound on $\delta$. We begin the bounding process by omitting the effect of $Z_i,i=1,2,\ldots,t+1$. In the following bounds, we use the fact that $\beta_{i}^2 \geq \mathbb{E}[B^2] = 1, \forall i$. Also, we assume that there is a constant $\lambda > 0$ such that  $\mathbb{E}[Y_i^2],\mathbb{E}[Z_i^2] \geq \lambda 2^{-2M(\delta)}.$
 \begin{align}
    &\mathbb{E}[(AB-\hat{C})^2]  \\ & {\geq} 
\min_{d_1,d_2,\ldots,d_N} 
 \mathbb{E}\bigg[\bigg(\sum_{i=1}^{t}d_i\big(A+R_1 (x + \alpha_1)\big)\big(B+S_1 (x + \alpha_1)\big) +d_{t+1}\big(A+R_1 x \big)\big(B+S_1 x)-AB\bigg)^2\bigg] \nonumber \\& + \left(\sum_{i=1}^{t+1}d_i^2\beta_i^2 \mathbb{E}[Y_i]^2\right) \label{eq:prec-bound-interm1}\\ &
  = \min_{d_1,d_2,\ldots,d_N} \left(\sum_{i=1}^{t+1} d_i-1\right)^2 \mathbb{E}[A^2B^2] + \left(\mathbb{E}[A^2S_1^2]+\mathbb{E}[B^2 R_1^2]\right)\left(\left(\sum_{i=1}^{t}d_i\right) (x+\alpha_1) + xd_{t+1}\right)^2 \nonumber \\&+ \left(\left(\sum_{i=1}^{t} d_i\right)(x+\alpha_1)^2 + x^2 d_{t+1}\right)^2\mathbb{E}[R_1^2S_1^2] + \left(\sum_{i=1}^{t+1}d_i^2\beta_i^2 \right)\lambda 2^{-2M(\delta)} \label{eq:prec-bound-interm2} \\
  &\geq \min_{d_1,d_2,\ldots,d_N} \left(\sum_{i=1}^{t+1} d_i-1\right)^2  + 2 \left(\left(\sum_{i=1}^{t}d_i\right) (x+\alpha_1) + xd_{t+1}\right)^2 \nonumber \\&+ \left(\left(\sum_{i=1}^{t} d_i\right)(x+\alpha_1)^2 + x^2 d_{t+1}\right)^2 + \left(\sum_{i=1}^{t+1}d_i^2\right)\lambda 2^{-2M(\delta)} \label{eq:prec-bound-interm3} \\
& = \min_{\overline{d},d_{t+1}} \left(\overline{d}+d_{t+1}-1\right)^2  + 2 \left(\overline{d} (x+\alpha_1) + xd_{t+1}\right)^2 + \left(\overline{d}(x+\alpha_1)^2 + x^2 d_{t+1}\right)^2 + (\overline{d}+d_{t+1})^2 \frac{\lambda 2^{-2M(\delta)}}{t}
  \label{eq:precision-bound1}
\end{align}
where, in (\ref{eq:prec-bound-interm1}), we have used the fact that $R_2,R_3,\ldots,R_{t+1}$ are independent of each other and all other variables, that is, independent of $(A,B,R_1,Y_1,Y_2,\ldots,Y_{t+1},Z_1,Z_2,\ldots,Z_{t+1})$. In (\ref{eq:precision-bound1}), we have used the notation $\overline{d}=\sum_{i=1}^{t}d_i. $The final minimization problem is strictly convex. Its optimal arguments $\overline{d}^{*},d_{t+1}^{*}$, can be found via differentiation to be:
 \begin{eqnarray} \overline{d}^{*} 
  &=&  \frac{\begin{vmatrix}  1&  
 (x(x+\alpha_1) + c)^2 \\  1 & 
((x+\alpha_1)^2+c)^2 \end{vmatrix}}{\begin{vmatrix}  (x^2+c)^2 &  
 (x(x+\alpha_1)+c)^2 \\  (x(x+\alpha_1)+c)^2& 
 ((x+\alpha_1)^2+c)^2 \end{vmatrix}}\\
  &=& \frac{(\alpha_1 + x) (2 + (\alpha_1 + x) (\alpha_1+ 2 x))}{\alpha_1 (2 x^2 (\alpha_1 + x)^2 + 
   c^2 (2 + (\alpha_1 + 2 x)^2))}\label{eq:prec-opt-dvalues1}\\
    {d}_{t+1}^{*}
 &=&  \frac{\begin{vmatrix}  (x^2+c)^2&  
 1\\  (x(x+\alpha_1)+c)^2 & 
1 \end{vmatrix}}{\begin{vmatrix}  (x^2+c)^2 &  
 (x(x+\alpha_1)+c)^2 \\  (x(x+\alpha_1)+c)^2& 
 ((x+\alpha_1)^2+c)^2 \end{vmatrix}}\\
&=& - \frac{(\alpha_1 + 2 x) (2 + (\alpha_1)^2 + 2 \alpha_1x + 2 x^2)}{\alpha_1 (2 x^2 (\alpha_1 + x)^2 + 
   c^2 (2 + (\alpha_1 + 2 x)^2))}\label{eq:prec-opt-dvalues2}
\end{eqnarray}

where $c = \sqrt{1+\frac{\lambda 2^{-2M(\delta)}}{t}}$. Substituting (\ref{eq:prec-opt-dvalues1}),(\ref{eq:prec-opt-dvalues2}) into the last term in  (\ref{eq:precision-bound1}), we get
\begin{align}  
&\mathbb{E}[(AB-\hat{C})^2] \nonumber \\ &\geq\min_{\overline{d},d_{t+1}} \bigg(\left(\overline{d}+d_{t+1}-1\right)^2  + 2 \left(\overline{d} (x+\alpha_1) + xd_{t+1}\right)^2 + \left(\overline{d}(x+\alpha_1)^2 + x^2 d_{t+1}\right)^2\bigg) + (\overline{d}^*+d_{t+1}^*)^2 \frac{\lambda 2^{-2M(\delta)}}{t}\\
&\geq \frac{1}{1+\texttt{SNR}_p^2} + (\overline{d}^*+d_{t+1}^*)^2 \frac{\lambda 2^{-2M(\delta)}}{t}
\end{align}
Using the fact that $$E[(AB-C)^2] \leq \frac{1}{1+\texttt{SNR}_p^2} + \delta,$$
we get $\delta \geq (\overline{d}^*+d_{t+1}^*)^2 \frac{\lambda 2^{-2M(\delta)}}{t}.$ Recall that for $\delta < \overline{\delta}$, we have shown that  $\alpha_1 \leq c' \delta.$ Further (\ref{eq:prec-opt-dvalues1}),(\ref{eq:prec-opt-dvalues2}) imply that for there are constants $\overline{\alpha}_1 > 0$ and $c''> 0$ such that, for $0 \leq \alpha_1 < \overline{\alpha}_1,$ $|\overline{d}^{*}+d_{t+1}^{*}| \geq \frac{c'' }{\alpha_1}$. Therefore, for $\delta < \min(\overline{\delta},\overline{\alpha_1}/c'),$ we have:
 \begin{eqnarray*} \delta &\geq& \frac{c''^2\lambda}{\alpha_1^2} \frac{2^{-2M(\delta)}}{t} \\ &\geq&  \frac{c''^2}{c'^2\delta^2} \frac{2^{-2M}}{t} \\
 \Rightarrow M(\delta) &\geq& \frac{3}{2}\log\frac{1}{\delta} + \frac{1}{2} \log\left(\frac{c''^2 \lambda}{c'^2t}\right)
 \end{eqnarray*}
 Thus, we have the following asymptotic bound: $\lim_{\delta \to 0} \frac{M(\delta)}{\log{\frac{1}{\delta}}} \geq 3/2$.


\subsection{Achievable scheme}
We set $d_1,d_2,\ldots,d_{t+1}$ to be the same as Section \ref{sec:accuracy}. That is,  $d_1,d_2,\ldots,d_{t+1}$ are set ignoring the effect of the quantization error. For ease of notation, we simply set $\alpha_2 = \alpha_1^{2/3}$. So long as $\alpha_1\to 0$, notice that $\alpha_1,\alpha_2$ satisfy the limit requirements of (\ref{eq:limits}). We set $$M(\delta)= K \log \frac{1}{\delta}$$ for some constant $K>3/2.$ For a sufficiently small $\delta,$ we show that the scheme achieves an error smaller than $\delta$ so long as $\alpha_1$ is chosen sufficiently small. 

Taking into effect the quantization error, the computation nodes output:
$$ \hat{\Gamma}_{t+1}\hat{\Theta}_{t+1} =  (A+R_1 x+Y_{t+1})(B+S_1 x+Z_{t+1})$$
and, for $i=1,\ldots,t$:
\begin{align*} \hat{\Gamma}_i \hat{\Theta}_i &= (A+R_1(x+\alpha_1)+ \alpha_1^{2/3}\begin{bmatrix}R_2&\ldots & R_t\end{bmatrix}\vec{g}_{i} + Y_{i} )( (B+S_1(x+\alpha_1))+\alpha_1^{2/3}\begin{bmatrix}S_2&\ldots & S_t\end{bmatrix} \vec{g}_{i} +Z_{i}) \end{align*}

Following the steps of Section \ref{sec:accuracy}, the decoder obtains the following analogous to (\ref{eq:ach-decoder-processing1}), (\ref{eq:ach-decoder-processing2}):
\begin{eqnarray}\hat{\Gamma}_{t+1}\hat{\Theta}_{t+1} &=& AB+x(AS_1+BR_1) + R_1S_1 x^2 + Y_{t+1}(B+Sx)+Z_{t+1}(A+Rx) + Y_{t+1}Z_{t+1}    \label{eq:ach-decoder-processing-quant1} \\ \hat{\tilde{\Gamma}}\hat{\tilde{\Theta}} &=&  AB + (x+\alpha_1)(AS_1+BR_1) + (x+\alpha_1)^2 R_1 S_1 + O(\alpha_1^{4/3})\nonumber \\&&~~+ \sum_{i=1}^{t} (L_{i,1} Y_i + L_{i,2} Z_i + L_{i,3} Y_iZ_i) . \label{eq:ach-decoder-processing-quant2} \end{eqnarray}
where $L_{i,1},L_{i,2},L_{i,3},i=1,2,\ldots,t$ are random variables that are independent of $Y_i,Z_i$ with the property that their variances are $\Theta(1)$. That is, as $\alpha_1 \to 0$, their variances depend only on the variances of $A,B,R_1,\ldots,R_{t+1},S_1\ldots,S_{t+1}$ and constants $x,\mathbf{G}.$ To make the notation of (\ref{eq:ach-decoder-processing-quant1}) consistent with (\ref{eq:ach-decoder-processing-quant2}), we denote: $$L_{t+1,1} = B+S_1x,L_{t+1,2} = A+R_1x, L_{t+1,3} = 1.$$
Let $\overline{d}^*,d_{t+1}^*$ denote the constants that obtain the \LMSE of Section \ref{sec:accuracy}, specifically, these constants are the arguments that minimize the following:
\begin{align*}& \min_{\overline{d},d_{t+1}}&\mathbb{E}\left[\left(\overline{d} \left( AB + (x+\alpha_1)(AS_1+BR_1)+R_1S_1(x+\alpha_1)^2\right)+ \right.\right.\\&&~~\left.\left.  + d_{t+1}\left(AB+x(AS_1+BR_1) + R_1S_1 x^2\right)-AB\right)^2\right]\end{align*}
expressions in (\ref{eq:SNR1})-(\ref{eq:SNR3}). The mean square error can be written as:

 \begin{align*}\min_{\overline{d},d_{t+1}}&\mathbb{E}\left[\left(\overline{d} \left( AB + (x+\alpha_1)(AS_1+BR_1)+R_1S_1(x+\alpha_1)^2\right)+ \right.\right.\\&~~\left.\left.  + d_{t+1}\left(AB+x(AS_1+BR_1) + R_1S_1 x^2\right)-AB\right)^2\right]\\&+ \sum_{i=1}^{t}(\overline{d}^{*})^2 \left(L_{i,1}^2 \mathbb{E}[Y_i^2] + L_{i,2}^2 \mathbb{E}[Z_i^2] + L_{i,3}^2 \mathbb{E}[Y_i^2Z_i^2]\right) \\&+ (d_{t+1}^{*})^2 \left(\left(L_{t+1,1}^2 \mathbb{E}[Y_{t+1}^2] + L_{t+1,2}^2 \mathbb{E}[Z_{t+1}^2] + L_{t+1,3}^2 \mathbb{E}[Y_{t+1}^2Z_{t+1}^2]\right)\right) \\ 
 &= \frac{2x^2\left((\alpha_1)^2 + 2 (\alpha_1)x +  x^2 +  \frac{O(\alpha_1^6)}{{(\alpha_1)^2}}\right)}{(\alpha_1)^2(2x^2+1)+4(\alpha_1)(x+x^3)+2(x^2+1)^2 +  \frac{O(\alpha_1^6)}{(\alpha_1)^2} }\\&+ \sum_{i=1}^{t}(\overline{d}^{*})^2 \left(L_{i,1}^2 \mathbb{E}[Y_i^2] + L_{i,2}^2 \mathbb{E}[Z_i^2] + L_{i,3}^2 \mathbb{E}[Y_i^2Z_i^2]\right) \\&+d_{t+1}^2
 \left(\left(L_{t+1,1}^2\mathbb{E}[Y_{t+1}^2] + L_{t+1,2}^2 \mathbb{E}[Z_{t+1}^2] + L_{t+1,3}^2 \mathbb{E}[Y_{t+1}^2Z_{t+1}^2]\right)\right)\\
 &= \frac{x^4}{(x^2+1)^2}+\Theta(\alpha_1) + \sum_{i=1}^{t}(\overline{d}^{*})^2 \left(L_{i,1}^2 \lambda 2^{-2M(\delta)} + L_{i,2}^2 \lambda 2^{-2M(\delta)} + L_{i,3}^2 \lambda^2 2^{-4M(\delta)}\right) \\& + (d_{t+1}^{*})^2 \left(L_{t+1,1}^2 \lambda 2^{-2M(\delta)} + L_{t+1,2}^2 \lambda 2^{-2M(\delta)}  + L_{t+1,3}^2 \lambda^2 2^{-4M(\delta)} \right)
  \end{align*}

 An analysis similar to (\ref{eq:prec-opt-dvalues1}),(\ref{eq:prec-opt-dvalues2}) implies that $|\overline{d}^*|,|d_{t+1}^*| = \Theta\left(\frac{1}{\alpha_1}\right)$. Because of our choice of $M(\delta),$ we have $2^{-2M} < \delta^3.$ Consequently, the mean square error can be expressed as:
 $$ \frac{1}{1+\texttt{SNR}_p^2} + \Theta(\alpha_1)+\Theta(\frac{\delta^3}{\alpha_1^2}).$$ 
Noting that the mean square error is $\frac{1}{1+\texttt{SNR}_p^2}+\delta$, we conclude that $\delta = \Theta(\alpha_1)$. Thus, if $M = K \ log(1/\delta)$ for any $K > 3/2,$ we can obtain an error of at most $\frac{1}{1+\texttt{SNR}_p^2}+\delta$ by choosing $\alpha_1$ sufficiently small. 

\remove{
In the final equation above, we eliminated the effect of $R_i|_{i=2}^{t},S_i|_{i=2}^{t}$ using the fact that $R_i$ is a zero mean random variable that is statistically independent of $\{R_1,\ldots,R_t,S_1\ldots,S_t,Y_1, \ldots, Y_t,Z_1 \ldots, Z_t\}-\{R_i\}$. We have used the fact that $R_i|_{i=1}^{t},S_i|_{i=1}^{t}$. Intuitively, these noise variables are independent of the desired product $AB,$ and their presence cannot improve the performance of the decoder, so by eliminating, them, we get a lower bound on the obtained mean square error. We continue our analysis from (\ref{eq:ind_node_quant}) below, by denoting $\overline{d_1} = \sum_{i=1}^{N} d_i, \overline{Y}=\sum_{i=1}^{N} d_i y_i$.

\begin{align}
 & \min_{d_i}  E[(\sum_{i=1}^{N} d_i\hat{C_i}-\Gamma_i\Theta_i)^2] \\&\geq \min_{d_1,d_2,\ldots,d_N}
 E\left[ \bigg(\overline{d_1}\big(A+R_1 (x + \alpha_1)\big(B+S_1 (x + \alpha_1)\big)+d_{t+1} \big(A+R_1 x)\big(B+S_1 x)-AB \right. \nonumber \\ &\left.+ \overline{Y}(B+S_1(x+\alpha_1) + d_{t+1} Y_{t+1}\big(B+S_1x\big)\bigg)^2\right] \nonumber \\
 & = \min_{d_1,d_2,\ldots,d_N}
\Bigg( E\left[ \bigg(\overline{d_1}\big(A+R_1 (x + \alpha_1)\big(B+S_1 (x + \alpha_1)\big)+d_{t+1} \big(A+R_1 x)\big(B+S_1 x)-AB\bigg)^2\right] \nonumber \\&+ E\left[\bigg(\overline{Y}(B+S_1(x+\alpha_1) + d_{t+1} Y_{t+1}\big(B+S_1x\big)\bigg)^2\right]  \nonumber \\&+ 2 \overline{d_1}E \left[\big(A+R_1 (x + \alpha_1)\big)\overline{Y}\right]E\left[(B+S_1(x+\alpha_1))^2\right] + 2d_{t+1}  E \left[\big(A+R_1 x)Y_{t+1}\right] E\left[\big(B+S_1x\big)^2\right]  \nonumber \\&- 2\overline{d_1}E\left[A\overline{Y}\right]E\left[B\big(B+S_1(x+\alpha_1)\big)\right] - 2 d_{t+1}E\left[A\overline{Y}\right]E\left[B(B+S_1x)\right] \Bigg) \label{eq:MSE_quant_expanded}
\end{align}

We next bound the quantities in  (\ref{eq:MSE_quant_expanded}). Below, we assume that $A$ is a unit variance Gaussian random variable. 

For $i=1,2\ldots,t$
\begin{align*}
&I(A+R_1(x+\alpha_1)+Y_i; A+R_1(x+\alpha_1) \\& =  H((A+R_1(x+\alpha_1+Y_i) - H(A+R_1(x+\alpha_1+Y_i|A+R(x+\alpha_1) 
\\&= H((A+R(x+\alpha_1+Y_i) = B
\end{align*}
The above equations imply that:
\begin{align*}h(A+R(x+\alpha_1) - h(Y_i| A+R(x+\alpha_1)  & = B\\
\Rightarrow 
 h(Y_i| A+R(x+\alpha_1+Y_i) & = h(A+R_1(x+\alpha_1) - B \\& \geq h(A+R_1(x+\alpha_1|R_1) - B = h(A) - B  =-B\\
 \Rightarrow 
 E[\left(Y_i - \beta(A+R(x+\alpha_1)+Y_i)\right)^2] &\geq \frac{1}{2\pi e} 2^{-2M}, \forall \beta \in \mathbb{R} \\
 \Rightarrow 
  E\left[Y_i(A+R(x+\alpha_1)\right] &\geq \frac{\frac{1}{2\pi e} 2^{-2M} - (1-\beta)^2 E[Y_i^2] - \beta^2 E\left[(A+R(x+\alpha_1))^2\right]}{2 \beta(1-\beta)} 
\end{align*}
where the final inequality holds for all $\beta > 0.$

We first analyze the effect of quantization error at the output of each computation node below. For $i \in \{1,2,\ldots,t\}$,

In (a) and (b), we have made repeated use of the facts that $$(Y_i,Z_i), (A,Z_i), (R_1,Z_i), \ldots, (R_t,Z_i),(B, Y_i),(S_1,Z_i), \ldots, (S_t,Z_i) $$ are tuples of independent random variables with $E[A]=E[B]=E[R_i]=E[S_i]=0$ for all $i=1,2,\ldots,t.$ In (b), we have also used the fact that $x,\alpha_1 > 0$ which implies that $(x+\alpha_1) > x > 0.$

Now we use the above bound to analyze the error at the decoder. 

\begin{align*}
&E[(\hat{C} - AB)^2] 
\\&= E\left[\left(\sum_{i=1}^{N}d_i \hat{C}_{i} - AB\right)^2\right] 
\\ &= E\left[\left(\sum_{i=1}^{N}d_i (\hat{C}_{i}-\Gamma_i\Theta_i) + \left(\sum_{i=1}^{N}d_i \Gamma_i\Theta_i - AB\right)\right)^2\right] 
\\ &\geq \sum_{i=1}^{N}
d_i^2 E\left[(\hat{C_i} - \Gamma_i\Theta_i )^2\right] +E\left[ \left(\sum_{i=1}^{N}d_i \Gamma_i\Theta_i - AB\right)^2\right]
\\& \geq \left(\sum_{i=1}^{N}
d_i^2\right)  \left(2^{-2M+2}(1+x^2) + o(2^{-2M})\right)+E\left[ \left(\sum_{i=1}^{N}d_i C_i - AB\right)^2\right]
\end{align*}
}

\section{Conclusion}
In this paper, we propose a new coding formulation that makes connections between secret coding schemes used widely in multiparty computation literature and differential privacy. An exploration of the proposed formulation leads to counter-intuitive correlation structures and noise distributions. This work opens up several open problems and research directions.

The schemes we developed come at the cost of increased precision requirements for a desired accuracy level. A similar phenomenon has also been noted recently in coded computing \cite{HaewonJeong2021, wang2021price}. An open area of research is to understand the fundamental role of quantization and precision on multi-user privacy mechanisms starting with the application of secure multiplication. Specifically, an open question is whether there are coding and quantization schemes that achieve our privacy-accuracy trade-off limits, but provide improvements in terms of precision as compared to those presented in Section \ref{sec:precision}.

In this paper, the coding schemes we constructed as well as our precision analyses are asymptotic. An important question of practical interest is the study of regimes with finite (non-asymptotic) precision. We generated the coding scheme described in \ref{sec:coding_scheme} for $t=2,3,4$ with $\textrm{SNR}_p=1$. To satisfy \eqref{eq:limits}, we set $\alpha_1= \frac{1}{n}$ and $\alpha_2 = \alpha_1 \log(\frac{1}{\alpha_1})$. The results of the simulation are given in Fig.~\ref{fig:t234_simulation}. As we expect from the theory, as $n$ grows, the gap between $1+\text{SNR}_a$ and  $(1+\text{SNR}_p)^2$ becomes smaller. However, for $t=3$ and $t=4$, there remains a gap of $\sim 4.5$ when $n=10,000$. Notice that the results of this paper do not explain this behavior; all the trade-offs presented seem blind to the choice of $t$. A theoretical explanation of this behavior and the determination of optimal choices of $\alpha_1$ and $\alpha_2$ for fixed precision is an open question. 

Our approach to embedding information at different amplitude levels bears resemblance to interference alignment coding schemes for wireless interference networks, see  \cite{Bresler_Parekh_Tse, Jafar_Vishwanath_GDOF, Huang_Cadambe_Jafar,niesen2013interference} and references therein. We wonder if there are deeper connections between the two problems, and whether there are ideas that can be borrowed from the rich literature in wireless network signaling into differentially private multiparty computation.

Finally, while we focused on the canonical computation of matrix multiplication, the long-term promise of this direction explored in this paper is the reduction of communication and infrastructural overheads for private computation of more complex functions. Incorporating our techniques into multiparty computation schemes and developing coding schemes for more complex functions—particularly functions relevant to machine learning applications—is an exciting direction of future research.

\begin{figure*}[t]
    \centering
\includegraphics[width=0.7\textwidth]{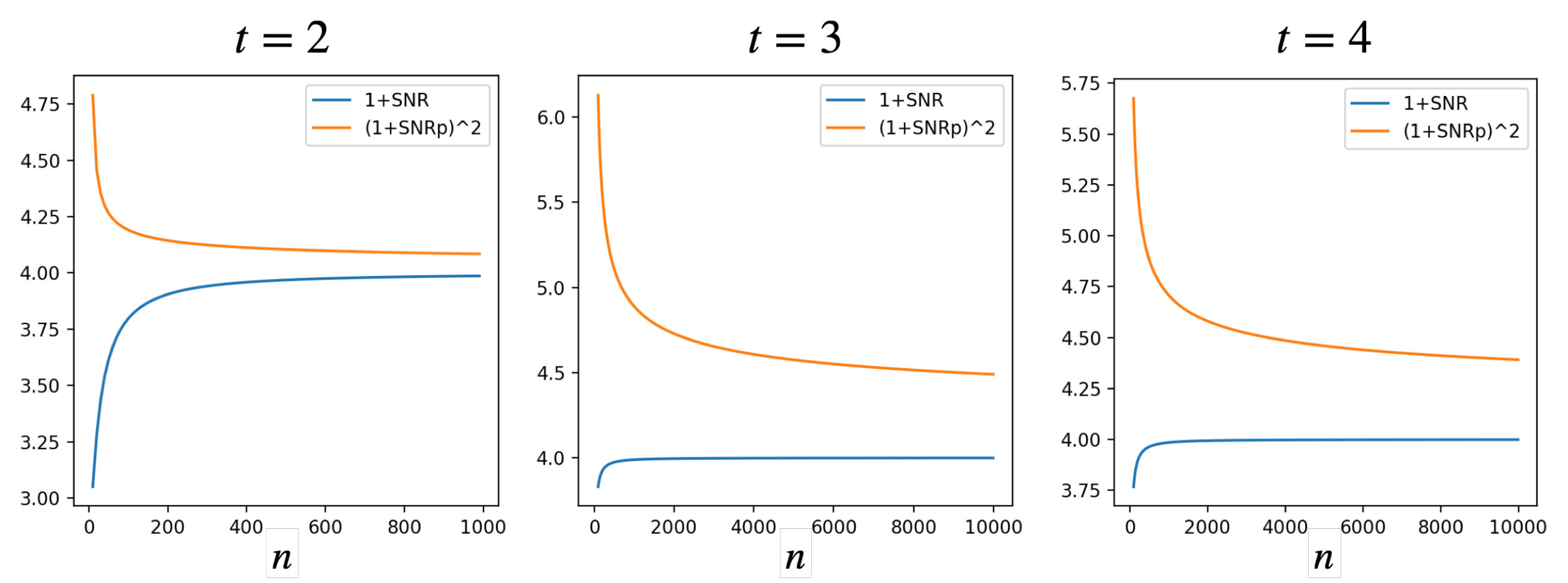}
    \caption{\small{Plotting the gap between $1+\text{SNR}_a$ and $(1+\text{SNR}_p)^2$ for the achievable scheme for $t=2,3,4$ and $N=t+1$. We vary $n$ from 10 to 10,000 and we observe that as $n$ grows the gap reduces.}}
    \label{fig:t234_simulation}
    \vspace{-7pt}
\end{figure*}

\section{Acknowledgement}
This work is partially funded by the National Science Foundation under grants CAREER 1845852, FAI 2040880,  CIF 1900750, 1763657, and 2231706.

\newpage

\bibliographystyle{IEEEtran}
\bibliography{IEEEabrv,references}

\begin{thebibliography}{10}
\providecommand{\url}[1]{#1}
\csname url@samestyle\endcsname
\providecommand{\newblock}{\relax}
\providecommand{\bibinfo}[2]{#2}
\providecommand{\BIBentrySTDinterwordspacing}{\spaceskip=0pt\relax}
\providecommand{\BIBentryALTinterwordstretchfactor}{4}
\providecommand{\BIBentryALTinterwordspacing}{\spaceskip=\fontdimen2\font plus
\BIBentryALTinterwordstretchfactor\fontdimen3\font minus
  \fontdimen4\font\relax}
\providecommand{\BIBforeignlanguage}[2]{{%
\expandafter\ifx\csname l@#1\endcsname\relax
\typeout{** WARNING: IEEEtran.bst: No hyphenation pattern has been}%
\typeout{** loaded for the language `#1'. Using the pattern for}%
\typeout{** the default language instead.}%
\else
\language=\csname l@#1\endcsname
\fi
#2}}
\providecommand{\BIBdecl}{\relax}
\BIBdecl

\bibitem{dwork2006calibrating}
C.~Dwork, F.~McSherry, K.~Nissim, and A.~Smith, ``Calibrating noise to
  sensitivity in private data analysis,'' in \emph{Theory of Cryptography:
  Third Theory of Cryptography Conference, TCC 2006, New York, NY, USA, March
  4-7, 2006. Proceedings 3}.\hskip 1em plus 0.5em minus 0.4em\relax Springer,
  2006, pp. 265--284.

\bibitem{CynthiaDworkBook2014}
\BIBentryALTinterwordspacing
C.~Dwork and A.~Roth, ``The algorithmic foundations of differential privacy,''
  \emph{Foundations and Trends® in Theoretical Computer Science}, vol.~9, no.
  3–4, pp. 211--407, 2014. [Online]. Available:
  \url{http://dx.doi.org/10.1561/0400000042}
\BIBentrySTDinterwordspacing

\bibitem{BGW1988}
\BIBentryALTinterwordspacing
M.~Ben-Or, S.~Goldwasser, and A.~Wigderson, ``Completeness theorems for
  non-cryptographic fault-tolerant distributed computation,'' in
  \emph{Proceedings of the Twentieth Annual ACM Symposium on Theory of
  Computing}, ser. STOC '88.\hskip 1em plus 0.5em minus 0.4em\relax New York,
  NY, USA: Association for Computing Machinery, 1988, p. 1–10. [Online].
  Available: \url{https://doi.org/10.1145/62212.62213}
\BIBentrySTDinterwordspacing

\bibitem{evans2017pragmatic}
D.~Evans, V.~Kolesnikov, and M.~Rosulek, ``A pragmatic introduction to secure
  multi-party computation,'' \emph{Foundations and Trends{\textregistered} in
  Privacy and Security}, vol.~2, no. 2-3, 2017.

\bibitem{shamir1979share}
A.~Shamir, ``{How to share a secret},'' \emph{Communications of the ACM},
  vol.~22, no.~11, pp. 612--613, 1979.

\bibitem{yu2019lagrange}
Q.~Yu, S.~Li, N.~Raviv, S.~M.~M. Kalan, M.~Soltanolkotabi, and S.~A.
  Avestimehr, ``{Lagrange coded computing: Optimal design for resiliency,
  security, and privacy},'' in \emph{The 22nd International Conference on
  Artificial Intelligence and Statistics}.\hskip 1em plus 0.5em minus
  0.4em\relax PMLR, 2019, pp. 1215--1225.

\bibitem{dwork2019differential}
C.~Dwork, N.~Kohli, and D.~Mulligan, ``Differential privacy in practice: Expose
  your epsilons!'' \emph{Journal of Privacy and Confidentiality}, vol.~9,
  no.~2, 2019.

\bibitem{mcmahan2018general}
H.~B. McMahan, G.~Andrew, U.~Erlingsson, S.~Chien, I.~Mironov, N.~Papernot, and
  P.~Kairouz, ``A general approach to adding differential privacy to iterative
  training procedures,'' \emph{arXiv preprint arXiv:1812.06210}, 2018.

\bibitem{TensorflowPrivacy}
\BIBentryALTinterwordspacing
``Tensorflow privacy.'' [Online]. Available:
  \url{https://www.tensorflow.org/responsible_ai/privacy/guide}
\BIBentrySTDinterwordspacing

\bibitem{agarwal2017price}
N.~Agarwal and K.~Singh, ``The price of differential privacy for online
  learning,'' in \emph{International Conference on Machine Learning}.\hskip 1em
  plus 0.5em minus 0.4em\relax PMLR, 2017, pp. 32--40.

\bibitem{duchi2013local}
J.~C. Duchi, M.~I. Jordan, and M.~J. Wainwright, ``Local privacy and
  statistical minimax rates,'' in \emph{2013 IEEE 54th Annual Symposium on
  Foundations of Computer Science}.\hskip 1em plus 0.5em minus 0.4em\relax
  IEEE, 2013, pp. 429--438.

\bibitem{asoodeh2021local}
S.~Asoodeh, M.~Aliakbarpour, and F.~P. Calmon, ``Local differential privacy is
  equivalent to contraction of $ e_\gamma $-divergence,'' \emph{arXiv preprint
  arXiv:2102.01258}, 2021.

\bibitem{soleymani2021analog}
M.~Soleymani, H.~Mahdavifar, and A.~S. Avestimehr, ``{Analog privacy-preserving
  coded computing},'' in \emph{2021 IEEE International Symposium on Information
  Theory (ISIT)}.\hskip 1em plus 0.5em minus 0.4em\relax IEEE, 2021, pp.
  1865--1870.

\bibitem{liu2023analog}
H.-P. Liu, M.~Soleymani, and H.~Mahdavifar, ``{Analog Multi-Party Computing:
  Locally Differential Private Protocols for Collaborative Computations},''
  \emph{arXiv preprint arXiv:2308.12544}, 2023.

\bibitem{liu2023differentially}
------, ``Differentially private coded computing,'' in \emph{2023 IEEE
  International Symposium on Information Theory (ISIT)}.\hskip 1em plus 0.5em
  minus 0.4em\relax IEEE, 2023, pp. 2189--2194.

\bibitem{Pramod2015}
\BIBentryALTinterwordspacing
Q.~Geng and P.~Viswanath, ``The optimal noise-adding mechanism in differential
  privacy,'' \emph{IEEE Transactions on Information Theory}, vol.~62, no.~2,
  pp. 925--951, 2016. [Online]. Available:
  \url{https://doi.org/10.1109/TIT.2015.2504967}
\BIBentrySTDinterwordspacing

\bibitem{devulapalli2022differentially}
A.~Devulapalli, V.~R. Cadambe, F.~P. Calmon, and H.~Jeong, ``Differentially
  private distributed matrix multiplication: Fundamental accuracy-privacy
  trade-off limits,'' in \emph{2022 IEEE International Symposium on Information
  Theory (ISIT)}.\hskip 1em plus 0.5em minus 0.4em\relax IEEE, 2022, pp.
  2016--2021.

\bibitem{stevens2022efficient}
T.~Stevens, C.~Skalka, C.~Vincent, J.~Ring, S.~Clark, and J.~Near, ``Efficient
  differentially private secure aggregation for federated learning via hardness
  of learning with errors,'' in \emph{31st USENIX Security Symposium (USENIX
  Security 22)}, 2022, pp. 1379--1395.

\bibitem{Yuan2021}
\BIBentryALTinterwordspacing
S.~Yuan, M.~Shen, I.~Mironov, and A.~C.~A. Nascimento, ``Practical, label
  private deep learning training based on secure multiparty computation and
  differential privacy,'' Cryptology ePrint Archive, Report 2021/835, 2021.
  [Online]. Available: \url{https://ia.cr/2021/835}
\BIBentrySTDinterwordspacing

\bibitem{Pettai2015}
\BIBentryALTinterwordspacing
M.~Pettai and P.~Laud, ``Combining differential privacy and secure multiparty
  computation,'' in \emph{Proceedings of the 31st Annual Computer Security
  Applications Conference}, ser. ACSAC 2015.\hskip 1em plus 0.5em minus
  0.4em\relax New York, NY, USA: Association for Computing Machinery, 2015, p.
  421–430. [Online]. Available: \url{https://doi.org/10.1145/2818000.2818027}
\BIBentrySTDinterwordspacing

\bibitem{HeXi2017}
\BIBentryALTinterwordspacing
X.~He, A.~Machanavajjhala, C.~Flynn, and D.~Srivastava, ``Composing
  differential privacy and secure computation: A case study on scaling private
  record linkage,'' in \emph{Proceedings of the 2017 ACM SIGSAC Conference on
  Computer and Communications Security}, ser. CCS '17.\hskip 1em plus 0.5em
  minus 0.4em\relax New York, NY, USA: Association for Computing Machinery,
  2017, p. 1389–1406. [Online]. Available:
  \url{https://doi.org/10.1145/3133956.3134030}
\BIBentrySTDinterwordspacing

\bibitem{JonasBohler2020}
\BIBentryALTinterwordspacing
J.~B{\"o}hler and F.~Kerschbaum, ``Secure multi-party computation of
  differentially private median,'' in \emph{29th USENIX Security Symposium
  (USENIX Security 20)}.\hskip 1em plus 0.5em minus 0.4em\relax USENIX
  Association, Aug. 2020, pp. 2147--2164. [Online]. Available:
  \url{https://www.usenix.org/conference/usenixsecurity20/presentation/boehler}
\BIBentrySTDinterwordspacing

\bibitem{Chowdhury_crypte}
\BIBentryALTinterwordspacing
A.~Roy~Chowdhury, C.~Wang, X.~He, A.~Machanavajjhala, and S.~Jha,
  ``Crypt$\epsilon$: Crypto-assisted differential privacy on untrusted
  servers,'' in \emph{Proceedings of the 2020 ACM SIGMOD International
  Conference on Management of Data}, ser. SIGMOD '20.\hskip 1em plus 0.5em
  minus 0.4em\relax New York, NY, USA: Association for Computing Machinery,
  2020, p. 603–619. [Online]. Available:
  \url{https://doi.org/10.1145/3318464.3380596}
\BIBentrySTDinterwordspacing

\bibitem{talwar2022differential}
K.~Talwar, ``Differential secrecy for distributed data and applications to
  robust differentially secure vector summation,'' \emph{arXiv preprint
  arXiv:2202.10618}, 2022.

\bibitem{PeterKairouz2015}
P.~Kairouz, S.~Oh, and P.~Viswanath, ``Secure multi-party differential
  privacy,'' in \emph{Advances in Neural Information Processing Systems},
  vol.~28, 2015.

\bibitem{pmlr-v89-yu19b}
\BIBentryALTinterwordspacing
Q.~Yu, S.~Li, N.~Raviv, S.~M.~M. Kalan, M.~Soltanolkotabi, and S.~A.
  Avestimehr, ``Lagrange coded computing: Optimal design for resiliency,
  security, and privacy,'' in \emph{Proceedings of the Twenty-Second
  International Conference on Artificial Intelligence and Statistics}, vol.~89,
  2019, pp. 1215--1225. [Online]. Available:
  \url{https://proceedings.mlr.press/v89/yu19b.html}
\BIBentrySTDinterwordspacing

\bibitem{MahdiSoleymani2021}
\BIBentryALTinterwordspacing
M.~Soleymani, H.~Mahdavifar, and A.~S. Avestimehr, ``Analog lagrange coded
  computing,'' \emph{IEEE Journal on Selected Areas in Information Theory},
  vol.~2, no.~1, pp. 283--295, 2021. [Online]. Available:
  \url{https://doi.org/10.1109/JSAIT.2021.3056377}
\BIBentrySTDinterwordspacing

\bibitem{Akbari-Nodehi2021}
H.~Akbari-Nodehi and M.~A. Maddah-Ali, ``Secure coded multi-party computation
  for massive matrix operations,'' \emph{IEEE Transactions on Information
  Theory}, vol.~67, no.~4, pp. 2379--2398, 2021.

\bibitem{ChangWei-Ting2018}
W.-T. Chang and R.~Tandon, ``On the capacity of secure distributed matrix
  multiplication,'' in \emph{2018 IEEE Global Communications Conference
  (GLOBECOM)}, 2018, pp. 1--6.

\bibitem{Rouayheb_GASP}
R.~G.~L. D’Oliveira, S.~El~Rouayheb, and D.~Karpuk, ``Gasp codes for secure
  distributed matrix multiplication,'' \emph{IEEE Transactions on Information
  Theory}, vol.~66, no.~7, pp. 4038--4050, 2020.

\bibitem{jia2021capacity}
Z.~Jia and S.~A. Jafar, ``On the capacity of secure distributed batch matrix
  multiplication,'' \emph{IEEE Transactions on Information Theory}, vol.~67,
  no.~11, pp. 7420--7437, 2021.

\bibitem{chen2021gcsa}
Z.~Chen, Z.~Jia, Z.~Wang, and S.~A. Jafar, ``Gcsa codes with noise alignment
  for secure coded multi-party batch matrix multiplication,'' \emph{IEEE
  Journal on Selected Areas in Information Theory}, vol.~2, no.~1, pp.
  306--316, 2021.

\bibitem{yu2018coding}
Q.~Yu, N.~Raviv, and A.~S. Avestimehr, ``Coding for private and secure
  multiparty computing,'' in \emph{2018 IEEE Information Theory Workshop
  (ITW)}.\hskip 1em plus 0.5em minus 0.4em\relax IEEE, 2018, pp. 1--5.

\bibitem{fahim2019numerically}
M.~Fahim and V.~R. Cadambe, ``Numerically stable polynomially coded
  computing,'' \emph{IEEE Transactions on Information Theory}, vol.~67, no.~5,
  pp. 2758--2785, 2021.

\bibitem{GengQuan2020}
\BIBentryALTinterwordspacing
Q.~Geng, W.~Ding, R.~Guo, and S.~Kumar, ``Tight analysis of privacy and utility
  tradeoff in approximate differential privacy,'' in \emph{Proceedings of the
  Twenty Third International Conference on Artificial Intelligence and
  Statistics}, vol. 108.\hskip 1em plus 0.5em minus 0.4em\relax PMLR, 26--28
  Aug 2020, pp. 89--99. [Online]. Available:
  \url{https://proceedings.mlr.press/v108/geng20a.html}
\BIBentrySTDinterwordspacing

\bibitem{kurihara2012secret}
J.~Kurihara, T.~Uyematsu, and R.~Matsumoto, ``Secret sharing schemes based on
  linear codes can be precisely characterized by the relative generalized
  hamming weight,'' \emph{IEICE Transactions on Fundamentals of Electronics,
  Communications and Computer Sciences}, vol.~95, no.~11, pp. 2067--2075, 2012.

\bibitem{wei1991generalized}
V.~K. Wei, ``Generalized hamming weights for linear codes,'' \emph{IEEE
  Transactions on information theory}, vol.~37, no.~5, pp. 1412--1418, 1991.

\bibitem{PeterKairouz2016}
\BIBentryALTinterwordspacing
P.~Kairouz, S.~Oh, and P.~Viswanath, ``Differentially private multi-party
  computation,'' in \emph{2016 Annual Conference on Information Science and
  Systems (CISS)}, 2016, pp. 128--132. [Online]. Available:
  \url{https://doi.org/10.1109/CISS.2016.7460489}
\BIBentrySTDinterwordspacing

\bibitem{catrina2010secure}
O.~Catrina and A.~Saxena, ``Secure computation with fixed-point numbers,'' in
  \emph{Financial Cryptography and Data Security: 14th International
  Conference, FC 2010, Tenerife, Canary Islands, January 25-28, 2010, Revised
  Selected Papers 14}.\hskip 1em plus 0.5em minus 0.4em\relax Springer, 2010,
  pp. 35--50.

\bibitem{catrina2009multiparty}
O.~Catrina and C.~Dragulin, ``Multiparty computation of fixed-point
  multiplication and reciprocal,'' in \emph{2009 20th International Workshop on
  Database and Expert Systems Application}.\hskip 1em plus 0.5em minus
  0.4em\relax IEEE, 2009, pp. 107--111.

\bibitem{zamir2014lattice}
R.~Zamir, \emph{Lattice Coding for Signals and Networks: A Structured Coding
  Approach to Quantization, Modulation, and Multiuser Information
  Theory}.\hskip 1em plus 0.5em minus 0.4em\relax Cambridge University Press,
  2014.

\bibitem{HaewonJeong2021}
\BIBentryALTinterwordspacing
H.~Jeong, A.~Devulapalli, V.~R. Cadambe, and F.~P. Calmon,
  ``$\epsilon$-approximate coded matrix multiplication is nearly twice as
  efficient as exact multiplication,'' \emph{IEEE Journal on Selected Areas in
  Information Theory}, vol.~2, no.~3, pp. 845--854, 2021. [Online]. Available:
  \url{https://doi.org/10.1109/JSAIT.2021.3099811}
\BIBentrySTDinterwordspacing

\bibitem{wang2021price}
J.~Wang, Z.~Jia, and S.~A. Jafar, ``Price of precision in coded distributed
  matrix multiplication: A dimensional analysis,'' in \emph{2021 IEEE
  Information Theory Workshop (ITW)}.\hskip 1em plus 0.5em minus 0.4em\relax
  IEEE, 2021, pp. 1--6.

\bibitem{Bresler_Parekh_Tse}
G.~Bresler, A.~Parekh, and N.~David, ``The approximate capacity of the
  many-to-one and one-to-many gaussian interference channels,'' \emph{IEEE
  Transactions on Information Theory}, vol.~56, no.~9, pp. 4566--4592, 2010.

\bibitem{Jafar_Vishwanath_GDOF}
S.~A. Jafar and S.~Vishwanath, ``Generalized degrees of freedom of the
  symmetric gaussian $ k $ user interference channel,'' \emph{IEEE Transactions
  on Information Theory}, vol.~56, no.~7, pp. 3297--3303, 2010.

\bibitem{Huang_Cadambe_Jafar}
C.~Huang, V.~R. Cadambe, and S.~A. Jafar, ``{Interference alignment and the
  generalized degrees of freedom of the {\$} X {\$} channel},'' \emph{IEEE
  Transactions on Information Theory}, vol.~58, no.~8, pp. 5130--5150, 2012.

\bibitem{niesen2013interference}
U.~Niesen and M.~A. Maddah-Ali, ``Interference alignment: From degrees of
  freedom to constant-gap capacity approximations,'' \emph{IEEE Transactions on
  Information Theory}, vol.~59, no.~8, pp. 4855--4888, 2013.

\end{thebibliography}

\newpage 
\appendices
\vc{\section{Details of Baseline Scheme Performance of Fig. \ref{fig:code_construction}}
\label{app:SS_staircase_lower}
In this appendix, we provide missing details for the baseline schemes described in Sec. \ref{sec:summary} and used in Fig. \ref{fig:code_construction}. We use the model and notation introduced in Sections \ref{sec:model} and \ref{sec:SNR} in this section. Recall that Fig. \ref{fig:code_construction} considered the case of $N=3$ nodes and $t=2$ colluding adversaries, and assumed that $A,B$ are unit-variance random variables, that is, $\eta=1$ as per the model of Sec. \ref{sec:model}.

\subsection*{Baseline 1: Complex-valued Shamir Secret Sharing}
We consider the scheme outlined in Sec. \ref{sec:summary}, specifically, that:
\begin{eqnarray}\tilde{A}_i &=& A + x_iR_1 + x_i^2 R_2 \label{eq:ss1repeat} \\
\tilde{B}_i &=& B + x_iS_1 + x_i^2 S_2, \label{eq:ss2repeat}\end{eqnarray}
where $x_i = e^{j \pi(i-1)/3},$ where $j = \sqrt{-1}.$ Our goal is to find a lower bound on the DP parameter $\epsilon$ defined in Definition \ref{def:edp} irrespective of the marginal distributions of the independent, identically distributed complex-valued variables $R_1,R_2,S_1,S_2.$ Assume that $E[|R_1|^2] = \sigma_{n}^2$. Let $\epsilon_{min}$ denote the infimum among all possible complex-valued distributions $\mathbb{P}_{R_1}, \mathbb{P}_{R_2}, \mathbb{P}_{S_1}, \mathbb{P}_{S_2}$ with variance $\sigma_n^2$, among all possible parameters $\overline{\epsilon}$ such that the coding scheme achieves $2$-node $\overline{\epsilon}$-DP. Consider any coding scheme $\mathcal{C}$ consistent with \eqref{eq:ss1repeat},\eqref{eq:ss2repeat}. From Lemma\footnote{The Lemma also holds for complex-valued random variables $X,X', Z_1, Z_2,\ldots Z_n$, allowing $\vec{w}$ to be complex with $\mathbb{E}[|X|^2] =\gamma^2, \mathbb{E}[|X'|^2] \leq \gamma^2$, and with complex-valued Hermitian covariance matrices, for e.g., $i,j$-th entry of $\mathbf{K}_1$ being $\mathbb{E}[y_iy_j^*].$} \ref{lem:theSNRlemma}, we know that there exist $i,j \in \{1,2,3\}$ and co-efficients $d_i,d_j$ such that:
$$ d_i \tilde{A}_i + d_j \tilde{A}_{j} = A+ Z $$
where $Z$ is a random variable with 
$ E[|Z|^2] = \frac{1}{\texttt{SNR}},$ where $ \texttt{SNR} = \frac{\det(\mathbf{K}_1)}{\det(\mathbf{K}_2)} - 1,$  $$\mathbf{K}_{1} = \begin{bmatrix}1+|x_i|^2 \sigma_n^2 + |x_i^4|\sigma_n^2 & 1+x_i x_j^* \sigma_n^2 + (x_i x_j^*)^2 \sigma_n^2 \\ 1+x_i^* x_j \sigma_n^2 + (x_i^* x_j)^2 \sigma_n^2  & 1+|x_j|^2 \sigma_n^2 + |x_j^4|\sigma_n^2\end{bmatrix}$$ is the covariance matrix of $(\tilde{A}_{1},\tilde{A}_{2})$ and $$\mathbf{K}_2 = \sigma_n^2 \begin{bmatrix}|x_i|^2 + |x_i^4| & x_i x_j^* + (x_i x_j^*)^2 \\ x_i^* x_j  + (x_i^* x_j)^2   & |x_j|^2 + |x_j^4|\end{bmatrix}$$ is the covariance matrix of $(x_i R_1 + x_i^2 R_2, x_j R_1 + x_j^2 R_2).$ By the post-processing property of differential privacy, we have:
$${\epsilon}_{min} \geq \overline{\epsilon} \stackrel{\Delta}{=} \sup_{\mathcal{A}, A_0,A_1, |A_0-A_1| \leq 1}\frac{\mathbb{P}(A_0 + Z \in \mathcal{A})}{\mathbb{P}(A_1 + Z \in \mathcal{A})}.$$
That is ${\epsilon}_{min}$ is at least as large as the DP guarantee provided by the additive noise $Z$. Because $E[|Z|^2] =  \frac{1}{\texttt{SNR}},$ we have:
$$ \overline{\epsilon} \geq \inf\sup_{\mathcal{A}, A_0,A_1, |A_0-A_1| \leq 1}\frac{\mathbb{P}(A_0 + \overline{Z} \in \mathcal{A})}{\mathbb{P}(A_1 + \overline{Z} \in \mathcal{A})},$$
where the infimum is over all random variables $\overline{Z}$ with variance $\frac{1}{\texttt{SNR}}.$ Note that $\overline{\epsilon}$ is characterized in \cite{Pramod2015} as the solution to the relation:
$$  \frac{1}{\texttt{SNR}} = (\sigma^{*}(\epsilon))^2, $$
where $\sigma^{*}$ is provided in \eqref{eq:optimalsigma}, with $\overline{Z}$ having the staircase distribution (also described in \cite{Pramod2015}). Figure \ref{fig:dpcomp} numerically computes and plots $\overline{\epsilon},$ which is the strongest possible differential privacy parameter attainable with additive noise of variance $\frac{1}{\texttt{SNR}}.$

\subsection*{Baseline 2: $\tilde{R}_i,\tilde{S}_{i},i=1,2,3$ are i.i.d. satisfying $\frac{\epsilon}{2}$-DP}
In the second baseline, we draw $\tilde{R}_i, \tilde{S}_i,i=1,2,,3$ independently and with the same distribution $\mathbb{P}_{\tilde{R}_1}.$ We aim to characterize the optimal privacy-accuracy trade-off for this family of schemes. As noted in Sec. \ref{sec:model}, we assume without loss of generality that $\mathbb{E}\left[\tilde{R}_i\right]=\mathbb{E}\left[\tilde{S}_i\right]=0$ for $i=1,2,3.$. For a fixed distribution $\mathbb{P}_{\tilde{R}_1},$ denote the mean squared error as $\texttt{MSE}(\mathbb{P}_{\tilde{R}_1}),$ that is:
$$ \mathbb{E}[(AB-\widetilde{C})^2] = \inf_{d_1,d_2,d_3}\mathbb{E}\left[\left(AB-\sum_{i=1}^{3}d_i(A+\tilde{R}_i)(B+\tilde{S}_i)\right)^2\right].$$ Observe crucially that because of the independence of $A,B,\tilde{R}_i|_{i=1}^{3},\tilde{S}_i|_{i=1}^{3},$ and the fact that $\mathbb{E}[\tilde{R}_i]=\mathbb{E}[\tilde{S}_i]=0,$ the right hand side of the above equation is exclusively a function of the variance $\mathbb{E}[\tilde{R}_1^2].$ Denote $\sigma^2 = \mathbb{E}[\tilde{R}_1^2]$. We show next that the staircase mechanism of \cite{Pramod2015} with variance $\sigma^2$ provides a privacy guarantee that is at least as strong as that provided by $\mathbb{P}_{\tilde{R}_1}.$ This proves that the setting $\mathbb{P}_{\tilde{R}_1}$ to be the staircase mechanism achieves the optimal privacy-accuracy trade-off among the family of schemes that selects $\tilde{R}_i,\tilde{S}_{i},i=1,2,3$ in an i.i.d. manner.

Let 
$$\sup_{\mathcal{A}, A_0,A_1, |A_0-A_1| \leq 1}\frac{\mathbb{P}(A_0 + \tilde{R}_1 \in \mathcal{A})}{\mathbb{P}(A_1 + \tilde{R}_1 \in \mathcal{A})} = \tilde{\epsilon},$$
that is, $A+\tilde{R}_1$ satisfies $\tilde{\epsilon}$-DP. Because the inputs to any pair of nodes $(\tilde{A}_i,\tilde{A}_j)$ satisfies DP with parameter $2 \tilde{\epsilon}$, the secure coding scheme that sets $\tilde{R}_i, \tilde{S}_i,i=1,2,,3$ independently and with the same distribution $\mathbb{P}_{\tilde{R}_1}$ satisfies $2$-node $2 \tilde{\epsilon}$-DP.

Let  
$$\arg\inf_{\mathbb{Q}_{\tilde{R}_1}: \mathbb{E}_{\mathbb{Q}}[\tilde{R}_1] = 0, \mathbb{E}_{\mathbb{Q}}[\tilde{R}_1^2] = \sigma^2}\sup_{\mathcal{A}, A_0,A_1, |A_0-A_1| \leq 1}\frac{\mathbb{Q}(A_0 + \tilde{R}_1 \in \mathcal{A})}{\mathbb{Q}(A_1 + \tilde{R}_1 \in \mathcal{A})} = \mathbb{P}^{*}_{\tilde{R}_1},$$
that is, among all zero-mean distributions with variance $\sigma^2,$  $\mathbb{P}_{\tilde{R}_{1}}^{*}$ provides the strongest possible DP guarantee. As described previously in this section, $\mathbb{P}^{*}$ has been characterized to be the staircase mechanism in \cite{Pramod2015}. Let $\epsilon^{*}$ denote the DP guarantee provided by $\mathbb{P}^{*}$; note that $\epsilon^{*} \leq \tilde{\epsilon}.$  Then, $\tilde{R}_i, \tilde{S}_i,i=1,2,,3$ independently and with the same distribution $\mathbb{P}^*_{\tilde{R}_1}$ achieves $2\epsilon^{*}$-DP, which is at least as strong (in terms of privacy) as $\mathbb{P}_{\mathbb{R}_1}.$ Further, because $\mathbb{E}_{\mathbb{P}^*}[\tilde{R}_1^2] = \sigma^2,$ and because $\texttt{MSE}(\mathbb{P}^{*}_{\tilde{R}_1})$ is a function only of the variance $\mathbb{E}[(R_1^{*})^2]$, we infer that $\texttt{MSE}(\mathbb{P}^{*}_{\tilde{R}_1}) = \texttt{MSE}(\mathbb{P}_{\tilde{R}_1}).$  

}

\vc{
\section{Proof of Lemma \ref{lem:theSNRlemma}}
\label{app:SNRlemmaproof}
Let $\vec{w}^{*} = \arg \min_{\vec{w}}  \mathbb{E} \left[ | \vec{w} \cdot \vec{y} - X |^2 \right] $. 
\begin{align}
    f(\vec{w}) \triangleq \mathbb{E} \left[ | \vec{w} \cdot \vec{y} - X |^2 \right]  &= \mathbb{E}[(\vec{w} \cdot \vec{y})^2] - 2 \vec{w}^T \mathbb{E}[X \vec{y}] + \mathbb{E}[X^2] \\ 
    &= \vec{w}^T \mathbb{E}[\vec{y} \vec{y}^T] \vec{w} - 2 \vec{w}^T \mathbb{E}[X \vec{y}] + \mathbb{E}[X^2]
\end{align}
\begin{align}
    \frac{\partial f}{ \partial \vec{w}} = 2\mathbf{K_1} \vec{w} - 2 \mathbb{E}[X\vec{y}] = 0. 
\end{align}
Hence, 
\begin{align}
    \vec{w}^{*} =\mathbf{K}_1^{-1} \mathbb{E}[X\vec{y}].
\end{align}
\begin{align}
    f(\vec{w}^{*}) &= \mathbb{E}[X\vec{y}]^T \mathbf{K}_1^{-1} \mathbb{E}[X\vec{y}] - 2 \mathbb{E}[X\vec{y}]^T \mathbf{K}_1^{-1} \mathbb{E}[X\vec{y}] + \mathbb{E}[X^2] \label{eq:00} \\ 
    &= \mathbb{E}[X^2] - \mathbb{E}[X\vec{y}]^T\mathbf{K}_1^{-1} \mathbb{E}[X\vec{y}]
\end{align}
Since $X$ and $Z_i$'s are independent and $\mathbb{E}[X]=0$, $\mathbb{E}[Z_i X] = \mathbb{E}[Z_i] \mathbb{E}[X] = 0$. Thus we have:
\begin{align}
    \mathbb{E}[X \vec{y}] = \mathbb{E} \left[ \begin{bmatrix}
    \nu_1 X^2 + Z_1 X \\ \vdots \\ \nu_n X^2 + Z_n X
\end{bmatrix}  \right]  = \mathbb{E}[X^2] \begin{bmatrix}
    \nu_1 \\ \vdots \\ \nu_n
\end{bmatrix} = \mathbb{E}[X^2] \vec{\nu}.
\end{align}
Plugging this in, we get:
\begin{align}
    f(\vec{w}^{*}) &= \mathbb{E}[X^2] - \mathbb{E}[X^2]\gamma^2 \vec{\nu}^T\mathbf{K}_1^{-1} \vec{\nu} \label{eq:000}\\ 
    &= \mathbb{E}[X^2] ( 1- \gamma^2 \vec{\nu}^T \mathbf{K_1}^{-1} \vec{\nu}) \\ 
    &= \text{det} (  \mathbb{E}[X^2] ( 1- \gamma^2 \vec{\nu}^T \mathbf{K}_1^{-1} \vec{\nu}) ) \\ 
    &= \mathbb{E}[X^2] \cdot \text{det}(1) \cdot \text{det}(\mathbf{K}_1^{-1}) \cdot \text{det} (\mathbf{K}_1 - \gamma^2 \vec{\nu} \vec{\nu}^T) \label{eq:det_ex}\\ 
    &= \mathbb{E}[X^2] \frac{\text{det}(\mathbf{K}_2)}{\text{det}(\mathbf{K}_1)} \label{eq:11} \\
    &= \frac{\gamma^2}{1+\texttt{SNR}},
\end{align}
where \eqref{eq:det_ex} follows from the identity: 
\begin{equation}
    \text{det}(\mathbf{D} + \mathbf{ABC}) = \text{det}(\mathbf{D}) \text{det}(\mathbf{B}) \text{det} (\mathbf{B}^{-1} + \mathbf{C} \mathbf{D}^{-1} \mathbf{A}).
\end{equation}

For the case where we have $\mathbb{E}[X'^2] \leq \gamma^2,$ and $\vec{y}' = \begin{bmatrix} \nu_1 X'+Z_1\\ \vdots \\ \nu_n X'+Z_n, \end{bmatrix}$ 
we can write,
\begin{align*}
    \mathbb{E}[|\vec{w}^{*} \cdot \vec{y}' - X'|^2] &= \mathbb{E}[(\vec{w}^{*} \cdot \vec{\nu}-1)X' + (\vec{w}^{*} \cdot \vec{Z})^2  ] \\
    &= \mathbb{E}[X'^2] (\vec{w}^{*} \cdot \vec{\nu}-1)^2 +\mathbb{E}[ (\vec{w}^{*} \cdot \vec{Z})^2 ],
\end{align*}
where $\vec{Z} = \begin{bmatrix}Z_1 \\ Z_2 \\ \vdots \\ Z_n\end{bmatrix}.$
The above expression is clearly an increasing function of $\mathbb{E}[X'^2]$, and therefore is upper bounded by the mean squared error for the case where $\mathbb{E}[X'^2] = \gamma^2.$ Therefore         
 $$ \mathbb{E}[|\vec{w}^{*} \cdot \vec{y}' - X'|^2] \leq \frac{\gamma^2}{1+\texttt{SNR}.}$$
 }

\section{Finite precision analysis of real-valued Shamir secret sharing in Remark \ref{eq:remark5}}
\label{app:BGWprecision}
We consider the real valued Shamir Secret Sharing scheme where node $i$ gets - in a system with perfect precision - $\Gamma_i = p_A(x_i),\Theta_i = p_B(x_i)$ for $i=1,2,\ldots,2t+1$, where:
$$p_{A}(x) = A+R_1x+R_2x^2+\ldots+R_{t} x^t$$
$$p_{B}(x) = B+S_1x+S_2x^2+\ldots+S_{t} x^t$$
and $x_1,x_2,\ldots,x_{2t+1}$ are distinct \vc{real} scalars. \vc{ As described in Remark \ref{eq:remark5}, the DP parameter $\epsilon$ can be driven as close to $0$ as we wish by making $x_1,x_2,\ldots,x_{2t+1}$ arbitrarily large.} 
In a system with perfect precision, node $i$ outputs $\Gamma_i\Theta_i$, and the decoder obtains $\widehat{C}=AB$ with perfect accuracy as a linear combination:
\begin{equation}
    AB = \sum_{i=1}^{2t+1} d_i \Gamma_i \Theta_i
    \label{eq:BGWdecoding}
\end{equation}

Similar to Section \ref{sec:precision}, we assume a system with finite precision \vc{ where node $i$ receives quantized values $\hat{\Gamma}_i,\hat{\Theta}_i \in \mathbb{R}$ and can compute the product $\hat{\Gamma}_i \hat{\Theta}_i$ perfectly}. As in Section \ref{sec:precision}, we assume
$$ \hat{\Gamma}_i = p_A(x_i) + Y_i$$
$$ \hat{\Theta}_i = p_B(x_i) + Z_i$$
where $Y_i$ is a random variable that is independent of $\Gamma_i|_{i=1}^{2t+1}$,$\Theta_i|_{i=1}^{2t+1}$, $Z_{i}|_{i=1}^{2t+1}$, $\{Y_j: j \in \{1,2,\ldots,2t+1\}-\{i\}\}.$ Similarly $Z_i$ is a random variable that is independent of $\Gamma_i|_{i=1}^{2t+1}$,$\Theta_i|_{i=1}^{2t+1}$, $Y_{i}|_{i=1}^{2t+1}$, $\{Z_j: j \in \{1,2,\ldots,2t+1\}-\{i\}\}.$ Note that this independence property is achieved via dithered lattice quantizers as in Section \ref{sec:precision}. Node $i$ is assumed to output $\hat{\Gamma}_i\hat{\Theta}_i$ perfectly. Similarly to the reasoning in Section \ref{sec:precision}, we get $\mathbb{E}[Y_i^2],\mathbb{E}[Z_i^2] = \Omega(2^{-2M}),$ where $M$ is the number of bits of precision at each node.   
We assume that the decoder obtains:
$$ \hat{C} = \sum_{i=1}^{2t+1} d_i \hat{\Gamma}_i \hat{\Theta}_i$$
where co-efficients $d_i$ are as in (\ref{eq:BGWdecoding}). We show next that choosing $M(\delta) = K \log \left(\frac{1}{\delta}\right)$ for any $K>0.5$ suffices to ensure that $\mathbb{E}[(AB-\hat{C})^2] \leq \delta$ for sufficiently small delta.
\begin{align*}
    &\mathbb{E}[(AB-\hat{C})^2]
    \\&= \mathbb{E}[(AB-\sum_{i=1}^{2t+1} d_i \hat{\Gamma}_i \hat{\Theta}_i)^2]
  \\& =   \mathbb{E}[(AB-\sum_{i=1}^{2t+1} d_i (\Gamma_i+Y_i) (\Theta_i+Z_i))^2]
  \\& = \mathbb{E}[(AB-\sum_{i=1}^{2t+1} d_i \Gamma_i\Theta_i)^2]+\Theta(2^{-2M(\delta)})
  \\&= 0 + \Theta(2^{-2M(\delta)})
\end{align*}
Clearly, if $M(\delta) = K \log \left(\frac{1}{\delta}\right)$ for any $K > 0.5,$ we have, for sufficiently small $\delta,$ $\mathbb{E}[(AB-\hat{C})^2] \leq \delta$ as required. 


\newpage

\end{document}